\newcommand{\revision}[1]{{\textcolor{black} {#1}}}
\theoremstyle{thmstyleone}%
\newtheorem{theorem}{Theorem}%  meant for continuous numbers
\theoremstyle{thmstylethree}%
\newtheorem{definition}{Definition}%
\theoremstyle{thmstyleone}%
\newtheorem{theorem}{Theorem}%  meant for continuous numbers
\theoremstyle{thmstyletwo}%
\theoremstyle{thmstylethree}%
\newtheorem{definition}{Definition}%
\begin{document}

\title[Article Title]{Modeling Self-Propagating Malware with Epidemiological Models}

%%=============================================================%%
%% Prefix	-> \pfx{Dr}
%% GivenName	-> \fnm{Joergen W.}
%% Particle	-> \spfx{van der} -> surname prefix
%% FamilyName	-> \sur{Ploeg}
%% Suffix	-> \sfx{IV}
%% NatureName	-> \tanm{Poet Laureate} -> Title after name
%% Degrees	-> \dgr{MSc, PhD}
%% \author*[1,2]{\pfx{Dr} \fnm{Joergen W.} \spfx{van der} \sur{Ploeg} \sfx{IV} \tanm{Poet Laureate} 
%%                 \dgr{MSc, PhD}}\email{iauthor@gmail.com}
%%=============================================================%%

\author*[1]{\fnm{Alesia} \sur{Chernikova}}\email{chernikova.a@northeastern.edu}

\author[2]{\fnm{Nicol\`{o}} \sur{Gozzi}} \nomail

\author[3]{\fnm{Nicola} \sur{Perra}} \nomail

\author[1]{\fnm{Simona} \sur{Boboila}} \nomail

\author[1]{\fnm{Tina} \sur{Eliassi-Rad}} \nomail

\author[1]{\fnm{Alina} \sur{Oprea}} \nomail

\affil*[1]{\orgname{Northeastern University}, \orgaddress{\city{Boston}, \state{MA}, \country{USA}}}

\affil[2]{\orgname{ISI Foundation}, \orgaddress{\city{Turin}, \country{Italy}}}

\affil[3]{\orgdiv{School of Mathematical Sciences}, \orgname{Queen Mary University of London}, \orgaddress{\country{UK}}}

%%==================================%%
%% sample for unstructured abstract %%
%%==================================%%

\abstract{Self-propagating malware (SPM) is responsible for large financial losses and major data breaches with devastating social impacts that cannot be understated. Well-known campaigns such as WannaCry and Colonial Pipeline have been able to propagate rapidly on the Internet and cause widespread service disruptions. To date, the propagation behavior of SPM is still not well understood. As result, our ability to defend against these cyber threats is still limited. Here, we address this gap by performing a comprehensive analysis of a newly proposed epidemiological-inspired model for SPM propagation, the \emph{Susceptible-Infected-Infected Dormant-Recovered} (SIIDR) model.
We perform a theoretical analysis of the SIIDR model by deriving its basic reproduction number and studying the stability of its disease-free equilibrium points in a homogeneous mixed system. We also characterize the SIIDR model on arbitrary graphs and discuss the conditions for stability of disease-free equilibrium points. We obtain access to 15 WannaCry attack traces generated under various conditions, derive the model's transition rates, and show that SIIDR fits the real data well. We find that the SIIDR model outperforms more established compartmental models from epidemiology, such as SI, SIS, and SIR, at modeling SPM propagation.}

\keywords{Self-propagating Malware, Compartmental Models, Epidemiology, Modeling, Dynamical Systems}

\maketitle

\section{Introduction}
\label{sec:intro}
Self-propagating malware (SPM) is one of today's most concerning cybersecurity threats. Over past years, SPM resulted in huge financial losses and data breaches with high economic and societal impacts. For instance, the infamous WannaCry~\citep{wannacry_2017} attack, first discovered in 2017 and still actively used by attackers nowadays, was estimated to have affected more than $200,000$ computers across $150$ countries worldwide, with economic damages ranging from hundreds of millions to billions of dollars. In May 2021, the Colonial Pipeline~\citep{colonial_pipeline} cyber-attack caused the shut down of the entirety of the Colonial gasoline pipeline system for several days. It affected consumers and airlines along the East Coast of the United States and was deemed a national security threat. Another remarkable worldwide SPM attack is Petya~\citep{enwiki:1090033619}, first discovered in 2016 when it started spreading through phishing emails. Petya represents a family of various types of ransomware responsible for estimated economic damages of over $10$ million dollars~\citep{enwiki:1090033619}. 

Given the current cyber-crime landscape, with new threats emerging daily, tools designed for modeling SPM behavior become crucial. Indeed, a deep understanding of self-propagating malware characteristics provides us opportunities to identify threats, test control strategies, and design proactive defenses against attacks. 
A large body of research on the subject so far has been devoted to the design of methods to detect and mitigate self-propagating malware. Proposed techniques include network traffic signatures~\citep{Kim2004,kumar2020,portfiler2021,Newsome2005} and host-level binary analysis~\citep{chen2017,said2018} used to identify anomalous behavior, software-defined networking (SDN) for ransomware threat detection and mitigation~\citep{Akbanov2019,Alotaibi2021}, as well as evasion-resilient methods for detecting adaptive worms~\citep{Li2014,Newsome2005,portfiler2021}.
However, less attention was dedicated to comparing and finding the most suitable models to capture SPM behavior. Additionally, the majority of existing works on SPM modeling focus on theoretical analyses of infection spreading~\citep{guillen2017study, guillen2018modeling, mishra2007seirs,martinez2021malseirs}, lacking a thorough real-world evaluation of these models.

In this paper, we model the behavior of a well-known SPM attack, WannaCry, based on real-world attack traces. The similarities between the behavior of biological and computer viruses enable us to leverage compartmental models from epidemiology. We adopt a novel compartmental epidemic model called SIIDR~\citep{chernikova2022cyber}, and conduct a thorough analysis to show that it can be used to accurately model SPM spreading dynamics.

\begin{table}[!htpb]
\caption{Terminology and abbreviations used in the paper.}
\begin{tabular}{|l||l|}
\hline
\textbf{Notation} & \textbf{Meaning}\\ 
\hline
$S$&Number of Susceptible Individuals\\
$I$&Number of Infected Individuals\\
$I_D$&Number of Infected Dormant Individuals\\
$R$&Number of Recovered Individuals\\
SPM& Self-Propagating Malware\\
ODE&Ordinary Differential Equation\\
AIC& Akaike Information Criterion\\
ABC&Approximate Bayesian Computation\\
ABC-SMC (SMC)& Sequential Monte-Carlo Approach\\
ABC-SMC-MNN (SMC)&SMC when Covariance Matrix is Calculated\\
&using M Nearest Neighbors of the Particle\\
SI & Susceptible-Infected Model\\
SIS & Susceptible-Infected-Susceptible Model\\
SIR & Susceptible-Infected-Recovered Model\\
SEIR & Susceptible-Exposed-Infected-Recovered Model\\
SIIDR & Susceptible-Infected-Infected Dormant-Recovered Model\\
\hline
\end{tabular}
\label{tab:terms}
\end{table}

First, we study the model assuming a homogeneous mixing of hosts and analytically derive its basic reproduction number $R_0$~\citep{dietz1993estimation, kephart1993measuring,van2008further}. $R_0$ is the number of secondary cases generated by an infectious seed in a fully susceptible population. It describes the epidemic threshold, thus, the conditions necessary for a macroscopic outbreak ($R_0 > 1$)~\citep{fraser2009pandemic, van2008further}. We also investigate equilibrium or fixed points of SIIDR as they provide insights on how to contain or suppress the spreading.

Additionally, computer networks are often represented as graphs, where nodes denote the hosts in the network and edges represent the communication links between them. In any static graph, the propagation of contagion processes depends not only on the transition rates of SPM but also on the spectral properties of the graph~\citep{newman2018networks}. To discuss the important characteristics of SIIDR that illustrate the ability of SPM to successfully propagate through the network in these settings, we represent SIIDR model as a Non-Linear Dynamical System (NLDS) and relaxing the homogeneous mixing assumption. 

Finally, we reconstruct the dynamics of WannaCry spreading analysing real traffic logs. We use the Akaike Information Criterion (AIC)~\citep{akaike1974} to compare how different compartmental models fit the derived epidemic traces. We show that SIIDR captures malware spreading better than classical epidemic models such as SI, SIS, SIR. Indeed, the investigation of real WannaCry attacks showed that consecutive infection attempts originating from the same host are delayed by a variable time interval. This finding suggests the existence of ``dormant" infected state, in which infected hosts temporarily cease to pass infection to their neighbors. Furthermore, calibrating the model to the real data via an Approximate Bayesian Computation technique we determine the transition rates (i.e., model parameters) that characterize WannaCry propagation. 

To summarize, our contributions are the following:
\begin{itemize}
\item We derive the basic reproduction number of the SIIDR model~\citep{chernikova2022cyber} and discuss the stability conditions of the disease-free equilibrium points of the system of ODEs that represents SIIDR under a homogeneous mixing assumption.
\item We derive the conditions for stability of the SIIDR disease-free equilibrium points on arbitrary graphs thus relaxing the homogeneous mixing assumption.
\item We reconstruct the spreading dynamics of an actual SPM (WannaCry) using real-world traces obtained by running a vulnerable version of Windows in a virtual environment.
\item We show that SIIDR outperforms several classical models in terms of capturing WannaCry behavior, and derive the model's transition rates from actual attacks. \end{itemize}

We organize the rest of the paper as follows: first, we provide background information about the WannaCry malware and the most common compartmental models of epidemiology. We also define the threat model and problem statement. Then we introduce the SIIDR model, discuss the derivation of $R_0$ and the stability of the disease-free equilibrium points.
In addition, we present the experimental results that support the findings of the paper. Table~\ref{tab:terms} includes common terminology used in the paper.

\section{Background and Problem Statement}
\label{sec:background}

\subsection{WannaCry Malware}
WannaCry is a self-propagating malware attack, which targets computers running the Microsoft Windows operating system by encrypting data and demanding ransom in Bitcoins. It automatically spreads through the network and scans for vulnerable systems, using the EternalBlue exploit to gain access, and the DoublePulsar backdoor tool to install and execute a copy of itself. WannaCry malware has a 'kill-switch' that appears to work like this: part of WannaCry’s infection routine involves sending a request that checks for a web domain. If its request returns showing that the domain is alive or online, it will activate the 'kill-switch', prompting WannaCry to exit the system and no longer proceed with its propagation and encryption routines. Otherwise, if the malicious program can not connect to the domain, it encrypts the computer's data, then attempts to exploit the vulnerability of \revision{Server Message Block protocol} to spread out to random computers on the Internet, and laterally to computers on the same network~\citep{enwiki:1086034703}.

\subsection{Epidemiological Models}
\label{sec:em}
Compartmental epidemiological models are used to model the spread of infectious diseases~\citep{brauer2008compartmental, keeling2011modeling}. This approach segments the population into groups (compartments) describing the various stages of infection. The compartmental structure varies according to the disease under study and the application of the model. Following disease evolution, individuals can transition at specific rates among compartments. Generally speaking, these transitions can be either spontaneous (e.g., recovery process) or resulting from interactions (e.g., infection process). In their simplest formulation, compartmental models assume homogeneous mixing. Said differently, each individual is potentially in contact with everyone else~\citep{vespignani2012}. 

The most common compartmental models are the SI, SIS, SIR and SEIR models. \revision{In Appendix~\ref{sec:models} we will briefly review the formulation of these models by neglecting demographic changes in the population (i.e., the number of individuals is assumed to be fixed).} More in detail, we represent them as systems of Ordinary Differential Equations (ODEs). This is a common approach to model epidemics in continuous time, even though it approximates the number of individuals in different compartments as continuous functions.

\subsection{Problem Statement and Threat Model}
The objective of our work is to provide a rigorous mathematical analysis of realistic SPM attacks, and thus lay down the foundation of efficient defense strategies against these prevalent threats. 
Several works propose models to capture the behavior of SPM~\citep{guillen2017study, guillen2018modeling, mishra2007seirs,martinez2021malseirs}, however, the vast majority of them have only theoretical analysis and do not incorporate the information about real-world SPM traces. Thus, they lack validation in real-world scenarios. Additionally, it is hard to perform comparative analysis to other models without presenting their performance using real-world data. Existing work that uses actual malware traces for modeling SPM~\citep{levy2020modeling} leverages minimal epidemiological models that, in their simplicity, fail to fully capture malware characteristics. To this end, here we use a more advanced compartmental model (called SIIDR) to describe epidemics resulting from SPM and apply it to real-world attack traces from a well-known malware, WannaCry.

Besides studying different epidemiological models according to their suitability to describe WannaCry epidemics, our second goal is to infer the parameters of the SIIDR epidemic model for different malware variants. Parameter inference is crucial for enabling attack simulations on real networks to measure the impact of the attack, as well as the effectiveness of defensive measures. Indeed, once the parameters of the attack are known, an analyst could estimate the basic reproduction number of the attack, and  understand whether the attack might result in a macroscopic outbreak. Similarly, a defender might configure its network topology by performing edge or node hardening~\citep{le_2015, tong_2012, torres_2021}, minimizing the leading eigenvalue of the graph to prevent the damage from self-propagating malware attacks, or using anomaly detection methods to detect the malware propagation~\citep{portfiler2021}.

In this work, our focus is on modeling SPM propagation inside a local network (e.g., enterprise network, campus network) since we do not have global visibility on SPM propagation across different networks.
We assume that the attacker gets a foothold inside the local network through a single initially infected host. From the `patient zero' victim, the attack can propagate and infect other vulnerable machines in the subnet. We initially assume a homogeneous mixing model, meaning that every machine can contact all others. This is a valid assumption because in a subnet every machine is able to scan every other internal IP within the same subnet. 
We are assuming that none of the machines is immune to the exploited vulnerability at the beginning of the attack, thus, all of them may become infected during SPM propagation. Infectious machines become recovered when the malware is successfully detected and an efficient recovery process removes it. We assume that these machines cannot be reinfected again.
We then relax the homogeneous mixing assumption and characterize the behavior of the model on arbitrary graph, 
considering that a contact between any two nodes in a network does not occur randomly with equal probabilities, but each node communicates with the particular subset of nodes in the network.

\subsection{Related Work}
\label{sec:relatedwork}
\revision{Numerous works propose to simulate and model malware propagation on different levels of fidelity and scalability~\citep{perumalla2004high}. The research on modeling malware and worms propagation includes hardware testbeds~\citep{vahdat2002scalability, white2002integrated}, emulation systems~\citep{durst1999testing, wei2010tools}, packet-level simulations~\citep{riley2004federated, szymanski2003parallel}, fully-virtualized environments~\citep{perumalla2004high}, mixed abstraction simulations~\citep{guo2000time, kiddle2003hybrid}, and epidemic models. In our work we focus on this last line of research. Similarly,} Mishra and Jha~\citep{mishra2010seiqrs} introduce the SEIQRS (Susceptible – Exposed – Infectious – Quarantined - Recovered - Susceptible) model for viruses and study the effect of the quarantined compartment on the number of recovered nodes. In their paper, the authors focus on the analysis of the threshold that determines the outcome of the disease.
Mishra and Pandey~\citep{mishra2014dynamic} introduce the SEIS-V model for viruses with a vaccinated state, while~\citep{mishra2007seirs} study the SEIRS model to characterize the malicious objects' free equilibrium, formulating the stability of the results in terms of the threshold parameter.
Toutonji et al.~\citep{toutonji2012stability} propose a VEISV (Vulnerable – Exposed – Infectious – Secured – Vulnerable) model and use the reproduction rate to derive global and local stability. With the help of simulation, they show the positive impact of increasing security countermeasures in the vulnerable state on worm-exposed and infectious propagation waves.
Guillen et al.~\citep{guillen2019security} introduce a SCIRAS (Susceptible - Carrier - Infectious - Recovered - Attacked - Susceptible) model. Authors study the local and global stability of its equilibrium points and compute the basic reproductive number.
Ojha et al. ~\citep{ojha2021improved} develop a new SEIQRV (Susceptible - Exposed - Infected - Quarantined - Recovered - Vaccinated) model to capture the behavior of malware attacks in wireless sensor networks. In their work, authors obtain the equilibrium points of the proposed model, analyze the system stability under different conditions, and verify the performance of the model through simulations.
Zheng et al.~\citep{zheng2020seiqr} introduce the SLBQR (Susceptible - Latent - Breaking out - Quarantined - Recovered) model considering vaccination strategies with temporary immunity as well as quarantined strategies. The authors study the stability of the model, investigate a strategy based on quarantines aimed at suppressing the spread of the virus, and discuss the effect of the vaccination on permanent immunity. In order to verify their findings, the authors simulate the model exploring a range of temporary immune times and quarantine rates.

Recently, several attempts have been made to enhance the realism of the epidemic models. For instance, Guillen et al.~\citep{guillen2017study} study the SEIRS model with an improved incidence rate (i.e., new infected hosts per time unit). Additionally, the equilibrium points are computed and their local and global stability are studied. Finally, the authors derive the explicit expression of the basic reproductive number and propose efficient measures to control the epidemics.
Martinez et al.~\citep{martinez2021malseirs} introduce a dynamic version of SEIRS. The authors look at the performance of the model with different sets of parameters, propose optimal values, and discuss its applicability to model real-world malware. 
Gan et al.~\citep{gan2020dynamical} propose a dynamical SIP (Susceptible - Infected - Protected) model, find an equilibrium point, and discuss its local and global stability. Additionally, the authors perform the numerical simulations of the model to demonstrate the dependency on parameter values.
Yao et al.~\citep{yao2018epidemic} present a time-delayed worm propagation model with variable infection rate. They analyze the stability of equilibrium and the threshold of Hopf bifurcation. The authors carry out the numerical analysis and simulation of the model.

Some papers explore malware propagation on networks comprised of different types of devices.
For instance, Guillen et al.~\citep{guillen2018modeling} considers the special class of carrier devices whose operative systems are not targeted by malware (for example, iOS devices for Android malware); the authors introduce a new compartment (Carrier) to account for these devices, and analyze efficient control measures based on the basic reproductive number.
Zhu et al.~\citep{zhu2012modeling} take into consideration the ability of viruses to infect not only computers, but also many kinds of external removable devices; in their model, internal devices can be in Susceptible, Infected, and Recovered states, while removable devices can be in Susceptible and Infected states.

None of these previous works perform model fitting to real-world malware scenarios, but only consider theoretical analyses of the proposed models. The closest to our work is Levy et al.~\citep{levy2020modeling}; the authors use real traces to fit malware propagation with SIR, a simplistic model that, as we have shown, performs poorly compared to SIIDR and fails to capture self-propagating malware dynamics.

\section{Analysis of the SIIDR model}
\label{sec:methodology_ode}
In this section, we introduce the main characteristics of WannaCry propagation dynamics, the proposed modeling framework (the SIIDR model), we discuss its basic reproduction number and the stability of disease-free equilibrium points. Table~\ref{tab:terms_analysis} includes common terminology used in this section.

\subsection{SPM Modeling with the SIIDR model}
A detailed analysis of the WannaCry traces~\citep{chernikova2022cyber} revealed the following characteristics:
\begin{itemize}
    \item The time interval $\Delta t$ between two consecutive malicious attempts from the same infected IP is not constant and has high variability. This intuition is supported by the results in Figure~\ref{fig:QCoD} where we show the quartile coefficient of dispersion (QCoD) of these $\Delta t$ for different Wannacry variants. The $QCoD$ is defined as $(Q_3 - Q_1) / (Q_3 + Q_1)$. As benchmark we show the hypothetical QCoD of exponentially distributed $\Delta_t$ with the same mean observed in the data. We chose the exponential distribution since time intervals lapsing between Poisson-like events happening at constant rate follow this distribution. From the figure we see that the QCoD of $\Delta t$ obtained from the data is much higher ($\sim 50\%$ more across variants) than the one we would expect to see with constant frequency events.
    \item The time interval $\Delta t$ between the last attack from an infected IP and the end of the collected trace is large. The average values of $\Delta t$ between two consecutive malicious attempts and $\Delta t$ between the last attack attempt from an infected IP and the end of the epidemics are shown in Figure~\ref{fig:deltas}. The mean value of the $\Delta t$ in the second case are much larger then the $\Delta t$ between two consecutive attack attempts.
\end{itemize}

\begin{figure}[ht]
\centering
\includegraphics[width=0.4\textwidth]{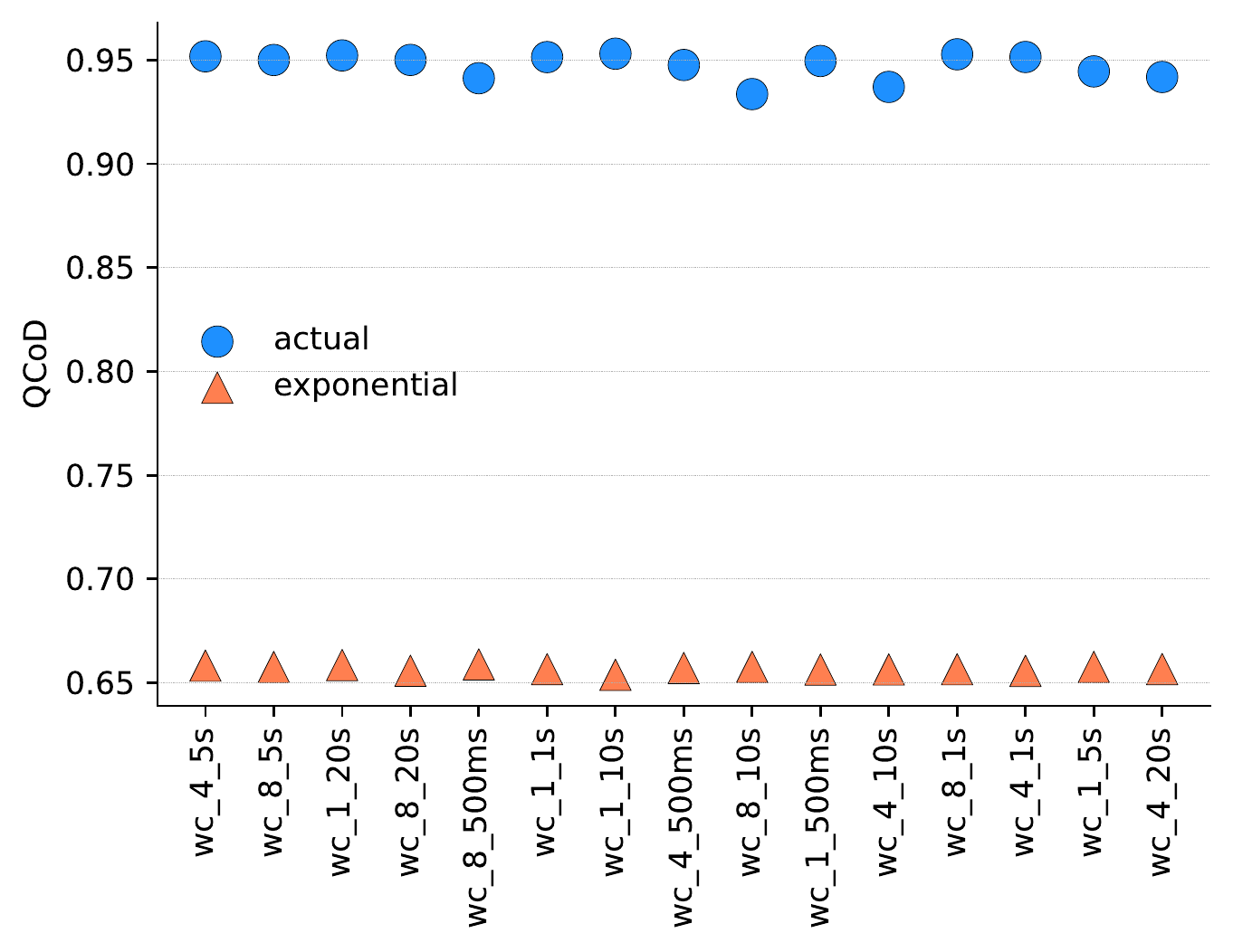}
\caption{Quartile coefficient of dispersion of $\Delta t$ between two consecutive malicious attempts from the same infected IP and of exponential distribution with same mean for different WannaCry variants.}
\label{fig:QCoD}
\end{figure}

\begin{figure}[ht]
\centering
\includegraphics[width=0.4\textwidth]{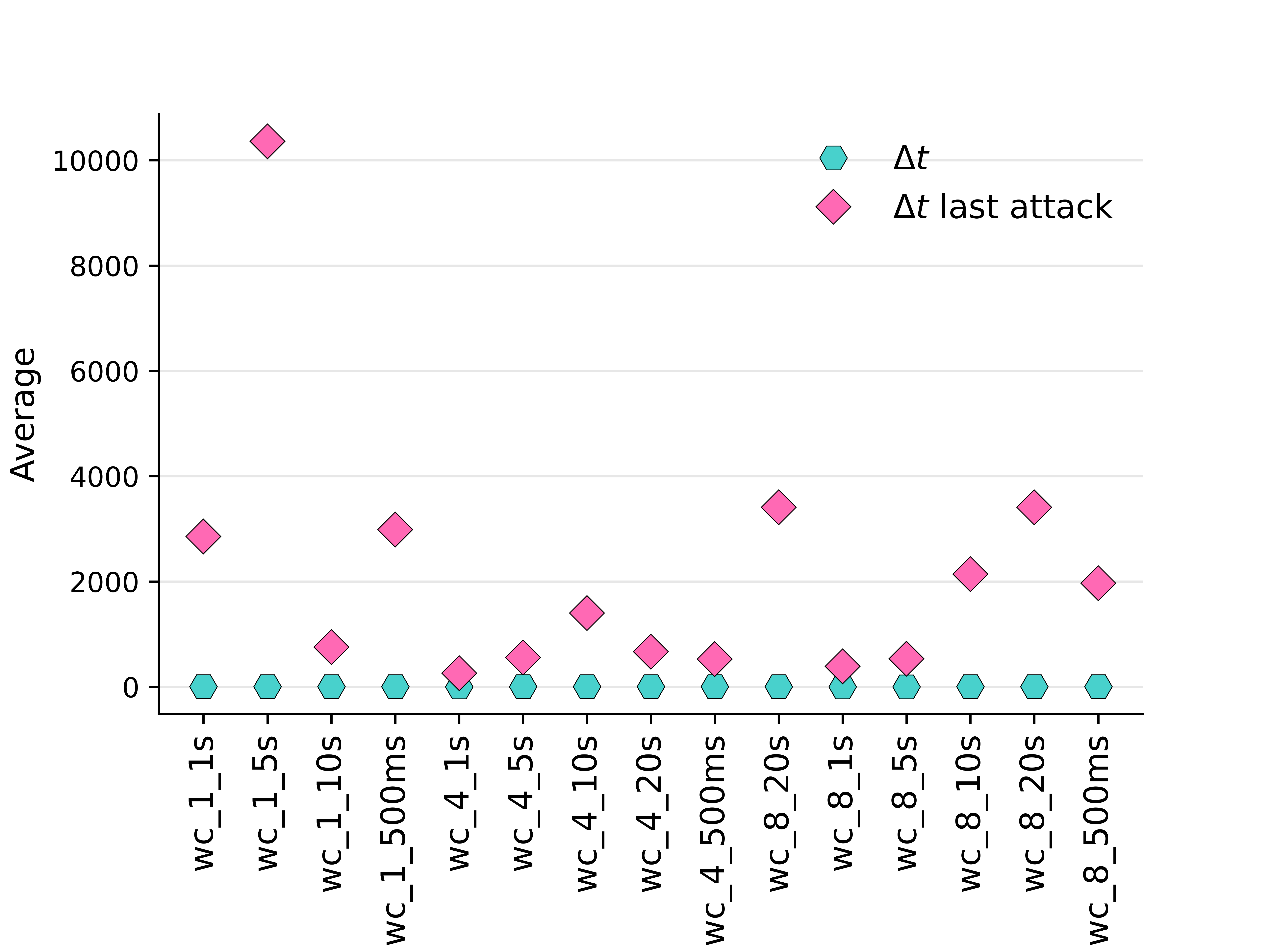}
\caption{Average $\Delta t$ between two consecutive malicious attempts from the same infected IP and Average $\Delta t$ from the last attack attempt to the end of epidemics for different WannaCry variants.}
\label{fig:deltas}
\end{figure}

Based on the first observation, an infected dormant state $I_D$ is included to capture the heterogeneous distribution of time windows between two malicious attack attempts. Therefore, an infected node can become dormant for some period of time and resume its malicious activity later. The second observation supports the presence of a Recovered state: once nodes recover, they will not become infectious or susceptible again, at least within a certain observation period.
The transition diagram corresponding to the SIIDR model is illustrated in Figure~\ref{fig:SIIDRstates}. Interacting with the infectious, a susceptible node can become infected with rate $\beta$, and afterwards, it may either recover with rate $\mu$, or move to the dormant state with rate $\gamma_1$. 
From the dormant state, it may become actively infectious again with rate $\gamma_2$.

\begin{figure}[ht]
\centering
\includegraphics[width=0.3\textwidth]{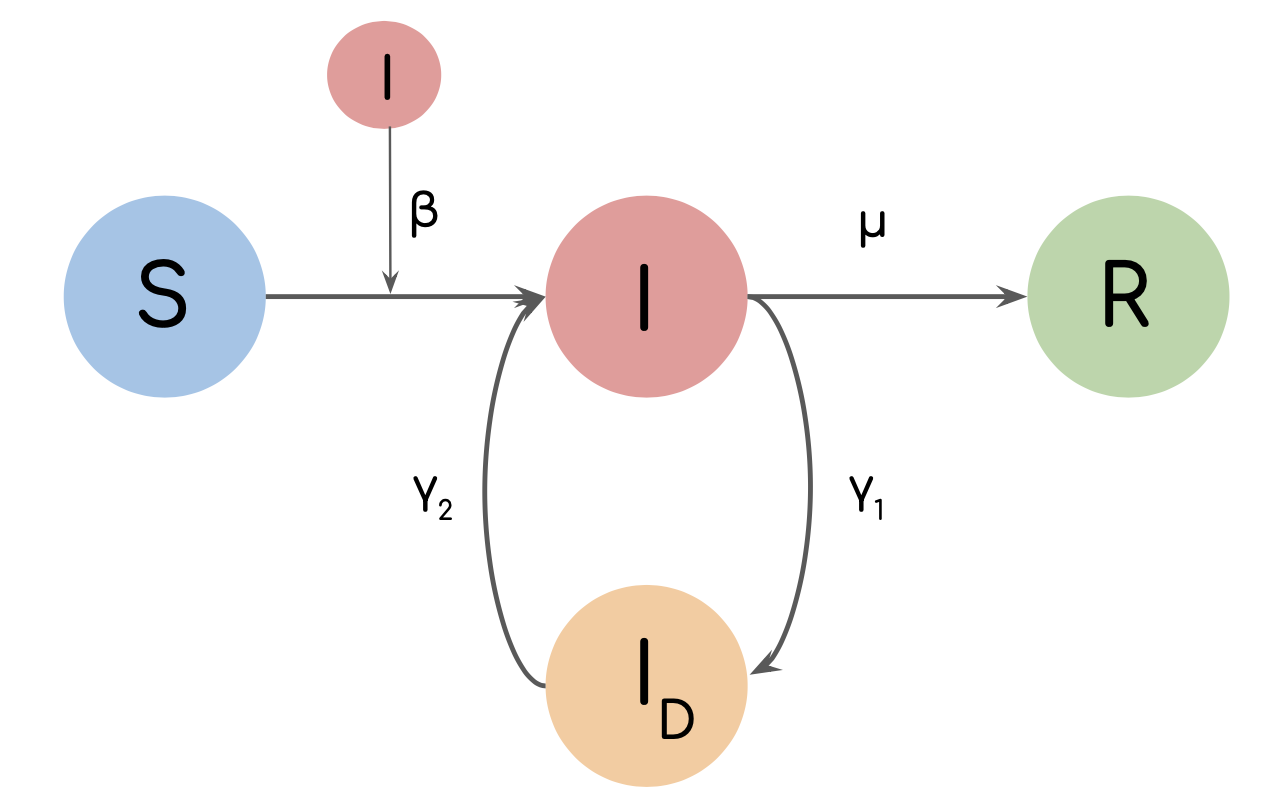}
\caption{Schematic representation of the SIIDR model.}
\label{fig:SIIDRstates}
\end{figure}

The evolution of the system can be modeled through the following ODEs system:
\begin{equation}
\begin{aligned}
    \frac{dS}{dt} &= -\beta S \frac{I}{N}\\
    \frac{dI}{dt} &= \beta S \frac{I}{N} - \mu I - \gamma_1 I + \gamma_2 I_D \\
    \frac{dI_D}{dt} &= \gamma_1 I - \gamma_2 I_D \\
    \frac{dR}{dt} &= \mu I \label{eq:ode}
\end{aligned}
\end{equation}
\noindent with $N=S(t)+I(t)+I_D(t)+R(t)$, where the total size of the population $N$ is constant. It is important to stress how the system of ODEs assumes an homogeneous mixing in the host population. 

\subsection{SIIDR Equilibrium Points}
\label{sec:eqpoints}
While modeling SPM we are interested in equilibrium states when the number of infected individuals equals to 0 and does not change over time (i.e., disease-free equilibrium points). Thus, we need to derive the constant solutions of the ODE system corresponding to SIIDR model~\citep{perko2013differential}. 

\begin{definition} An \textbf{equilibrium point} or \textbf{fixed point} of the system of ODEs $\dot{x} = f(X)$ is a solution $E^*$ that does not change with time, i.e., $f(E^*) = 0$.
\end{definition}

For the SIIDR model we can find the equilibrium points by solving the following system:
\begin{equation*}
\begin{aligned}
 -\beta I \frac{S}{N} = 0\\
 \beta I \frac{S}{N} - \mu I - \gamma_1 I + \gamma_2 I_D = 0\\
 \gamma_1 I - \gamma_2 I_D = 0\\
 \mu I  = 0
\end{aligned}
\end{equation*}
\noindent given that $S + I + I_D + R = N$.

Thus, we find disease-free equilibrium points of the SIIDR model as $E^* = (S, 0, 0, R)$ where $I = I_D  = 0$ and $S + R = N$. The particular case is the beginning of the propagation process when the number of recovered individuals is 0: $R = 0$ or $E^* = (N, 0, 0, 0)$. \revision{Therefore, we perform further analyses of SIIDR model based on this equilibrium point. There exists no endemic equilibrium point when $I \neq 0$ for SIIDR model. It is present only when $\mu = 0$ (SIID model) and is equal to $(0, I^*, \frac{\gamma_1 I^*}{\gamma_2}, 0)$.}
%In case when $\mu = 0$ there exists an endemic equilibrium point when $I \neq 0$ which is equal to $(0, I^*, \frac{\gamma_1 I^*}{\gamma_2}, 0)$. However, in this case the system becomes SIID without the opportunity to have $R$ state.

\subsection{The Basic Reproduction Number}
The basic reproduction number $R_0$ is the number of secondary cases generated by a single infectious seed in a fully susceptible population~\citep{keeling2011modeling}. $R_0$ defines the epidemic threshold, that is the condition for a macroscopic outbreak. If $R_0 > 1$, on average, infected individuals are able to sustain the spreading. If $R_0<1$, on average, the disease will die out before any macroscopic outbreak. 

One way to derive the basic reproduction number is to use the next-generation matrix approach~\citep{diekmann1990definition, diekmann2010construction,blackwood2018introduction}. This states that the basic reproduction number is the largest eigenvalue of the next-generation matrix. The method takes into consideration the dynamics of compartments linked to new infections. For example the number of infected individuals in compartment $i$, $i \in \{1, \dots, k\}$, where $k$ is the number of compartments with infected individuals, changes as follows:
\begin{equation*}
\begin{aligned}
\frac{df_i(X)}{dt} = F_i(X) - V_i(X)
\end{aligned}
\end{equation*}
\noindent where $F_i(X)$ is the rate of appearance of new infections in compartment $i$ by all other means, $V_i(X) = [V_i^-(X) - V_i^+(X)]$, $V_i^+(X)$ is the rate of transfer of individuals into compartment $i$ and $V_i^-(X)$ represents the rate of transfer of individuals out of compartment.
If $E^*$ is a disease-free equilibrium, then we can define a next-generation matrix:
\begin{align*}
G = FV^{-1}
\end{align*}
\noindent where:
\begin{align*}
F= \frac{\partial F_i}{\partial x_j}(E^*)\\
V=\frac{\partial V_i}{\partial x_j}(E^*)
\end{align*}
In the case of SIIDR model, the matrix $G$ can be represented at one of the disease-free equilibrium points $DFE=(N,0,0,0)$ as follows:
\begin{align}
G = \begin{bmatrix}
\beta & 0\\
0 & 0
\end{bmatrix} \begin{bmatrix}
\mu + \gamma_1 & -\gamma_2 \\
 -\gamma_1 & \gamma_2
\end{bmatrix}^{-1}=
\begin{bmatrix}
\frac{\beta}{\mu} & \frac{\beta}{\mu}\\
0 & 0
\end{bmatrix}
\label{eq:gmat}
\end{align}

Let $\Vec{v}$  be an eigenvector of the matrix $G$, and $\lambda$ its corresponding eigenvalue. The eigenvalue equation is~\citep{bhatia97}:
\begin{equation*}
    \begin{aligned}
G\Vec{v} = \lambda \Vec{v},
    \end{aligned}
\end{equation*}
\noindent where $\Vec{v}$ is a nonzero vector, therefore $det[\lambda I - G] = 0$. Using $G$ from equation~\ref{eq:gmat}, we obtain: 
\begin{equation*}
    \begin{aligned}
det[\lambda I - G] = \lambda\left(\lambda - \frac{\beta}{\mu}\right) = 0,
\end{aligned}
\end{equation*}

\noindent which results in: 1) $\lambda = 0$ or 2) $\lambda = \beta / \mu$. 
According to the next-generation matrix method~\citep{diekmann1990definition, diekmann2010construction,blackwood2018introduction}, the reproduction number $R_0$ is the largest eigenvalue of the next-generation matrix $G$, hence, $R_0 = \frac{\beta}{\mu}$, which is the same definition of $R_0$ of the SIR model. In other words, the introduction of the new compartment $I_D$ does not alter the conditions for a macroscopic outbreak. 
We note that, in general, the disease free equilibrium might contain individuals already immune to the disease, i.e., $E^* =(N-R,0,0,R)$. This might be due to wave of infections caused by previous introductions of the virus. In this more general case we have: $R_0 = \frac{\beta}{\mu}\left (1-\frac{R}{N} \right )$, where in parenthesis we have the fraction of the susceptible population. 
\begin{table*}[!htpb]
\caption{Terminology and abbreviations used for SIIDR analysis.}
\begin{tabular}{|l||l|}
\hline
\textbf{Notation} & \textbf{Meaning}\\ 
\hline
$\beta$ & Infection Rate\\
$\tilde{\beta}$& Infection Probability\\
$\mu$ & Recovery Rate\\
$\gamma_1$ & Transition Rate from Infected to Infected Dormant Compartment\\
$\gamma_2$ &Transition Rate from Infected Dormant to Infected Compartment\\
$\zeta_{i, t}(I)$&The Probability of Node $i$ of not Getting Infected at Time Step $t$\\
$\alpha_{XY}$&The Probability of a Node to Transition from State $X$ to $Y$\\
$\alpha_{XX}$&The Probability of a Node to stay in the State $X$\\
$DFE$&Disease-free equilibrium point\\
$R_0$ & The Basic Reproduction Number\\
$G$& Next-generation Matrix\\
$X$& The Vector of Individual Numbers in Each Compartment\\
$E^*$& Equilibrium Point for SIIDR as the System of ODEs\\
$L$&Lyapunov Function\\
$P$&Each Node's Vector of Probabilities to be in Each Compartment\\
$P^*$& Equilibrium Point for SIIDR as the NLDS\\
$g$& Matrix Form of SIIDR Represented as the NLDS\\
$C$&Linear Part of SIIDR as the NLDS Matrix Form\\
$B$& Non-Linear Part of SIIDR as the NLDS Matrix Form\\
$\mathds{A}$&Graph Adjacency Matrix\\
$\lambda_A$&The Largest Eigenvalue of the Adjacency Matrix $\mathds{A}$\\
$d$&Degree of the $d$-regular Graph\\
\hline
\end{tabular}
\label{tab:terms_analysis}
\end{table*}

\subsection{Stability Analysis of SIIDR Equilibrium Points}
\label{sec:oed_stab}
A particularly important characteristic of a disease-free equilibrium point is its stability~\citep{hirsch1974differential}, which indicates whether the system will be able to return to the equilibrium point after small perturbations. For example, a small perturbation can be a slight increase in the number of initially infected nodes. 

Let us consider the system of ODEs that captures the dynamics of our SIIDR model (see Equations~\ref{eq:ode}), governed by:
\begin{equation*}
    \begin{aligned}
\dot{x} = f(X), X \in R^n
\end{aligned}
\end{equation*}
\noindent Let $X = E^*$ be a fixed point of $f(X)$, that is, $f(E^*) = 0$. Furthermore, let us assume that the system's initial state at $t = 0$ is $X = X^0$. In this context, the stability of $E^*$ can be obtained answering to the following question: if the system starts near $E^*$, how close will it remain to $E^*$? Beside this intuition, stability is more formally defined as follows~\citep{hirsch1974differential}:

\begin{definition} The equilibrium point $E^*$ is \textbf{stable} if for any $\epsilon > 0$, there exists a $\delta > 0$ such that: if the system's initial state $X^0$ lies in the ball of radius $\delta$ around $E^*$ (i.e., $||X^0 - E^*|| < \delta$), then solutions $X^t$ exist for all $t > 0$, and they stay in the ball of radius $\epsilon$ around $E^*$ (i.e., $||X^t - E^*|| < \epsilon$).
\end{definition}

In addition: 
\begin{definition} We say that $E^*$ is \textbf{locally asymptotically stable} if it is stable and the solutions $X^t$ with initial state $X^0$ in the ball of radius $\delta$ converge to $E^*$ as $t \rightarrow \infty$.
\end{definition}

And:
\begin{definition}
We say that  $E^*$ is \textbf{stable in the sense of Lyapunov (i.e., Lyapunov stable)} when there exists the continuously differentiable function $L(X)$ such that:
\begin{align}
L(X) \geq 0, L(E^*) = 0 \label{eq:lyapunov1}\\
\dot{L}(X) = \frac{d}{dt}L(X)= \sum_{i}\frac{dL}{dx_i}f_i(X)\leq 0, \dot{L}(E^*) = 0
\label{eq:lyapunov2}
\end{align}
If $\dot{L}(X) <0$ and $\dot{L}(X) = 0$ only when $X=E^*$, then $E^*$ is \textbf{locally asymptotically stable}. 
\label{def:lyapunov}
\end{definition}

We next analyze the stability of the SIIDR disease-free equilibrium points and show that they are Lyapunov stable, if the reproduction number $R_0$ is smaller or equal to one. We formally state and prove it in the following theorem:
\begin{theorem} If $R_0 \leq 1$ the disease-free equilibrium point $E^*$ of the SIIDR system of ODEs is Lyapunov stable.
\end{theorem}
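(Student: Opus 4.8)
The plan is to verify Lyapunov stability directly from Definition~\ref{def:lyapunov}, by exhibiting an explicit Lyapunov function. Since a disease-free state is exactly one in which both infected compartments vanish, the natural candidate is the total number of infected hosts, $L(S,I,I_D,R)=I+I_D$. This $L$ is $C^\infty$, it is nonnegative on the feasible set, and it vanishes at $E^*=(N,0,0,0)$, so the conditions $L(X)\ge 0$ and $L(E^*)=0$ in~(\ref{eq:lyapunov1}) hold immediately; the content of the proof is to check that $\dot L\le 0$ along the trajectories of~(\ref{eq:ode}).

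First I would note that the simplex $\Delta=\{(S,I,I_D,R):S,I,I_D,R\ge 0,\ S+I+I_D+R=N\}$ is positively invariant: on each boundary face the inward derivative is nonnegative (for instance $\dot I=\gamma_2 I_D\ge 0$ when $I=0$), and $N$ is conserved, so a solution starting near $E^*$ exists for all $t>0$ and stays in $\Delta$; in particular $0\le S(t)\le N$ for all $t$ (equivalently, $\dot S=-\beta S I/N\le 0$, so $S$ is non-increasing).

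Next I would compute the Lie derivative. Adding the $dI/dt$ and $dI_D/dt$ equations of~(\ref{eq:ode}), the $\gamma_1 I$ terms cancel, as do the $\gamma_2 I_D$ terms, leaving $\dot L=\beta S I/N-\mu I=I(\beta S/N-\mu)$. Since $S\le N$ we have $\beta S/N\le\beta$, and the hypothesis $R_0\le 1$ at $E^*=(N,0,0,0)$ is precisely $\beta/\mu\le 1$, i.e. $\beta\le\mu$; together with $I\ge 0$ this gives $\dot L\le 0$ on $\Delta$, with $\dot L=0$ at $E^*$ because $I=0$ there. This is exactly condition~(\ref{eq:lyapunov2}), so Definition~\ref{def:lyapunov} yields that $E^*$ is Lyapunov stable.

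The step I expect to be the crux is the choice of $L$ so that the dormant/active exchange terms cancel: weighting $I$ and $I_D$ unequally (e.g.\ taking $L=I$ alone) leaves an uncontrolled $+\gamma_2 I_D$ contribution, and the equal weighting is what removes it and reduces $\dot L$ to the familiar $I(\beta S/N-\mu)$ of the SIR model. The only other subtlety is that $L$ is positive semidefinite rather than positive definite — it vanishes along the entire manifold of disease-free states — so the argument gives Lyapunov stability but not asymptotic stability, which matches the statement: when $R_0=1$ one cannot expect $\dot L<0$ away from $E^*$. An analogous computation handles the more general disease-free point $(N-R,0,0,R)$ when $R_0=(\beta/\mu)(1-R/N)<1$, using $\beta S/N\le\beta(N-R)/N=\mu R_0<\mu$ on a small enough neighborhood; the borderline case $R_0=1$ needs marginally more care.
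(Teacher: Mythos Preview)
Your proposal is correct and follows essentially the same approach as the paper: both use the Lyapunov function $L=I+I_D$, exploit the cancellation of the $\gamma_1,\gamma_2$ terms to obtain $\dot L=I(\beta S/N-\mu)$, and conclude Lyapunov (but not asymptotic) stability from $R_0\le 1$. Your version is in fact slightly more thorough, adding the positive-invariance argument and making explicit the step $S\le N\Rightarrow\beta S/N\le\beta\le\mu$ that the paper leaves implicit.
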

\begin{proof} 
Let $L(X) = I + I_D$, where $L$ is the valid Lyapunov function as long as it is non-negative continuously differentiable scalar function which equals 0 at the disease-free equilibrium point ($I = I_D = 0$). The time-derivative of $L$ is the following:
\begin{align*}
\dot{L} = \frac{dL}{dt} = \frac{d(I + I_D)}{dt} = \beta S \frac{I}{N} - \mu I, 
\end{align*}
where we used Equations~\ref{eq:ode} that describe the evolution of $I$ and $I_D$.
Therefore, $\dot{L} \leq 0$ (Equation~\ref{eq:lyapunov2}) when:
\begin{align*}
I\left(\frac{\beta S} {\mu N} - 1\right) \leq 0 
\end{align*}
Given the basic reproduction number $R_0 = \frac{\beta S} {\mu N} $, we obtain:
\begin{align}
I\left(R_0 - 1\right) \leq 0
\label{eq:th1x}
\end{align}
\noindent Equation~\ref{eq:th1x} holds when $R_0 \leq 1$. Hence, $\dot{L} \leq 0$ when $R_0 \leq 1$.
Furthermore, $\dot{L}(E^*) = 0$ (since $I = 0$ when $X = E^*$), which concludes the proof that $E^*$ is a Lyapunov stable disease-free equilibrium point.

Note that $\dot{L}(X) = 0$ when $I = 0$, even if $X \neq E^*$ (for instance, if $I_D \neq 0$). Thus, $E^*$ is not locally asymptotically stable (see Definition~\ref{def:lyapunov}).
\end{proof}

\subsection{SIIDR Analysis on Arbitrary Graphs}
Our analysis in previous sections was performed under the homogeneous-mixing assumption~\citep{bansal2007,vespignani2012}. In this limit, all hosts are well-mixed and potentially in contact. The homogeneous approximation might be a good representation of the contact dynamics in a local subnet where each machine can contact anyone else. However, the contact patterns in larger networks are complex. 
Indeed, many real networks (including the Internet) feature, among other properties, a heterogeneous connectivity distribution consisting of a few highly-connected 'hubs', while the vast majority of nodes have much lower connectivity~\citep{Albert:2002:rmp,Pastor-Satorras2015}. 
In this section, we analyze the epidemiological dynamics of the SIIDR model on arbitrary graphs that capture heterogeneity in host contact patterns. In this case, the propagation of malware can be modeled with a discrete-time Non-Linear Dynamical System~\citep{Chakrabarti2008,Prakash2011}.

A NLDS system is specified by the vector of probabilities at time step $t+1$ as $P_{t+1} = g(P_{t})$, where $g$ is non-linear continuous function operating on a vector $P_{t}$. We define the system equations based on the transition diagram of the model (Figure~\ref{fig:SIIDRstates}).

First, we are computing the probability of node $i$ of \emph{not getting infected} at time step $t$: $\zeta_{i, t}(I)$, which happens when: 1) none of its neighbors are in state $I$, or 2) a neighbor is in state $I$ but fails to infect $i$ with probability $(1-\tilde{\beta})$, where $\tilde{\beta}$ is the attack transmission probability over a contact-link. We note how $\tilde{\beta}$ is generally different than the infection rate $\beta$ introduced above. Indeed we can approximate $ \beta= \tilde{\beta} \langle k \rangle_t$ where $ \langle k \rangle_t$ is the average contact rate per unit time. Hence:
\begin{equation}
\begin{aligned}
    \zeta_{i, t}(I) = \prod\limits_{j \in \texttt{Neigh}(i)} [(1-P_{I,j,t}) +  P_{I,j,t} \cdot (1-\tilde{\beta})]
    = \prod\limits_{j \in {1..N}} (1 - \tilde{\beta} A_{i,j} P_{I, j, t})
\end{aligned}
\end{equation}

Next, we develop the equations for probabilities $P$ of node $i$ to be in each of the possible states ($S, I, I_D, R$) at time step $t+1$. 

For generality and clarity, we denote by $\alpha_{XY}$ the probability of a node to transition from state $X$ to $Y$, while $\alpha_{XX}$ is the probability of a node to remain in state $X$. With this notation, the probability equations for each state are as follows:

\noindent \textit{State $S$:} A node $i$ is in state $S$ at time $t+1$ if it was in state $S$ at time $t$ and it did not get infected:
\begin{equation}
    P_{S,i,t+1} = P_{S,i,t} \cdot \zeta_{i, t}(I)
\label{eq:ndls_S}
\end{equation}

\noindent \textit{State $I$:} A node $i$ is in state $I$ at time $t+1$ if either: 1) it was in state $S$ at time $t$ and was successfully infected, or 2) it was in state $I$ at time $t$ and it remained there (i.e., it did not transition to states $R$ or $I_D$), or 3) it was in state $I_D$ at time $t$ and transitioned to state $I$.
\begin{equation}
\begin{split}
    P_{I,i,t+1} &= P_{S,i,t} \cdot (1-\zeta_{i, t}(I)) + P_{I,i,t} \cdot \alpha_{II} + P_{I_D,i,t} \cdot \alpha_{I_DI}
\end{split}
\label{eq:ndls_I}
\end{equation}

\noindent \textit{State $I_D$:} A node $i$ is in state $I_D$ at time $t+1$ if either: 1) it was in state $I$ at time $t$ and transitioned to state $I_D$, or 2) it was in state $I_D$ at time $t$ and it remained there.
\begin{equation}
    P_{I_D,i,t+1} = P_{I,i,t} \cdot \alpha_{II_D} + P_{I_D,i,t} \cdot \alpha_{I_DI_D}
\label{eq:ndls_ID}
\end{equation}

\noindent \textit{State $R$:} We can compute $P_{R,i,t}$ using the relation: 
\begin{equation}
\forall{i,t}:  P_{S,i,t}+P_{I,i,t}+P_{I_D,i,t}+P_{R,i,t}=1
\label{eq:ndls_R}
\end{equation}

Now we can write down the system of equations for SIIDR using Equations~\ref{eq:ndls_S}--\ref{eq:ndls_R} to define $P_{t}$, the probability vector that completely describes the evolution of the system at any time step $t$:
\begin{equation}
\begin{aligned}
    P_{S,i,t+1} &= P_{S,i,t} \cdot \zeta_{i, t}(I)\\
    P_{I,i,t+1} &= P_{S,i,t} \cdot (1-\zeta_{i, t}(I)) + P_{I,i,t} \cdot \alpha_{II} + P_{I_D,i,t} \cdot \alpha_{I_DI}\\
    P_{I_D,i,t+1} &= P_{I,i,t} \cdot \alpha_{II_D} + P_{I_D,i,t} \cdot \alpha_{I_DI_D}\\
    P_{R,i,t+1} &= 1 - P_{S,i,t} - P_{I_D,i,t} \cdot (\alpha_{I_DI} + \alpha_{I_DI_D}) - P_{I,i,t} \cdot  (\alpha_{II} + \alpha_{II_D})
\end{aligned}
\label{eq:sys_nlds}
\end{equation}

\subsubsection{Stability Analysis}
\label{sec:ndls_stability}
The next step in our analysis of the SIIDR propagation on complex networks represented as arbitrary graphs is to define the disease-free equilibrium points and analyze their stability. 

\begin{definition} An \textbf{equilibrium point} of NLDS is the probability vector $P^*$ that satisfies $P_{t+1}$ = $P_t = P^*$ for any $t$~\citep{verhulst2006nonlinear}.
\end{definition}
Thus, for the SIIDR model we can define the disease-free equilibrium point as follows: 
\begin{align*}
P^* = [P_S, 0, 0, P_R]^T,\text{ where }P_R + P_S = 1
\end{align*}

One way to analyze the stability of the equilibrium point of a non-linear dynamical system is to approximate its dynamics at that point as a linear dynamical system (i.e., linearization)~\citep{sayama2015}. In this case, the system behavior in an infinitesimally small area about the equilibrium point is approximated with a Jacobian matrix.

The largest eigenvalue $\lambda_J$ of the Jacobian matrix indicates whether the equilibrium point of the system is stable or not. Since we are considering the time as discrete, if $|\lambda_J| < 1$, the equilibrium point is asymptotically stable; even if small perturbations occur, the system asymptotically goes back to the equilibrium point. If $|\lambda_J| > 1$, the system is unstable and diverges away from the equilibrium point. If $|\lambda_J| = 1$, then the system may either diverge from, or converge to the equilibrium point~\citep{bof2018lyapunov, dahleh2004lectures,Haddad2011,sayama2015}.

The Jacobian matrix of SIIDR modeled as NDLS and an analysis of its eigenvalues is presented in Appendix~\ref{sec:appB}. We show that one of the eigenvalues of the Jacobian has value 1. This result is particularly significant. Asymptotic stability requires all the eigenvalues of the Jacobian matrix to be less than 1 in absolute values. Since the Jacobian matrix has at least one eigenvalue of value 1, the equilibrium point of the NLDS system cannot be asymptotically stable. However, the equilibrium point can still be Lyapunov stable.

We show that the equilibrium points of SIIDR are indeed Lyapunov stable using Lyapunov's second stability criterion.

\begin{definition}
The equilibrium point $P^*$ of  $P_{t+1} = g(P_{t})$ NLDS is \textbf{Lyapunov stable} if there exists a continuous function $L$, such that for any $t$:
\begin{align}
L(P) > 0, L(P^*)  = 0 \\
L(P_{t+1}) - L(P_t) \leq 0 \label{eq:diff}
\end{align}
\end{definition}

\begin{theorem}
The equilibrium points of SIIDR represented as NLDS of the form~(\ref{eq:sys_nlds}) are Lyapunov stable if:
\begin{align}
\lambda_A \frac{\tilde{\beta} }{\mu} \leq 1\label{eq:thres}
\end{align}
where $\lambda_A$ is the largest eigenvalue of the adjacency matrix, $\tilde{\beta}$ and $\mu$ are probabilities of infection and recovery respectively. 
\end{theorem}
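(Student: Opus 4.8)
The plan is to mimic the ODE argument, where $L(X)=I+I_D$ worked, but to weight the infected mass by a Perron eigenvector of the contact graph so that the action of $\mathds{A}$ can be converted into multiplication by $\lambda_A$. Let $v=(v_1,\dots,v_N)^{T}>0$ be a left Perron eigenvector of the (irreducible) adjacency matrix, i.e. $v^{T}\mathds{A}=\lambda_A v^{T}$, and define the candidate Lyapunov function
\[
L(P_t)=\sum_{i=1}^{N} v_i\bigl(P_{I,i,t}+P_{I_D,i,t}\bigr).
\]
Then $L(P_t)\ge 0$ on every admissible probability vector and $L(P^{*})=0$ at any disease-free equilibrium $P^{*}=[P_S,0,0,P_R]^{T}$, so the ``value'' requirements of the Lyapunov-stability definition are immediate; all the content is in the sign of the one-step increment $L(P_{t+1})-L(P_t)$. (For a $d$-regular graph one may just take $v=\mathbf 1$ and recover the threshold with $\lambda_A=d$.)

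The first real step is to compute $L(P_{t+1})-L(P_t)$ from the update equations~(\ref{eq:sys_nlds}). Adding the $P_{I,i,t+1}$ and $P_{I_D,i,t+1}$ rows and using that the per-node probabilities out of $I$ sum to one with recovery probability $\mu$ (so $\alpha_{II}+\alpha_{II_D}+\mu=1$) and that those out of $I_D$ sum to one ($\alpha_{I_DI}+\alpha_{I_DI_D}=1$), almost everything telescopes and one is left with the exact network analogue of $\dot L=\beta S I/N-\mu I$, namely
\[
L(P_{t+1})-L(P_t)=\sum_{i=1}^{N} v_i\Bigl[P_{S,i,t}\bigl(1-\zeta_{i,t}(I)\bigr)-\mu\,P_{I,i,t}\Bigr].
\]
Next I would bound the infection term: by the Weierstrass product inequality $\prod_j(1-x_j)\ge 1-\sum_j x_j$ with $x_j=\tilde{\beta}A_{i,j}P_{I,j,t}\in[0,1]$, and using $P_{S,i,t}\le 1$, we get $P_{S,i,t}\bigl(1-\zeta_{i,t}(I)\bigr)\le \tilde{\beta}\sum_{j}A_{i,j}P_{I,j,t}$. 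Substituting and exchanging the order of summation, the crucial identity is $\sum_i v_iA_{i,j}=(v^{T}\mathds{A})_j=\lambda_A v_j$, whence
\[
L(P_{t+1})-L(P_t)\le \bigl(\tilde{\beta}\lambda_A-\mu\bigr)\sum_{j=1}^{N}v_j P_{I,j,t}=\mu\Bigl(\lambda_A\tfrac{\tilde{\beta}}{\mu}-1\Bigr)\sum_{j=1}^{N}v_j P_{I,j,t}.
\]
Since $v_j>0$ and $P_{I,j,t}\ge 0$, the right-hand sum is non-negative, so the increment is $\le 0$ exactly when $\lambda_A\tilde{\beta}/\mu\le 1$, which is condition~(\ref{eq:thres}); together with $L\ge 0$ and $L(P^{*})=0$ this gives Lyapunov stability by the second criterion.

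The step I expect to be the main obstacle — and the reason the statement features $\lambda_A$ rather than a maximum degree — is precisely the choice of weights: the ``obvious'' unweighted function $\sum_i(P_{I,i,t}+P_{I_D,i,t})$ only yields a degree-based threshold, and one genuinely needs the Perron eigenvector to turn $v^{T}\mathds{A}$ into $\lambda_A v^{T}$ and extract the spectral radius. A secondary subtlety worth flagging in the write-up is that $L$ vanishes on the entire disease-free manifold $\{\mathbf P_I=\mathbf P_{I_D}=0\}$ rather than at the single point $P^{*}$; this is exactly why the argument delivers Lyapunov but not asymptotic stability, consistent with the eigenvalue-equal-to-one computation of Appendix~\ref{sec:appB}.
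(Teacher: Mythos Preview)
Your proof is correct and takes a genuinely different route from the paper's. The paper uses precisely the \emph{unweighted} Lyapunov function $L(P)=\sum_i(P_{I,i}+P_{I_D,i})$ that you flag as yielding only a degree-based threshold: after writing the system in the matrix form $g(P_{t+1})=CP_t+\mathcal{P}_t^{T}BP_t$ and computing the increment, they arrive at the condition $P_I[\tilde\beta P_S\mathds{A}-\mu\mathds{1}]\le 0$, relax it (via $P_S\le 1$, $P_I\le 1$) to the average-degree condition $\tilde\beta d_{\mathrm{ave}}/\mu\le 1$, and only then invoke the spectral lower bound $d_{\mathrm{ave}}\le\lambda_A$ to conclude that $\lambda_A\tilde\beta/\mu\le 1$ is sufficient. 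Your Perron weighting $v^{T}\mathds{A}=\lambda_A v^{T}$ extracts the spectral radius in one stroke and bypasses the degree intermediary entirely; it is tighter and avoids the somewhat delicate step in the paper where the positive and negative terms are bounded in opposite directions when passing from $P_I$ to $\mathds{1}$. The price you pay is the extra hypothesis of irreducibility of $\mathds{A}$ (needed for a strictly positive $v$), which the paper does not assume; on a disconnected graph you would have to argue component-wise. Your closing observation that $L$ vanishes on the whole disease-free manifold, so that one obtains Lyapunov but not asymptotic stability, matches the paper's own conclusion exactly.
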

The proof of Theorem 3 is presented in Appendix~\ref{sec:app2}.

\section{Experimental Results}
\label{sec:experiments}
In this section, we present the reconstruction of WannaCry dynamics from network logs captured with Zeek monitoring tool~\citep{zeek}. Additionally, we show supporting results that confirm that the SIIDR model fits WannaCry traces best. We also present our experiments for parameter estimation, providing the statistics from the posterior distribution of SIIDR transition rates. These results expand the results presented in our previous work where we introduced SIIDR model~\citep{chernikova2022cyber}.
\revision{Moreover, we study the basic reproduction number $R_0$ of the reconstructed attacks to understand its correlation with SPM dynamics (in particular, its propagation speed). We also discuss the  issue of structural and practical identifiabiility of SIIDR parameters which is common in epidemiological modeling. Finally, we experimentally demonstrate that the condition for Lyapunov stability of the disease-free equilibrium point holds when the networks are modeled as arbitrary graphs relaxing homogeneous mixing assumption.}

\subsection{WannaCry Malware Traces}
We obtained realistic WannaCry attack traces by running the malware  in a controlled virtual environment consisting of 51 virtual machines, configured with a version of Windows vulnerable to the EternalBlue SMB exploit. The external traffic generated by the VMs was blocked to isolate the environment and prevent external malware spread. The infection started from an initial victim IP,  and then the attack propagated through the network as the infected IPs began to scan other IPs. In these experiments, WannaCry varied the number of threads used for scanning, which were set to 1, 4 or 8, and the time interval between scans, which was set to 500ms, 1s, 5s, 10s or 20s. Using the combination of these two parameters resulted in 15 WannaCry traces. While running WannaCry with this setup, the log traces were collected with the help of the open source Zeek network monitoring tool. 

\subsection{WannaCry Reconstruction}
To reconstruct the WannaCry dynamics we are using Zeek communication logs where we consider only communication between internal IPs. Since WannaCry attempts to exploit the SMB vulnerability, we label as malicious all the attempts of connections on destination port 445. The first attempt to establish the malicious connection is considered to be the start of the epidemics, and the end corresponds to the last communication event in the network. Each IP trying to establish a malicious connection for the first time at time $t$ is considered infected at time $t$. The cumulative number of infected IPs through time represents the curve of the WannaCry epidemics.

\subsection{WannaCry dynamics}
We show the dynamics of WannaCry variants characterized by different numbers of scanning threads and time between scans in Figure~\ref{fig:wc_reconstruct}. These dynamics represent the cumulative number of infected nodes during the epidemic time. The trace which corresponds to 1 thread and 20s sleeping time wc\_1\_20s has unusual behavior in the dynamics. It has a very small number of infected nodes until the end of the attack, when the infections rapidly increase to the 7 infected nodes at once.  
For all other WannaCry variants we observe that the attack reaches the maximum number of infected nodes quickly and is not able to infect any other nodes for a large time window before the end of the epidemic. These graphs confirm the fact that after an IP enters a recovered state it no longer has an opportunity to get back to susceptible or infected nodes. For modeling and parameter estimation experiments we exclude the time windows after which the number of infections does not change. Additionally, we present the number of contacted and infected IPs in Table~\ref{tab:wc_reconstruct_nums}. Interestingly, the overall percentage of infected nodes is small (around 25\% on average) for all variants. \revision{The possible reason for this is the fact that some of the machines that do not get infected may have immunity to the malware.}
%Another possible reason for the low number of infected nodes is the fact that the attack propagates slowly through the network and might be able to infect more machine in larger observation window.}

\begin{figure*}[ht]
\centering
\includegraphics[width=0.95\textwidth]{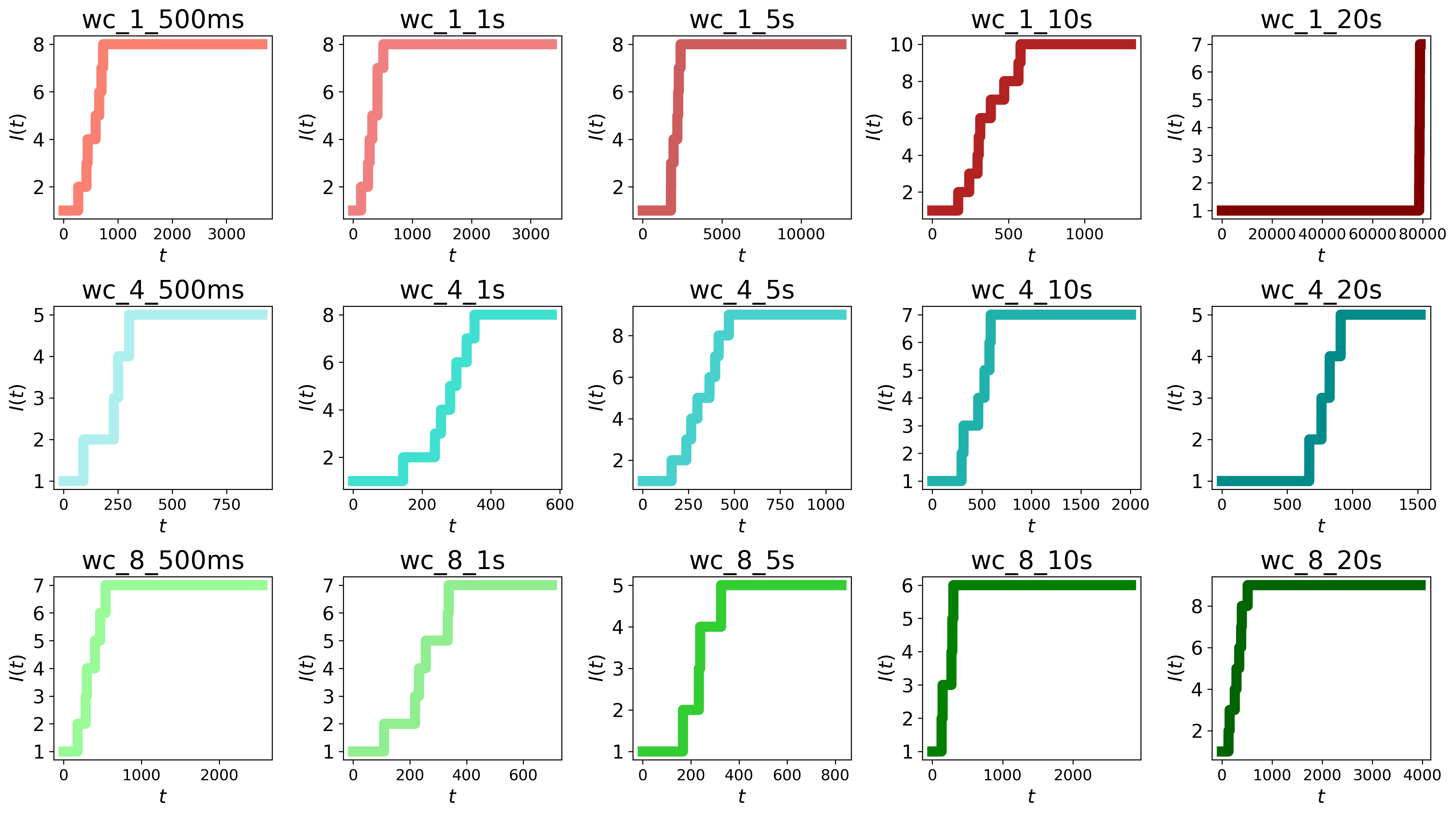}
\caption{The cumulative number of the infected nodes $I(t)$ (counting all nodes in states $I, I_D$ and $R$) at each time point $t$ of WannaCry propagation for different variants of WannaCry. Each WannaCry variant is identified by two parameters: the number of threads used for scanning and the time interval between scans (i.e., wc\_1\_500ms uses 1 thread to scan every 500 ms).}
\label{fig:wc_reconstruct}
\end{figure*}

\begin{table*}[!htpb]
\caption{Number of contacted and Infected IP adresses from communication data for WannaCry modeling.}
\begin{tabular}{|c||c|c|c|}
\hline
\textbf{WannaCry} &  \textbf{\# Contacted IPs} & \textbf{\# Infected IPs} & \textbf{Fraction of}\\ 
\textbf{Variant}&&&\textbf{Infected IPs}\\
\hline
wc\_1\_500s&37&8&0.22\\
wc\_1\_1s&37&8&0.22\\
wc\_1\_5s&37&8&0.22\\
wc\_1\_10s&34&10&0.29\\
wc\_1\_20s&35&7&0.20\\
wc\_4\_500ms&34&5&0.15\\
wc\_4\_1s&35&8&0.23\\
wc\_4\_5s&36&9&0.25\\
wc\_4\_10s&35&7&0.20\\
wc\_4\_20s&35&5&0.14\\
wc\_8\_500ms&35&7&0.20\\
wc\_8\_1s&35&7&0.20\\
wc\_8\_5s&35&5&0.14\\
wc\_8\_10s&36&6&0.17\\
wc\_8\_20s&35&9&0.26\\
\hline
\end{tabular}
\label{tab:wc_reconstruct_nums}
\end{table*}
\subsection{Model Selection}
\revision{We select the model that fits WannaCry traces best among several representative compartmental epidemiological models: SI, SIS, SIR, SEIR and SIIDR assuming an homogenous mixing of machines.} \revision{These models have different number of parameters and, therefore, different a-priori explaining power. The SIIDR model is also the one that has the largest number of parameters. To allow for a fair comparison among models, we considered the Akaike Information Criterion (AIC) as a metric to measure their performance. The AIC is calculated based on the maximum likelihood estimate and the number of free model parameters, thus, allowing comparison of models with different number of parameters.}
%For these purposes, we use the Akaike Information Criterion (AIC) which is calculated based on the maximum likelihood estimate and the number of free model parameters~\citep{akaike1973information}.
More information about AIC criteria can be found in Appendix~\ref{sec:fitting_aic}. 
We perform model selection for all WannaCry traces. The lowest AIC score corresponds to the best model. We run the experiments on an uniform grid of model parameter values between 0 and 1. We select the lowest AIC score for each WannaCry trace and each compartmental model. The results are illustrated in Table~\ref{tab:aic_scores}.
\revision{The SIIDR model has the lowest AIC score for all traces except for wc\_1\_20s. For instance, the AIC score associated with the SEIR model for wc\_8\_5s WannaCry trace is equal to -87, the SIS model score is 104, the SIR model score is -35, whereas for the SIIDR model the AIC is the lowest and has the value of -121. This trend is valid for all other WannaCry traces except for wc\_1\_20s where the SEIR model provides the best fit. However, this variant is an outlier. Therefore, we can conclude that, among the four epidemiological models, the SIIDR model fits the WannaCry attack traces best.} 

For each compartmental model and each WannaCry trace, we plot the reconstruction curve of the number of infected nodes using the parameters corresponding to the lowest AIC score along with the true dynamics of infected nodes.
The results are shown in Figure ~\ref{fig:AIC_all}. 
\revision{In the case of the SIS model, the orange line (representing the simulated dynamics of the number of infected nodes) is far from the blue one, which illustrates the empirical dynamic for all malware traces. In the case of the SIR and SEIR models the numbers of simulated infections are closer to the real ones, however, the SIIDR and actual dynamics curves are the closest.}

\begin{comment}

\bgroup
\setlength\tabcolsep{3pt}
\begin{table*}[t] 
\caption{AIC scores for each of the SPM models for different WannaCry variants. Lower is better.}
\begin{tabular}[t]{|c||c|c|c|c|}
\hline
\textbf{WannaCry} &\bf SIS&\bf SIR&\bf SEIR &\bf SIIDR\\
\textbf{variant} & &&&\\
\hline
wc\_1\_500ms & 143& 114&-16& \textbf{-132}\\
wc\_1\_1s &188 & 145&-45&\textbf{-122}\\
wc\_1\_5s& 163& 143&128& \textbf{57}\\
wc\_1\_10s &197 &53&71&\textbf{-76}\\
wc\_1\_20s &559 &696 &\textbf{-63}&700 \\
wc\_4\_500ms&76&-45 &-133&\textbf{-172}\\
wc\_4\_1s &160 & 107&27&\textbf{-103}\\
wc\_4\_5s &186&158&79&\textbf{-62}\\
\hline
\end{tabular}
\quad
\begin{tabular}[t]{|c||c|c|c|c|}
\hline
\textbf{WannaCry}  &\bf SIS &\bf SIR &\bf SEIR &\bf SIIDR\\
\textbf{variant} & &&&\\
\hline
wc\_4\_10s &94&-36&-92&\textbf{-150}\\
wc\_4\_20s &76&11& -116&\textbf{-120}\\
wc\_8\_500ms&101&18&\textbf{-147}&-117\\
wc\_8\_1s &91& 51&-33&\textbf{-135}\\
wc\_8\_5s &104&-35&-83&\textbf{-108}\\
wc\_8\_10s  &74&-90&\textbf{-120}&-102\\
wc\_8\_20s &164&173&114&\textbf{-61}\\
\hline
\end{tabular}
\label{tab:aic_scores}
\end{table*}
\egroup

\end{comment}

\bgroup
\setlength\tabcolsep{3pt}
\begin{table*}[t] 
\caption{\revision{AIC scores for each of the SPM models for different WannaCry variants. Lower is better.}}
\begin{tabular}[t]{|c||c|c|c|c|}
\hline
\textbf{WannaCry} &\bf SIS &\bf SIR&\bf SEIR &\bf SIIDR\\
\textbf{variant} & &&&\\
\hline
wc\_1\_500ms & 143&114&-8 &\textbf{-126}\\
wc\_1\_1s &188 & 145&-10&\textbf{-127}\\
wc\_1\_5s&163& 143&121 &\textbf{72}\\
wc\_1\_10s & 197 &53& 69&\textbf{-92}\\
wc\_1\_20s&559&696&\textbf{-63} &700 \\
wc\_4\_500ms &76&-45& -143&\textbf{-166}\\
wc\_4\_1s &160 & 107 &-17 &\textbf{-55}\\
wc\_4\_5s&186&158&28 &\textbf{-46}\\
\hline
\end{tabular}
\quad
\begin{tabular}[t]{|c||c|c|c|c|}
\hline
\textbf{WannaCry} &\bf SIS &\bf SIR&\bf SEIR &\bf SIIDR\\
\textbf{variant} & &&&\\
\hline
wc\_4\_10s &94&-36&-78&\textbf{-145}\\
wc\_4\_20s &76&11 & -26&\textbf{-117}\\
wc\_8\_500ms&101&18& -120&\textbf{-147}\\
wc\_8\_1s &91& 51&-99&\textbf{-116}\\
wc\_8\_5s &104&-35&-87 &\textbf{-121}\\
wc\_8\_10s  &74&-90&-92 &\textbf{-118}\\
wc\_8\_20s  &164&173&105&\textbf{-89}\\
\hline
\end{tabular}
\label{tab:aic_scores}
\end{table*}
\egroup

\begin{figure*}[!htbp]
\centering
\includegraphics[width=0.95\textwidth]{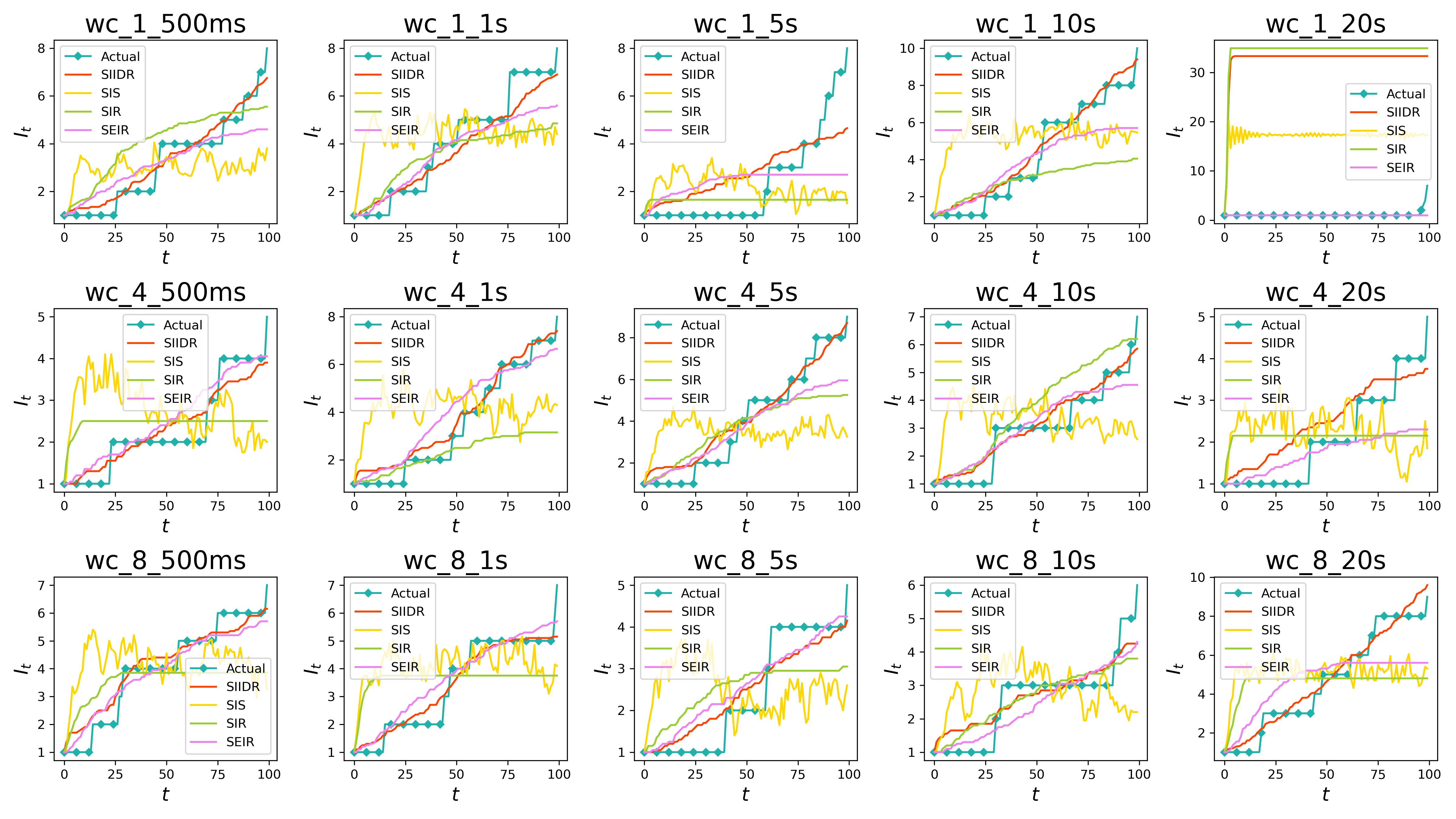}
\caption{Model fitting for different WannaCry variants.}
\label{fig:AIC_all}
\end{figure*}

\subsection{Parameter estimation}
We approximated the posterior distribution of SIIDR transition rates using the ABC-SMC-MNN technique~\citep{filippi2013optimality}. The details of this technique are described in Appendix`~\ref{sec:fitting_smc}.
The mean values and standard deviation of the posterior distribution of SIIDR transition rates ($\beta$, $\mu$, $\gamma_1$, $\gamma_2$) are represented in Table~\ref{tab:siidrparams}. The parameter $dt$ is the integration step, which is calculated as:  $dt = (t_N - t_0) /T$, where $t_N$ is the last timestamp, $t_0$ is the first timestamp, and $T$ is the number of timestamps in WannaCry traces. $dt$ differs by variant due to the different propagation speeds. The attack transmission probability $\tilde{\beta}$ is related to attack transmission rate $\beta$ as follows: $\beta = \tilde{\beta} \langle k \rangle_t$ where $\langle k \rangle_t$ is the average contact rate per unit time. In the WannaCry traces we have one communication or contact per $dt$, hence, the transmission probability $\tilde{\beta}$ over a contact-link also equals $\beta$. 

Based on estimated values of transition rates we calculated the basic reproduction number $R_0$ for all WannaCry traces. We also calculate the SPM propagation speed for all WannaCry traces as the average number of new infections per 100 seconds. The results are illustrated in Figure~\ref{fig:tracesR0}. As expected, we observe that higher SPM propagation speed corresponds to a higher basic reproduction number $R_0$.

The mean values of the parameters' posterior distribution can be further used to simulate SPM with the SIIDR model. This provides an opportunity to create synthetic, but realistic, WannaCry scenarios and evaluate whether existing defenses are successful in preventing and stopping the malware from propagation in the networks. However, we notice that some of the WannaCry attack variants affect only a small number of nodes. For example, the wc\_8\_5s trace has only 4 infected nodes at the end of the trace which constitutes 14\% of all nodes. Consequently, ABC-SMC-MNN is expected to perform worse in the estimation of transition rates for such traces. Thus, parameters estimated from the traces with higher numbers of infections are more reliable.

\begin{table*}[!t]
\caption{Statistics from posterior distribution of SIIDR parameters estimated with the ABC-SMC-MNN method.}
\begin{tabular}[t]{|c||c|c|c|c||c|}
\hline
\textbf{WannaCry} &$\beta$
&$\mu$ &$\gamma_1$&$\gamma_2$
&$dt$\\
\textbf{variant} &(mean, std)&(mean, std)&(mean, std)&(mean, std)&\\
\hline
wc\_1\_500ms &(0.16, 0.10) &(0.11, 0.11) &(0.79, 0.15) &(0.06, 0.07) &0.09\\
wc\_1\_1s & (0.16, 0.11) &(0.11, 0.10) & (0.80, 0.15) &(0.06, 0.06) &0.06\\
wc\_1\_5s &(0.05, 0.03) & (0.04, 0.03)&(0.82, 0.12) &(0.02, 0.01)&0.16\\
wc\_1\_10s &(0.13, 0.08) & (0.08, 0.07)&(0.80, 0.15) & (0.05, 0.04)&0.09\\
wc\_1\_20s &(0.22, 0.20) & (0.63, 0.24)&(0.46, 0.28)&(0.51, 0.29)& 0.99\\
wc\_4\_500ms & (0.45, 0.26) &(0.66, 0.24)& (0.53, 0.28)& (0.47, 0.29)&0.05\\
wc\_4\_1s &(0.17, 0.13) & (0.11, 0.13)&(0.79, 0.17) & (0.07, 0.07)&0.05\\
wc\_4\_5s &(0.14, 0.10) & (0.09, 0.08)& (0.76, 0.18)& (0.07, 0.07)&0.07\\
wc\_4\_10s &(0.20, 0.17) & (0.23, 0.20)& (0.74, 0.20)& (0.07, 0.08)&0.10\\
wc\_4\_20s &(0.43, 0.26) & (0.65, 0.24) &(0.50, 0.29) & (0.51, 0.29)&0.14 \\
wc\_8\_500ms &(0.17, 0.14) &(0.16, 0.16)&(0.79, 0.15)&(0.07, 0.08)&0.03\\
wc\_8\_1s &(0.17, 0.14) & (0.16, 0.16)&(0.76, 0.17)& (0.09, 0.09)&0.03\\
wc\_8\_5s &(0.45, 0.27) &(0.63, 0.24)&(0.47, 0.28)&(0.48, 0.29)&0.07 \\
wc\_8\_10s &(0.47, 0.26) & (0.63, 0.24)&(0.49, 0.29)&(0.46, 0.29)&0.06 \\
wc\_8\_20s &(0.14, 0.10) & (0.09, 0.09)& (0.80, 0.15)& (0.07, 0.06)& 0.07\\
\hline
\end{tabular}
\label{tab:siidrparams}
\end{table*}

\begin{figure}[ht]
\centering
\includegraphics[width=0.5\textwidth]{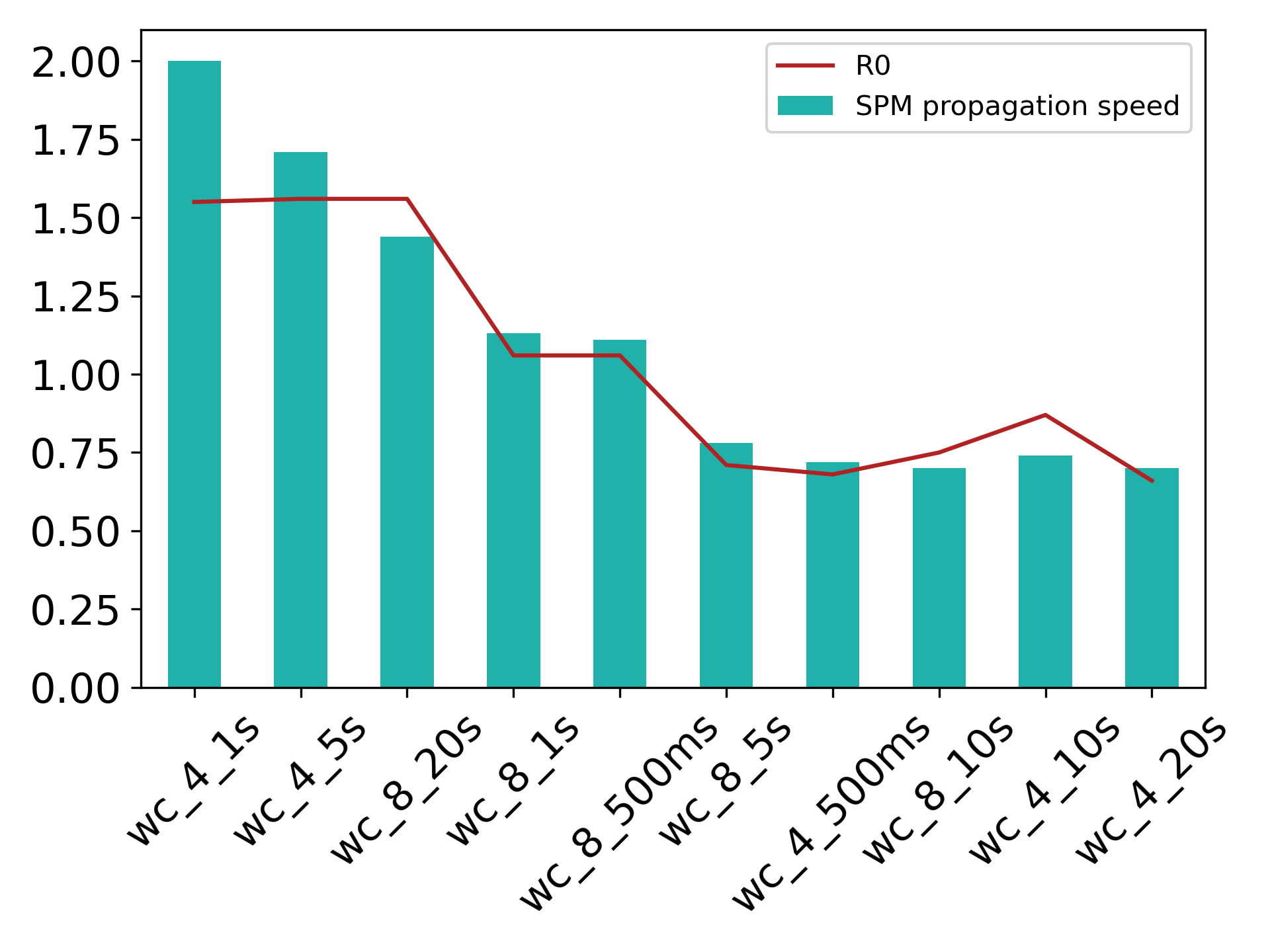}
\caption{The basic reproduction number $R_0$ calculated by using estimated values of transition rates compared to the speed of SPM propagation. Higher propagation speed corresponds to higher $R_0$. We exclude the results for the wc\_1* traces.}
\label{fig:tracesR0}
\end{figure}

\revision{
\subsection{Identifiability of SIIDR transition rates}
\label{sec:siidr_idt}
As long as the goals of modeling with SIIDR include inferences about the underlying propagation process, we are interested in the estimation of SIIDR parameter distribution corresponding to model outputs that best fit the observed data. However, parameters' estimation can only produce robust results if the model is identifiable meaning that it is possible to obtain a unique solution for all unknown parameters given the model structure and output. On the other hand, if parameters are not identifiable their similar values may yield considerably different model outputs~\citep{chis2011structural, tuncer2018structural}.}

\revision{The common problem of data uncertainty forces the issue of parameter identifiability to appear relevant in epidemiological modeling ~\citep{chowell2017fitting, gallo2022lack, weitz2015modeling, valdez2015predicting}. The lack of identifiability in the model parameters may prevent reliable predictions of the epidemic dynamics. Therefore, it becomes crucial to investigate the parameter identifiability, and its limitations and propose solutions to improve it.} 

\revision{There exist notions of structural and practical identifiability. Structural identifiability is a property of the model structure itself given that the model is error-free and the observed data has no noise. 
Practical identifiability is connected to the quality of data leveraged for parameter estimation. It measures whether there is enough information to infer the transition rates~\citep{dankwa2022structural}.}

\revision{We addressed the structural SIIDR parameters identifiability using the method of differential algebra~\citep{chis2011structural,miao2011identifiability} with the help of DAISY~\citep{bellu2007daisy} and SIAN~\citep{hong2020global, ilmer2021web} software and achieved the following result:
\begin{theorem}
    All parameters of the SIIDR model are globally structurally identifiable when incidence represents the output of the model and the size of population $N$ is known. Otherwise, parameters $N$ and $\beta$ appear to be structurally non-identifiable while $\mu, \gamma_1$ and $\gamma_2$ remain identifiable.
\end{theorem}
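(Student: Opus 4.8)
The plan is to prove this via the differential-algebra approach to structural identifiability, following the methodology implemented in DAISY and SIAN, and the structural feature to exploit is that in the SIIDR vector field~(\ref{eq:ode}) the parameters $\beta$ and $N$ never occur separately but only through the ratio $\beta/N$. First I would fix the observable to be the incidence, i.e. the rate of new infections $y=\beta S I/N$ (equivalently $y=-\dot S$), declare $S,I,I_D,R$ as the state variables and $(\beta,\mu,\gamma_1,\gamma_2)$ — together with $N$ in the second case — as the parameters, and then perform differential elimination (Ritt--Kolchin reduction to a characteristic set) to remove the unobserved states. This yields the input--output relation: a differential polynomial equation in $y$ and its derivatives whose coefficients $c_1,\dots,c_m$ are rational functions of the parameters. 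Since $R$ neither appears in the output nor feeds back into the equations for $S,I,I_D$, it can be dropped at the outset, which keeps the elimination tractable.

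For the case in which $N$ is known, I would normalize the input--output equation (make it monic in the highest-ranked derivative) and study the coefficient map $\Phi\colon(\beta,\mu,\gamma_1,\gamma_2)\mapsto(c_1,\dots,c_m)$; global structural identifiability is precisely the injectivity of $\Phi$. I would verify injectivity by inverting the polynomial system coefficient by coefficient: $\gamma_1$ is recovered first because it appears alone in the $\dot I_D$ equation (hence in the coefficients originating from the $I_D$-block of the elimination), then $\mu$ from the combination $\mu+\gamma_1$ that governs the linear decay of $I$, then $\gamma_2$, and finally $\beta$ from $\beta/N$ since $N$ is known. Running DAISY (which returns the characteristic set and tests its coefficients) and SIAN (which assesses local identifiability from the rank of the Jacobian of the power-series coefficients and global identifiability via a Gr\"obner-basis computation) provides an independent certificate that $\Phi$ is injective, giving global structural identifiability of all four parameters.

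For the case in which $N$ is unknown, I would add $N$ to the parameter list and observe that, because the vector field of the reduced $(S,I,I_D)$ subsystem depends on $(\beta,N)$ only through $k:=\beta/N$, so does the entire input--output relation; the only other place $N$ could enter is the conservation identity $S+I+I_D+R=N$, which carries no information here because $R$ is unobserved and decoupled. Consequently, for generic initial data and every $c>0$ the parameter vectors $(\beta,N,\mu,\gamma_1,\gamma_2)$ and $(c\beta,cN,\mu,\gamma_1,\gamma_2)$ produce the same output, so $\beta$ and $N$ are structurally non-identifiable — only $\beta/N$ is — while the recovery of $\gamma_1,\mu,\gamma_2$ above used none of this and hence those three remain globally identifiable; again DAISY and SIAN confirm this split.

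I expect the main obstacle to be the differential-elimination step itself: deriving an explicit, manageable input--output equation for SIIDR is algebraically heavy because the bilinear infection term $\beta S I/N$ makes the characteristic set lengthy, and upgrading local identifiability to a genuinely \emph{global} statement requires checking that the polynomial system relating coefficients to parameters has a unique admissible solution rather than several — precisely where the Gr\"obner-basis engines of DAISY and SIAN are needed. A secondary subtlety to handle carefully is the treatment of initial conditions, since structural identifiability can depend on whether $S(0),I(0),I_D(0)$ are taken as generic unknowns or fixed (e.g. $S(0)=N$, $I_D(0)=0$); the statement should be read as holding for generic initial conditions.
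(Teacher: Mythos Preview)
Your proposal is correct and follows essentially the same route as the paper: the differential-algebra identifiability analysis carried out via DAISY and SIAN. In fact the paper offers no analytical argument at all for this theorem --- its ``proof'' consists of screenshots of DAISY input/output in Appendix~\ref{sec:app_idtp} --- so your write-up, with the explicit observation that $\beta$ and $N$ enter the vector field only through the ratio $\beta/N$ and the step-by-step recovery of $\gamma_1,\mu,\gamma_2$ from the input--output coefficients, supplies the mathematical content that the paper leaves implicit in the software. One minor point of alignment: the paper's DAISY runs take the output to be either $I$ or the cumulative count $I+I_D+R$ rather than the instantaneous incidence rate $y=\beta S I/N$ you use, but since $I+I_D+R=N-S$ and $\dot{(N-S)}=\beta S I/N$, these observables are differentially equivalent and your argument goes through unchanged.
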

}
\revision{Therefore, we consider the SIIDR model to be structurally identifiable as long as the size of the computer networks is usually known. More information about SIIDR structural identifiability along with the results from DAISY software can be found in Appendix~\ref{sec:app_idtp}.}

\revision{However, even when the model parameters are structurally identifiable, they may still be non-identifiable in practice due to the limited number of observed variables, the quality of data used for estimation, and the complexity of the model (the number of parameters that are jointly estimated).}

\revision{To investigate practical identifiability we looked at the joint posterior distribution of SIIDR parameters. The plots can be found in Appendix~\ref{sec:app_idprc}. For some of the WC variants, there exists a correlation between parameters $\beta$ and $\mu$. Additionally, some of the joint posterior distributions possess multimodality. Although on average the issue of non-identifiability is not dominant, it might appear in some parts of the phase space of the SIIDR model. One reason for this behavior is that the incidence represents the output of the fitted model and appears to be insufficient to characterize the whole model's dynamic. 
On the other hand, SIIDR has four parameters estimated jointly, therefore, it may contain multiple sets of parameters that lead to the same output of the model. Hence, measuring the data about other states rather than just the number of infected nodes as a function of time to characterize the system dynamics more extensively, should improve the practical SIIDR identifiability.}

%\revision{The results supporting Theorem 2 can be found in Appendix~\ref{sec:app_idtp}.
%We evaluate practical identifiability of SIIDR transition rates in Section~\ref{sec:exp_idtprct} using Monte-Carlo Simulations~\citep{miao2011identifiability}. 

\subsection{Threshold Evaluation}
In this section, we evaluate the conditions of SIIDR model equilibrium points to satisfy the Lyapunov stability. Specifically, we are interested in the equilibrium point which corresponds to the start of epidemics, when all nodes in the network have the following probability vector to appear in all of the states of SIIDR model $P^* = \{\Vec{1},\Vec{0}, \Vec{0}, \Vec{0}\}$. We study the stability of this point after the infection of the initial node by SPM (i.e., the system initial state $P_0$ lies in the ball of radius $\delta$ around $P^*$) by looking at the density of recovered nodes w.r.t to the stability threshold $s$ and associated infection propagation dynamics $P_t$.

We evaluate stability conditions on the variety of synthetic and real-world networks described in the following subsection.

\subsubsection{Graphs Characteristics}
We consider synthetic networks generated with Barabási–Albert (BA)~\citep{barabasi1999emergence}, Erdős–Rényi (ER)~\citep{erdds1959random}, Watts–Strogatz (WS)~\citep{watts1998collective}, Configuration Model (CM)~\citep{Newman_2003}, and Scale-free (SF)~\citep{sf} models along with three real-world graphs~\citep{leskovec2005graphs, leskovec2012learning, leskovec2007graph}. Real-world graphs include networks generated using Facebook data (Facebook), autonomous systems peering information inferred from Oregon route views (Oregon), and anonymized traffic data about incoming and outgoing emails between members of the European research institution (Email).
All synthetic graphs have $1000$ nodes and different topological characteristics. Thus, ER graphs have different leading eigenvalues that range from $11$ to $999$, BA networks have the leading eigenvalue between $35$ and $508$, and WS graphs - between $10$ and $900$. ER, BA, and WS networks have only one connected component. They have a larger diameter and average path length, and smaller density and transitivity in the graphs with smaller leading eigenvalues. CM and SF networks have more connected components and the values of other topological characteristics are similar to ER, BA, and WS graphs with small leading eigenvalues.

More details about the topological characteristics of considered networks are presented in Table~\ref{tab:graphs}.

\begin{table*}[!t]
\caption{Topological Characteristics of Graphs Generated for the Stability Condition Evaluation.ER is Erdős–Rényi graph, BA is Barabási–Albert graph, WS is Watts–Strogatz graph, CM is Configuration Model graph, SF is scale-free graph. Dm, T, Dn is the diameter, transitivity, and density of the graph correspondingly.}
\begin{tabular}[t]{|c||c|c|c|c|c|c|c|}
\hline
\textbf{Graph} & \textbf{Number}& \textbf{Number} & $\lambda_A$ & \textbf{Dm} & \textbf{T} & \textbf{Dn} & \textbf{Avg. Path} \\
& \textbf{of Nodes} & \textbf{of Edges}&&&&&\textbf{Length} \\
\hline
ER& 1000& 5054 & 11& 5& 0.01&0.005&3.2\\
ER&1000&49304&100&3&0.1&0.05& 1.9\\
ER&1000&249540&500&2&0.5&0.25&1.5\\
ER&1000&499500&999&1&1&0.5&1 \\
\hline
BA&1000&9900&35&4&0.06&0.01&2.6\\
BA&1000&47500&130&3&0.17&0.05& 1.9\\
BA&1000&90000&222&2&0.27&0.09& 1.8\\
BA&1000&187500&508&2&0.5&0.19& 1.6\\
\hline
WS&1000&5000&10&7&0.48&0.005&4.4\\
WS&1000&50000&100&3&0.56&0.05&2.0\\
WS&1000&250000&500&2&0.63&0.25&1.5\\
WS&1000&299500&900&1&1&0.5&1\\
\hline
CM&1000&995&9&21&0.01&0.001&6.6\\
SF&1000&2165&22&7&0.03&0.002&3.2\\
\hline
Email &265214&365570&103&14&0.004&0.00001& 4.1\\
Facebook &4039&88234&162&8&0.52&0.005&3.7\\
Oregon &11174&23409&60&10&0.01&0.0002& 3.6\\
\hline
\end{tabular}
\label{tab:graphs}
\end{table*}

\subsubsection{Phase Transition}

To illustrate the results of Theorem 2 we plot the final number of recovered nodes in the network with respect to the threshold values $s = \lambda_A * \beta/ \mu$ in the range from 0 to 2. We achieve these results by fixing the transition rates $\mu = 0.5, \gamma_1 = 0.5, \gamma_2 = 0.5$ and changing the value of $\beta$. For ER, BA and WS graphs infection propagation starts from one initially infected node, for SF, CM and real-world networks the fraction of infected nodes at t = 1 is $0.05$. We average results over 100 stochastic realizations that we run considering 50 different seeds. Resulting phase transition plots are illustrated in Figures~\ref{fig:pt_er},~\ref{fig:pt_ba},~\ref{fig:pt_ws}, and~\ref{fig:pt_real}.

For all types of graphs, the total fraction of recovered nodes is negligible for values of $s<1$. As predicted by the theory, the epidemic threshold is $s\sim 1$. In the case of  SF, CM, and real networks (see Figure~\ref{fig:pt_real}), the threshold appears to be for $s<1$. However, we note how in order to obtain macroscopic outbreaks in these graphs, we started the simulations with $5\%$ of initially infected seeds, instead of a single one as done for the other networks. Hence, also for these networks, the phase transition takes place for $s\sim 1$.

In general, networks with larger diameters and average path lengths, smaller density, and transitivity have a smaller fraction of recovered nodes during the infection propagation. 

These results demonstrate that for all $t$ the solution $P_t$ stays in some ball of radius $\epsilon$ from the starting equilibrium point $P^* = P_0$ when $s < 1$, therefore, it is Lyapunov stable. 
Moreover, we see that SIIDR behaves the same as the SIR model in terms of the stability of equilibrium points: when the threshold $s$ is less than one the SIIDR system solution converges to DFE when $t$ tends to infinity. It can be explained by the fact that SIIDR model is very similar to a SIR model except for the particular configuration of transition rates.

\begin{figure}[ht]
\centering
\includegraphics[width=0.5\textwidth]{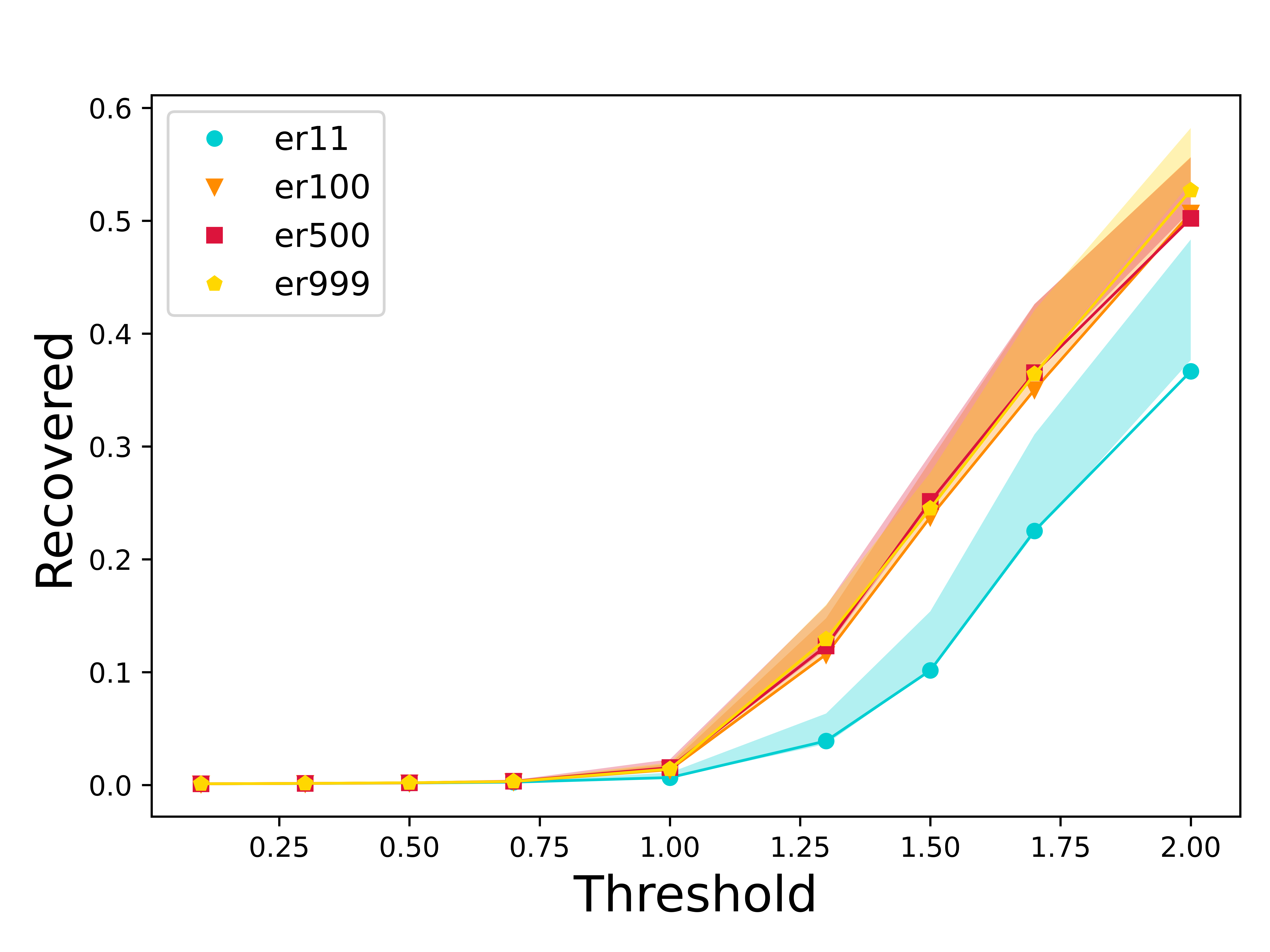}
\caption{The mean value of recovered nodes $R$ with 50\% and 95\% reference ranges obtained from numerical simulations of the SIIDR model on Erdős–Rényi networks with respect to threshold $\lambda_1 * \beta / \mu$ value.}
\label{fig:pt_er}
\end{figure}

\begin{figure}[ht]
\centering
\includegraphics[width=0.5\textwidth]{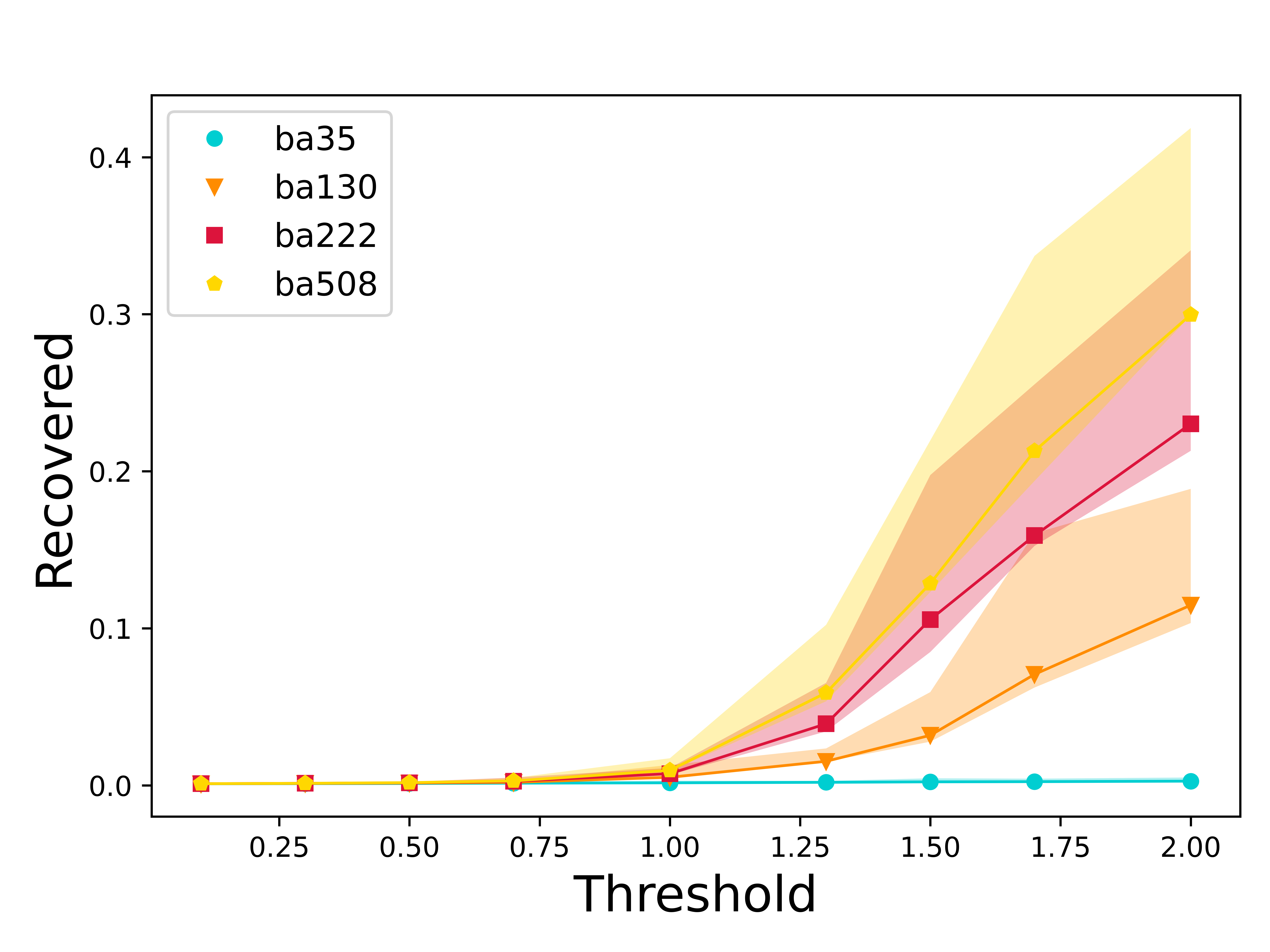}
\caption{The mean value of recovered nodes $R$ with 50\% and 95\% reference range obtained from numerical simulations of the SIIDR model on Barabási–Albert networks with respect to threshold $\lambda_1 * \beta / \mu$ value.}
\label{fig:pt_ba}
\end{figure}

\begin{figure}[ht]
\centering
\includegraphics[width=0.5\textwidth]{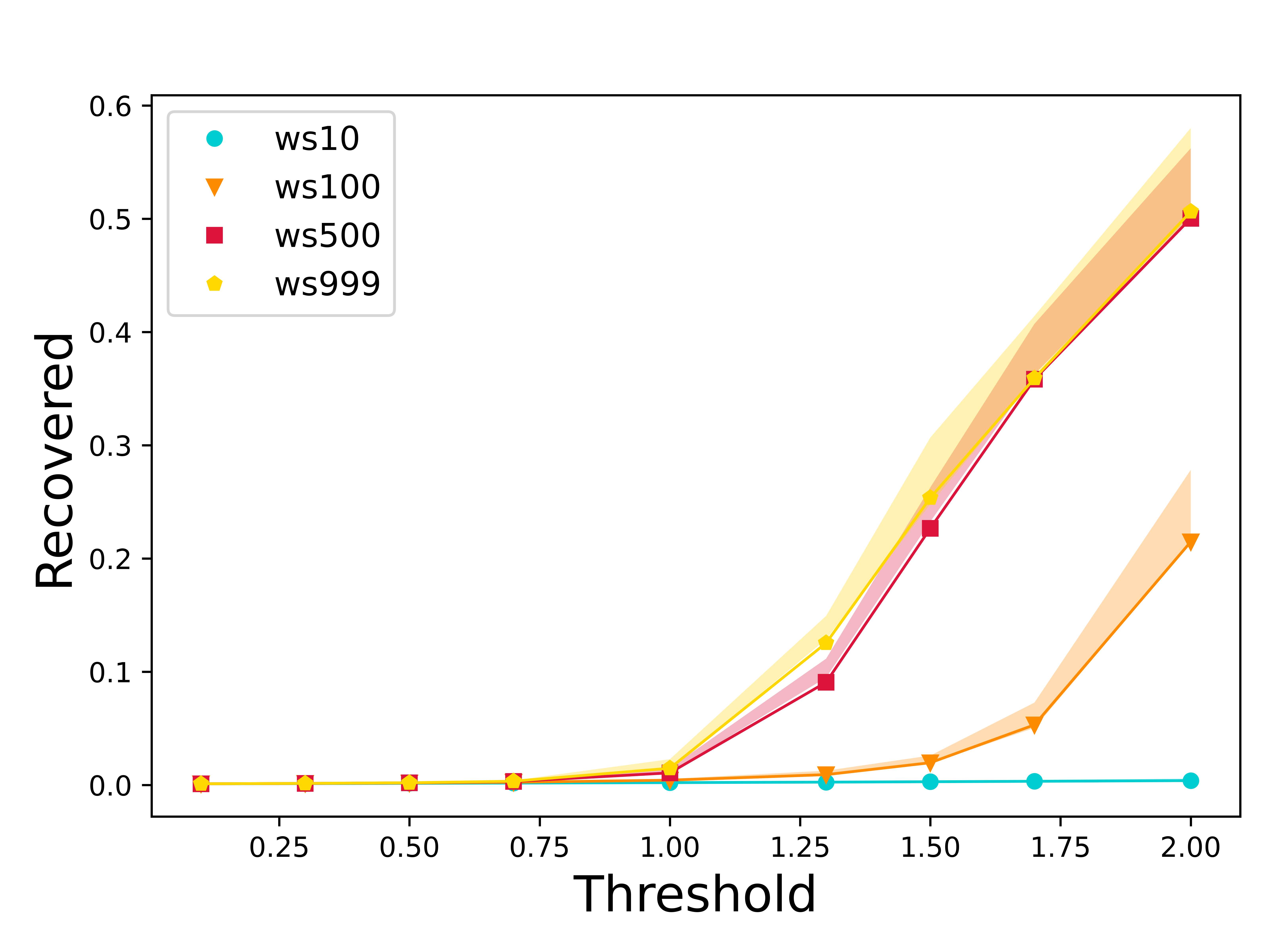}
\caption{The mean value of recovered nodes $R$  with 50\% and 95\% reference range obtained from numerical simulations of the SIIDR model on Watts–Strogatz networks with respect to threshold $\lambda_1 * \beta / \mu$ value.}
\label{fig:pt_ws}
\end{figure}

\begin{figure}[ht]
\centering
\includegraphics[width=0.5\textwidth]{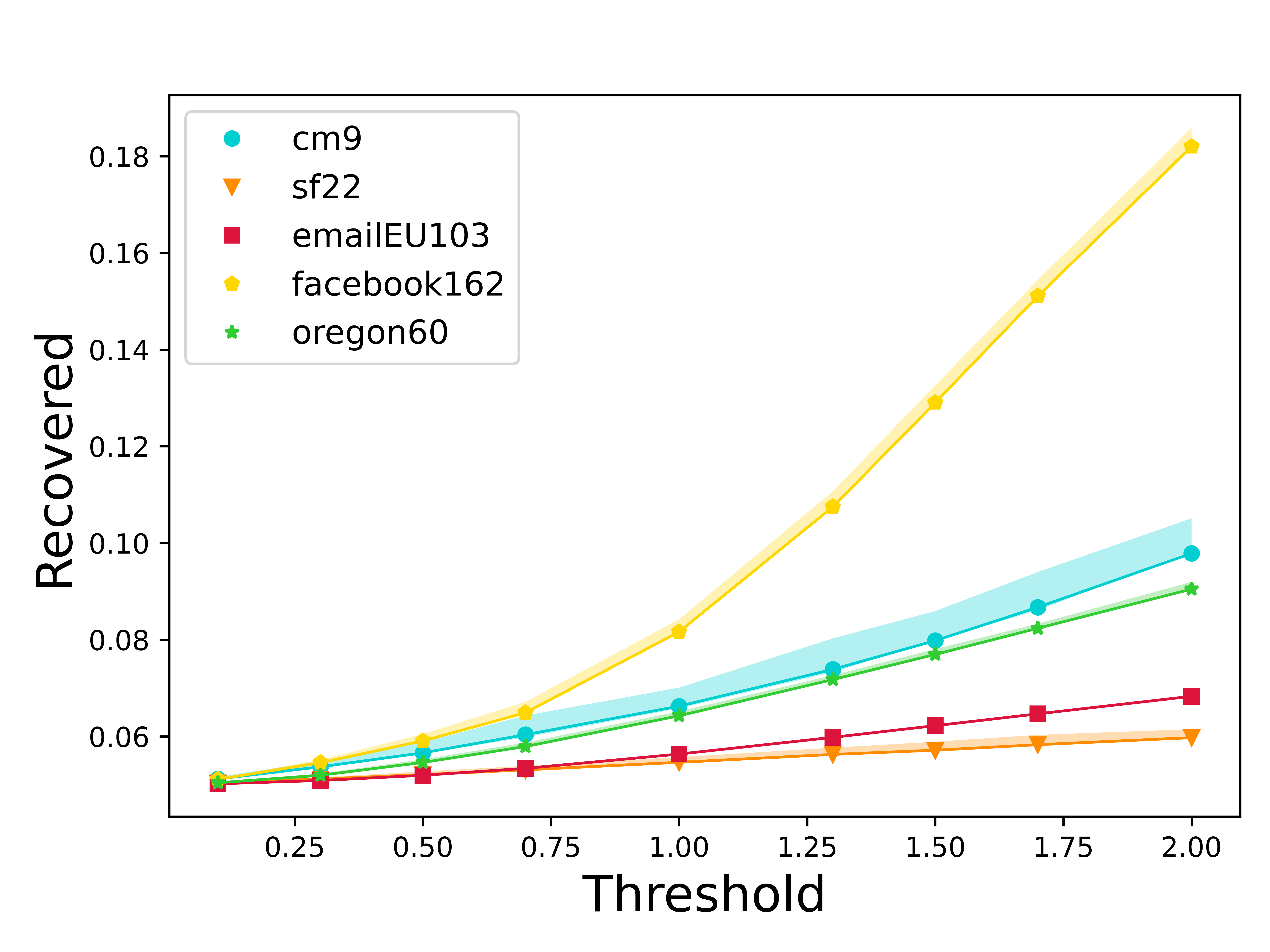}
\caption{The mean value of recovered nodes $R$ with 50\% and 95\% reference range obtained from numerical simulations of the SIIDR model on real-world networks along with scale-free and configuration model graphs with respect to threshold $\lambda_1 * \beta / \mu$ value.}
\label{fig:pt_real}
\end{figure}

\section{Conclusions}
\label{sec:conclusions}
We performed a comprehensive analysis of a new compartmental model, SIIDR, that captures the behavior of self-propagating malware. We showed that SIIDR fits real-world WannaCry traces much better than existing compartmental models such as SI, SIS, SIR, and SEIR (which were previously studied in the literature). Additionally, we estimated the posterior distribution of the model's parameters for real attack traces and showed how they characterize the WannaCry behavior. We also analytically derived the conditions when SPM is expected to become an epidemic and discussed the stability of model's disease-free equilibrium points. Our work demonstrates the impact of modeling the propagation of SPM, simulating real attacks on networks, and evaluating defensive techniques.

\backmatter

%\bmhead{Supplementary information}

\bmhead{Acknowledgments}
We acknowledge Jason Hiser and Jack W.~Davidson from University of Virginia for providing us access to the WannaCry attack traces. 

\section*{Declarations}

\begin{itemize}
\item This research was sponsored by the U.S. Army Combat Capabilities Development Command Army Research Laboratory under Cooperative Agreement Number W911NF-13-2-0045 (ARL Cyber Security CRA). The views and conclusions contained in this document are those of the authors and should not be interpreted as representing the official policies, either expressed or implied, of the Combat Capabilities Development Command Army Research Laboratory or the U.S. Government. The U.S. Government is authorized to reproduce and distribute reprints for Government purposes notwithstanding any copyright notation here on.
\item The authors declare that they have no competing interests.
\item The datasets supporting the conclusions of this article are available in the github repository: 
\url{https://github.com/achernikova/siidr/}. WannaCry data is available from the corresponding author on reasonable request.
\item The code is available in the github repository: \url{https://github.com/achernikova/siidr/}.
\item NG and NP proposed the SIIDR model. AC, NG, and NP contributed to the methodology of the paper. AC and NG performed the experiments. All authors contributed to the discussion and writing of the paper, and approved the final manuscript.
\end{itemize}

\noindent

\begin{comment}
If any of the sections are not relevant to your manuscript, please include the heading and write `Not applicable' for that section.

%%===================================================%%
%% For presentation purpose, we have included        %%
%% \bigskip command. please ignore this.             %%
%%===================================================%%
\bigskip
\begin{flushleft}%
Editorial Policies for:

\bigskip\noindent
Springer journals and proceedings: \url{https://www.springer.com/gp/editorial-policies}

\bigskip\noindent
Nature Portfolio journals: \url{https://www.nature.com/nature-research/editorial-policies}

\bigskip\noindent
\textit{Scientific Reports}: \url{https://www.nature.com/srep/journal-policies/editorial-policies}

\bigskip\noindent
BMC journals: \url{https://www.biomedcentral.com/getpublished/editorial-policies}
\end{flushleft}
\end{comment}

\begin{appendices}

\section{Compartmental Models of Epidemiology}
\label{sec:models}
\subsection{SI model}
The SI model is used to describe diseases where infection is permanent. It features two compartments and one transition. The susceptible compartment $S$ represents healthy individuals that interacting with infectious individuals in the compartment $I$ can get infected ($S + I \rightarrow 2I$). It can be translated in the following system of ODEs:
\begin{equation*}
    \begin{aligned}
    \frac{dS}{dt} &= -\beta S \frac{I}{N} \\
    \frac{dI}{dt} &= \beta S \frac{I}{N} \\
    \end{aligned}
\end{equation*}
Due to the homogeneous mixing assumption, the per capita rate at which susceptible individuals get infected can be written as the probability of interacting with an infected individual ($I/N$) times the transmission rate of the disease $\beta$. 
The state diagram for the SI model is shown in Figure~\ref{fig:SIstates}.
\begin{figure}[ht]
\centering
\includegraphics[width=0.2\textwidth]{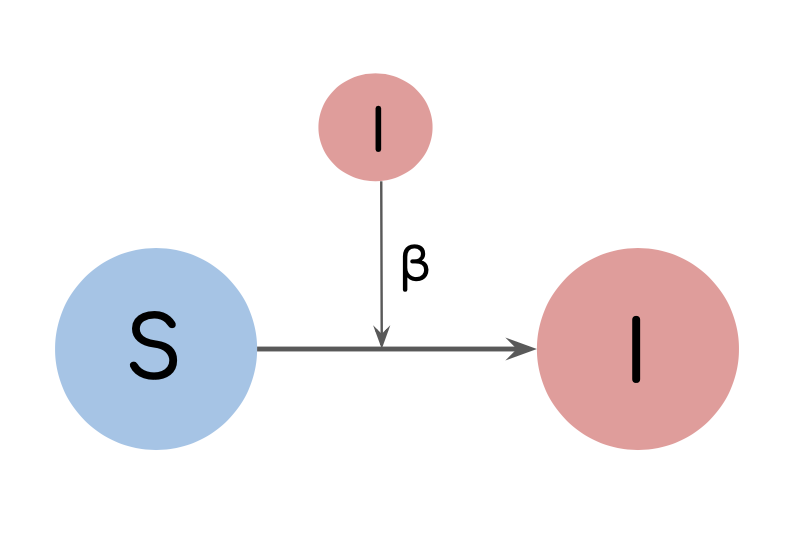}
\caption{Schematic representation of the SI model.}
\label{fig:SIstates}
\end{figure}

\subsection{SIS model}
The SIS model features two compartments and two transitions. Beside the infection process as in the SI model, SIS models have also a recovery process: infected individuals spontaneously recover at rate $\mu$ becoming susceptible to the disease again ($I \rightarrow S$). Hence SIS models are used for diseases that can infect individuals multiple times.
The system of ODEs associated with SIS model is:
\begin{equation*}
\begin{aligned}
    \frac{dS}{dt} &= -\beta S \frac{I}{N} + \mu I\\
    \frac{dI}{dt} &= \beta S \frac{I}{N} - \mu I\\
\end{aligned}
\end{equation*}
Note how, differently from infection, the recovery process is spontaneous and does not require any interaction. Hence, each infected individual has an average duration of infection of $\mu^{-1}$. The state diagram for SIS model is shown in Figure~\ref{fig:SISstates}.
\begin{figure}[ht]
\centering
\includegraphics[width=0.2\textwidth]{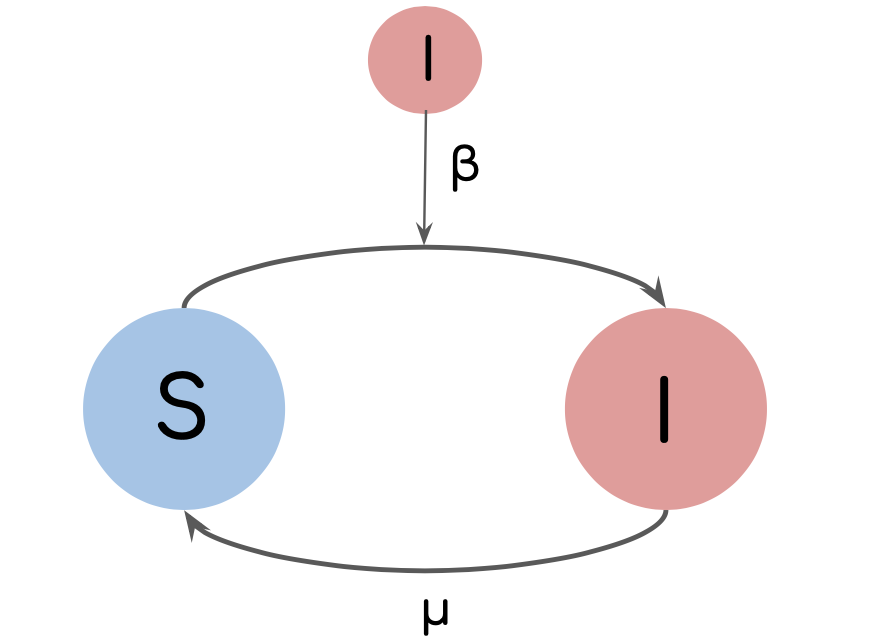}
\caption{Schematic representation of the SIS model.}
\label{fig:SISstates}
\end{figure}

\subsection{SIR model}
The SIR model describes diseases that give permanent (or long-lasting) immunity. It features three compartments and two transitions. Differently from SIS models, within the SIR framework infected individuals that are no longer infectious transition to the recovered compartment $R$. 
The system of differential equations corresponding to the SIR model is the following:
\begin{equation*}
\begin{aligned}
    \frac{dS}{dt} &= -\beta S \frac{I}{N}\\
    \frac{dI}{dt} &= \beta S \frac{I}{N} - \mu I\\
    \frac{dR}{dt} &= \mu I
\end{aligned}
\end{equation*}
The state diagram for the SIR model is represented in Figure~\ref{fig:SIRstates}.
\begin{figure}[ht]
\centering
\includegraphics[width=0.3\textwidth]{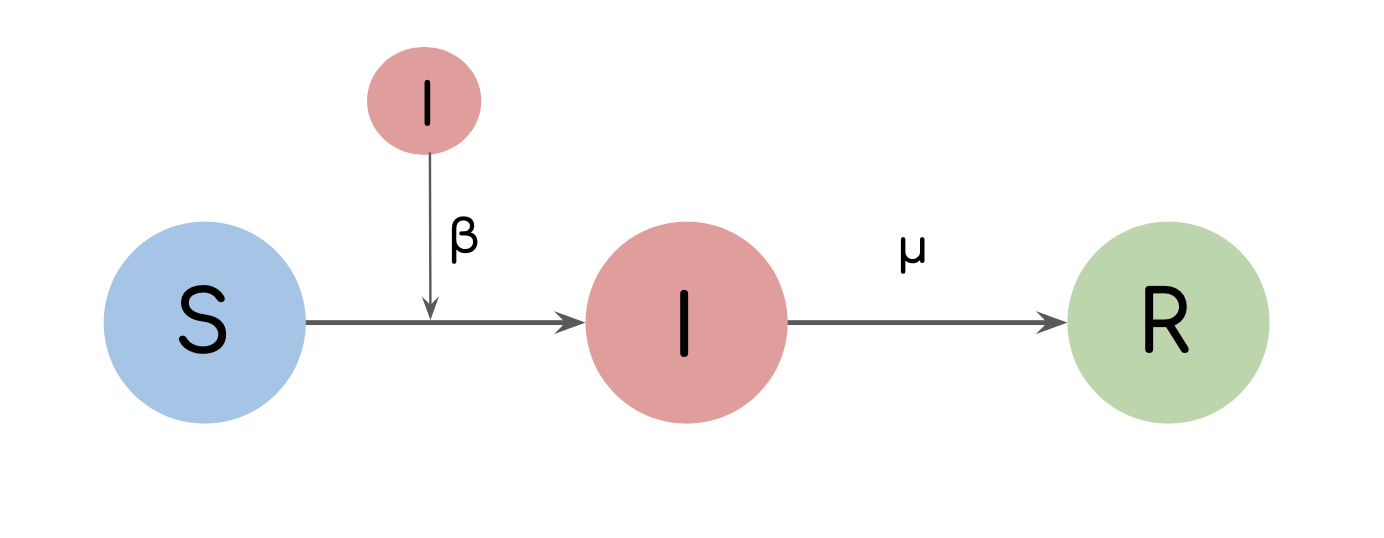}
\caption{Schematic representation of the SIR model.}
\label{fig:SIRstates}
\end{figure}

\revision{
\subsection{SEIR model}
The SEIR model describes diseases where susceptible individuals $S$ remain exposed $E$ after interaction with infected $I$ individual before becoming infectious themselves. It features four compartments and three transitions. 
The system of differential equations corresponding to the SEIR model is the following:
\begin{equation*}
\begin{aligned}
    \frac{dS}{dt} &= -\beta S \frac{I}{N}\\
    \frac{dE}{dt} & = \beta S \frac{I}{N} - \gamma E\\
    \frac{dI}{dt} &= \gamma E - \mu I\\
    \frac{dR}{dt} &= \mu I
\end{aligned}
\end{equation*}
The state diagram for the SEIR model is represented in Figure~\ref{fig:SEIRstates}.
\begin{figure}[ht]
\centering
\includegraphics[width=0.3\textwidth]{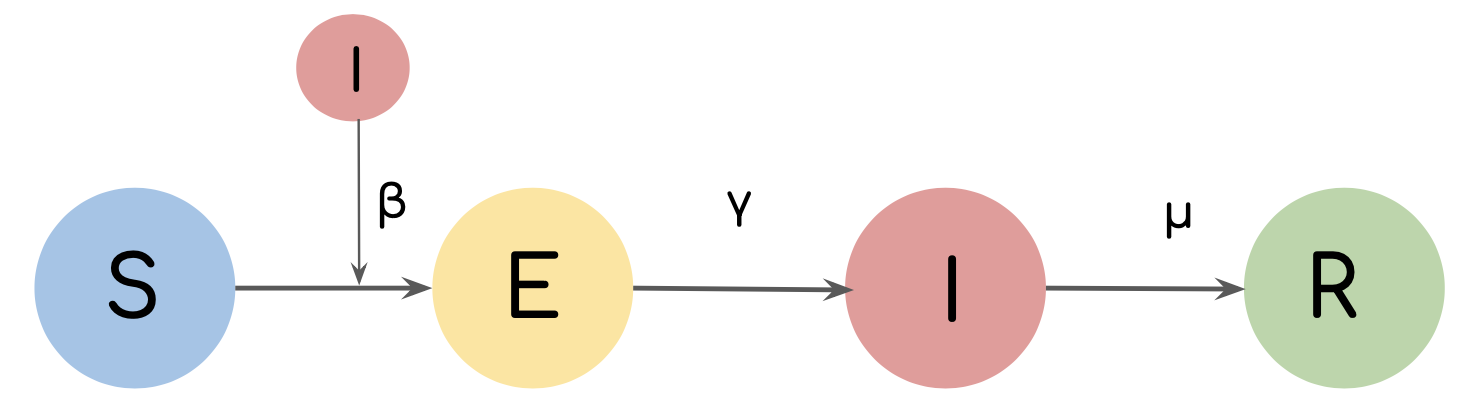}
\caption{Schematic representation of the SEIR model.}
\label{fig:SEIRstates}
\end{figure}
}

\revision{
\section{Identifiability of SIIDR transition rates}
\label{sec:app_idtp}
SIIDR model can be represented as follows:
\begin{equation}
\begin{aligned}
SIIDR :=
\begin{cases}
    \dot X(t) =f(X(t), \theta)\\
    Y(t) = g(X(t), \theta)\\
    X_{0} = X(t_0)\\
\end{cases}
\end{aligned}
\end{equation}
where $t_0 \leq t \leq T$, $\dot X(t)$ is a system of ODEs, $X(t)$ is a vector of time-varying diseases states and the unique solution to the system $\dot X(t)$, $\theta \in \Theta$ is a vector of constant unknown model parameters, $Y(t)$ is a vector of time-dependent model outputs, $g$ is the measurement equation which defines the relationship between $X(t)$, $\theta$ and $Y$, and $X_0$ is a vector of the known initial conditions.
\begin{definition}
A parameter $\theta$ is structurally globally identifiable if  $\forall$ $\theta^* \in \Theta$:
\begin{equation*}
    SIIDR(\theta^*) = SIIDR(\theta) \Rightarrow \theta^* = \theta
\end{equation*}
\end{definition}
\begin{definition}
A parameter  $\theta$ is structurally locally identifiable if $\forall$ $\theta^* \in \Theta$, there exists a neighbourhood $\Omega(\theta)$  such that
\begin{equation*}
\theta^* \in \Omega(\theta) \vee SIIDR(\theta^*) = SIIDR(\theta) \Rightarrow \theta^* = \theta
\end{equation*}
\end{definition}
A variety of methods exists to evaluate the structural and practical identifiability of parameters. In our work, we leveraged the method of differential algebra implemented in DAISY~\citep{bellu2007daisy} and SIAN~\citep{hong2020global, ilmer2021web} software to address the structural identifiability of SIIDR. We looked at the joint posterior distribution of SIIDR parameters to address the issue of practical identifiability. We discuss SIIDR identifiability results in the following subsections.
}
\revision{
\subsection{Differential Algebra Approach for Structural Identifiability}
\label{sec:app_idstr}
In this section, we show the results for structural identifiability of SIIDR parameters achieved with the differential algebra approach implemented in DAISY software. Figures~\ref{fig:SIIDR_daisy1_in},~\ref{fig:SIIDR_daisy1} represent the input and output of the DAISY software when the number of infected nodes is the output variable $Y(t)$. Figures~\ref{fig:SIIDR_daisy2_in},~\ref{fig:SIIDR_daisy2} show the results from DAISY software in the situation when the sum of infected, infected dormant, and recovered nodes is the output variable. When the size of the population $N$ is known, we can exclude it from the ODE equations and consider $\beta = \beta/N$ to be the unknown parameter. In both cases all parameters of the SIIDR model are globally structurally identifiable. Figures~\ref{fig:SIIDR_daisy1_N},~\ref{fig:SIIDR_daisy2_N} show the results when the $N$ is the unknown parameter. In this sutiation, parameters $\beta$ and $N$ are not identifiable, however, $\mu, \gamma_1, \gamma_2$ remain identifiable.}
\begin{figure}[h]
\centering
\includegraphics[width=0.5\textwidth]{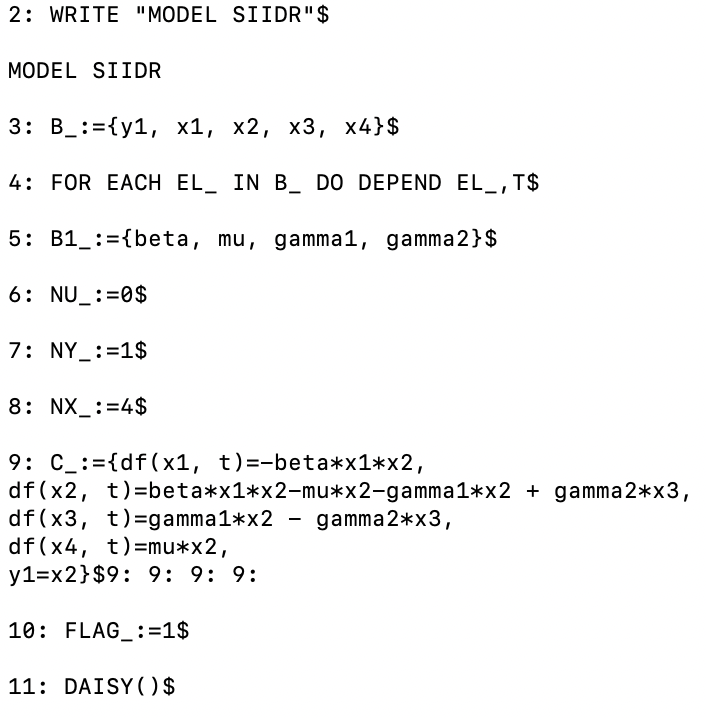}
\caption{Input for DAISY to evaluate the SIIDR structural identifiability of parameters when output is the number of infected individuals.}
\label{fig:SIIDR_daisy1_in}
\end{figure}

\begin{figure}[h]
\centering
\includegraphics[width=0.5\textwidth]{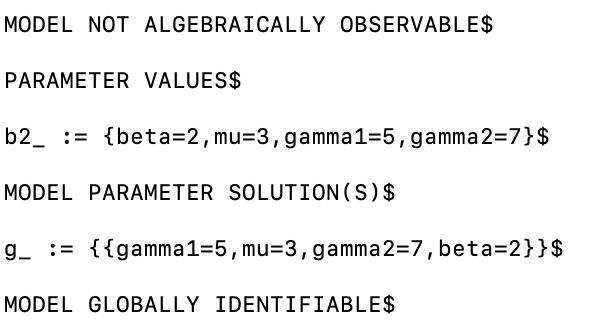}
\caption{SIIDR structural identifiability of parameters achieved with DAISY when output is the number of infected individuals.}
\label{fig:SIIDR_daisy1}
\end{figure}

\begin{figure}[h]
\centering
\includegraphics[width=0.5\textwidth]{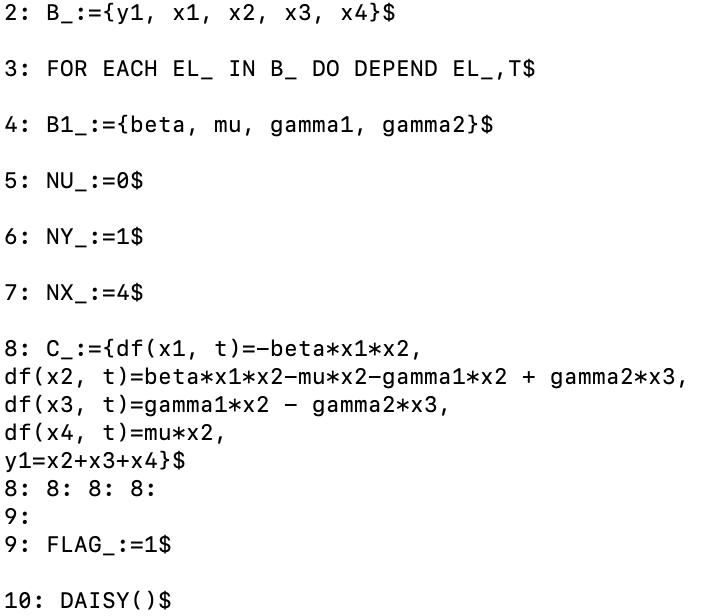}
\caption{Input for DAISY to evaluate the SIIDR structural identifiability of parameters when output is the sum of infected, infected dormant and recovered individuals.}
\label{fig:SIIDR_daisy2_in}
\end{figure}

\begin{figure}[h]
\centering
\includegraphics[width=0.5\textwidth]{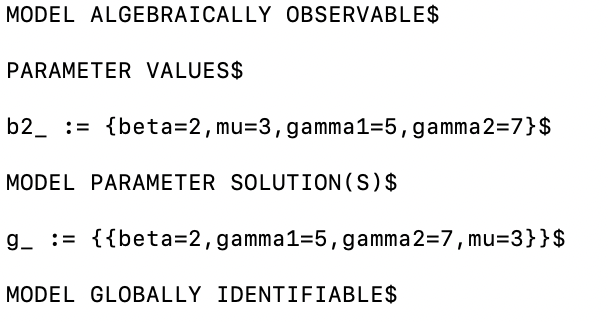}
\caption{SIIDR structural identifiability of parameters achieved with DAISY when output is the sum of infected, infected dormant and recovered individuals.}
\label{fig:SIIDR_daisy2}
\end{figure}

\begin{figure}[h]
\centering
\includegraphics[width=0.5\textwidth]{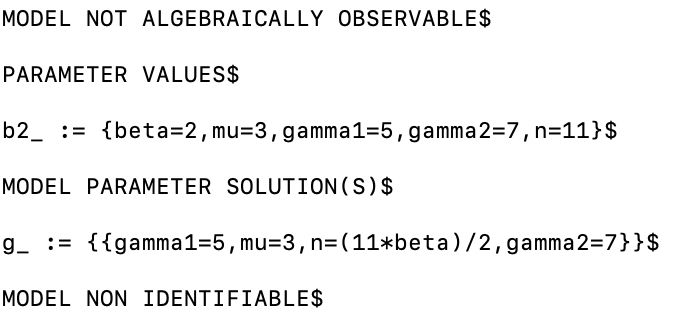}
\caption{SIIDR structural identifiability of parameters achieved with DAISY when output is the sum of infected, infected dormant and recovered individuals and the size of population $N$ is unknown.}
\label{fig:SIIDR_daisy1_N}
\end{figure}

\begin{figure}[h]
\centering
\includegraphics[width=0.5\textwidth]{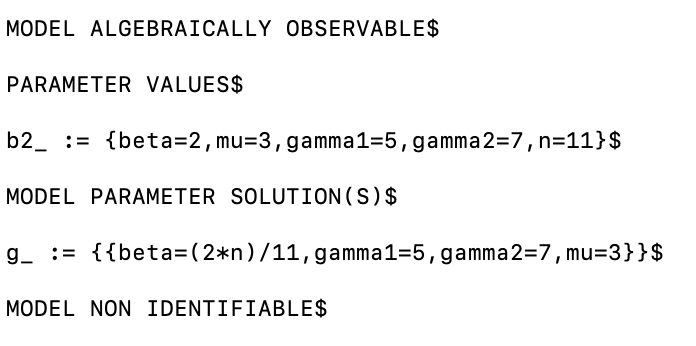}
\caption{SIIDR structural identifiability of parameters achieved with DAISY when output is the sum of infected, infected dormant and recovered individuals and the size of population $N$ is unknown.}
\label{fig:SIIDR_daisy2_N}
\end{figure}

\revision{
\subsection{Joint Posterior Distributions of SIIDR Parameters}
\label{sec:app_idprc}
}
\revision{In this subsection, we illustrate the joint posterior distribution for SIIDR parameters. The plots for wc\_4\_500ms variant are in Figures~\ref{fig:wc4500_1_param},~\ref{fig:wc4500_2_param}. In this case, joint posterior distribution has multiple modes which means that the parameters value are not uniquely identifiable. The results for wc\_8\_20s are illustrated in Figures~\ref{fig:wc820_1_param},~\ref{fig:wc820_2_param}. In this situation, $\beta$ and $\mu$ parameters are correlated. In Figures~\ref{fig:wc15_1_param},~\ref{fig:wc15_2_param} we show the results for wc\_1\_5s variant. The posterior joint distribution of $\beta$ and $\mu$ parameters are not correlated and there is no multimodality.}

\begin{figure}
     \centering
     \begin{subfigure}[b]{0.3\textwidth}
         \centering
         \includegraphics[width=\textwidth]{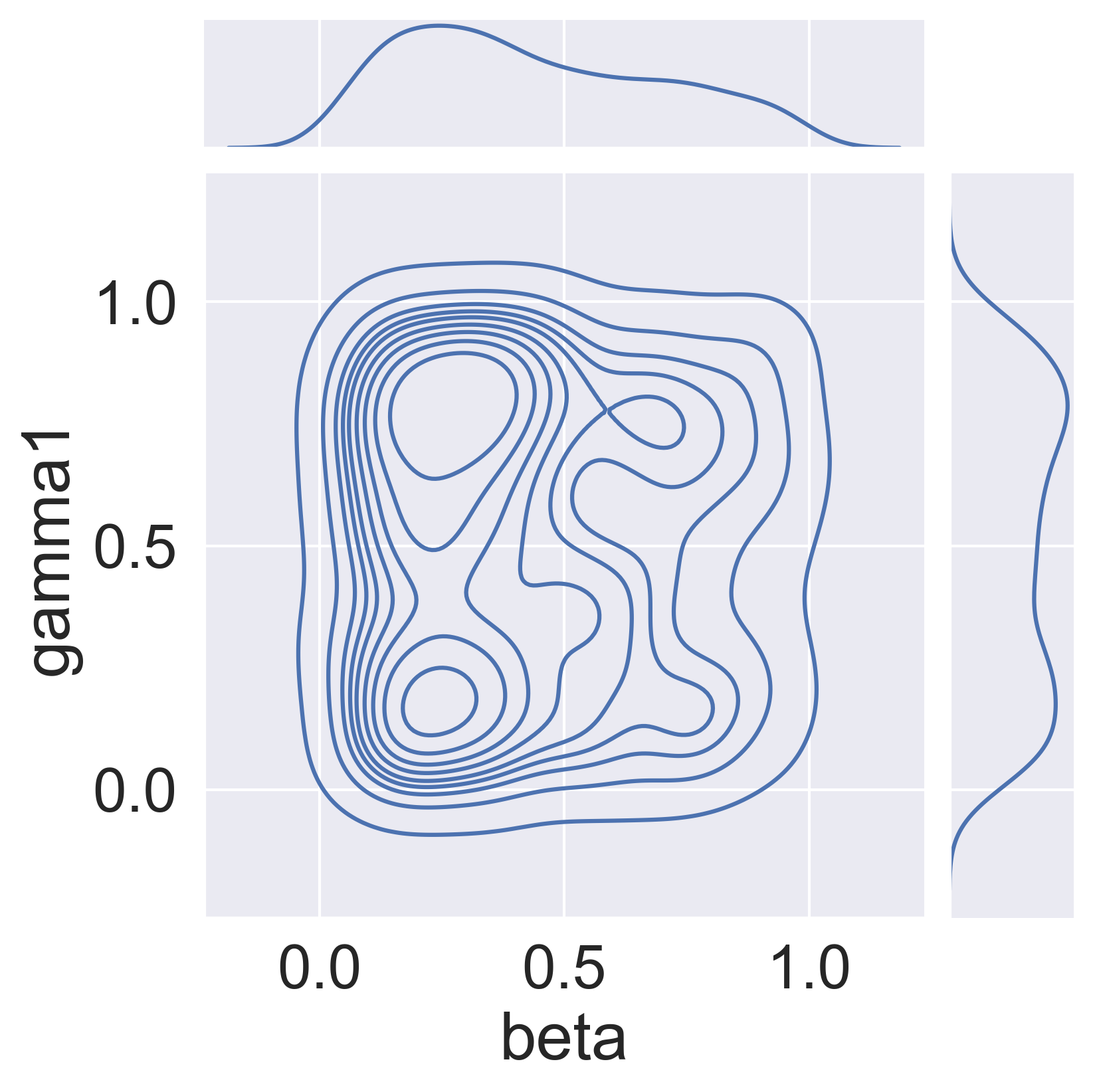}
         \caption{$\beta$ and $\gamma_1$}
     \end{subfigure}
     \hfill
     \begin{subfigure}[b]{0.3\textwidth}
         \centering
         \includegraphics[width=\textwidth]{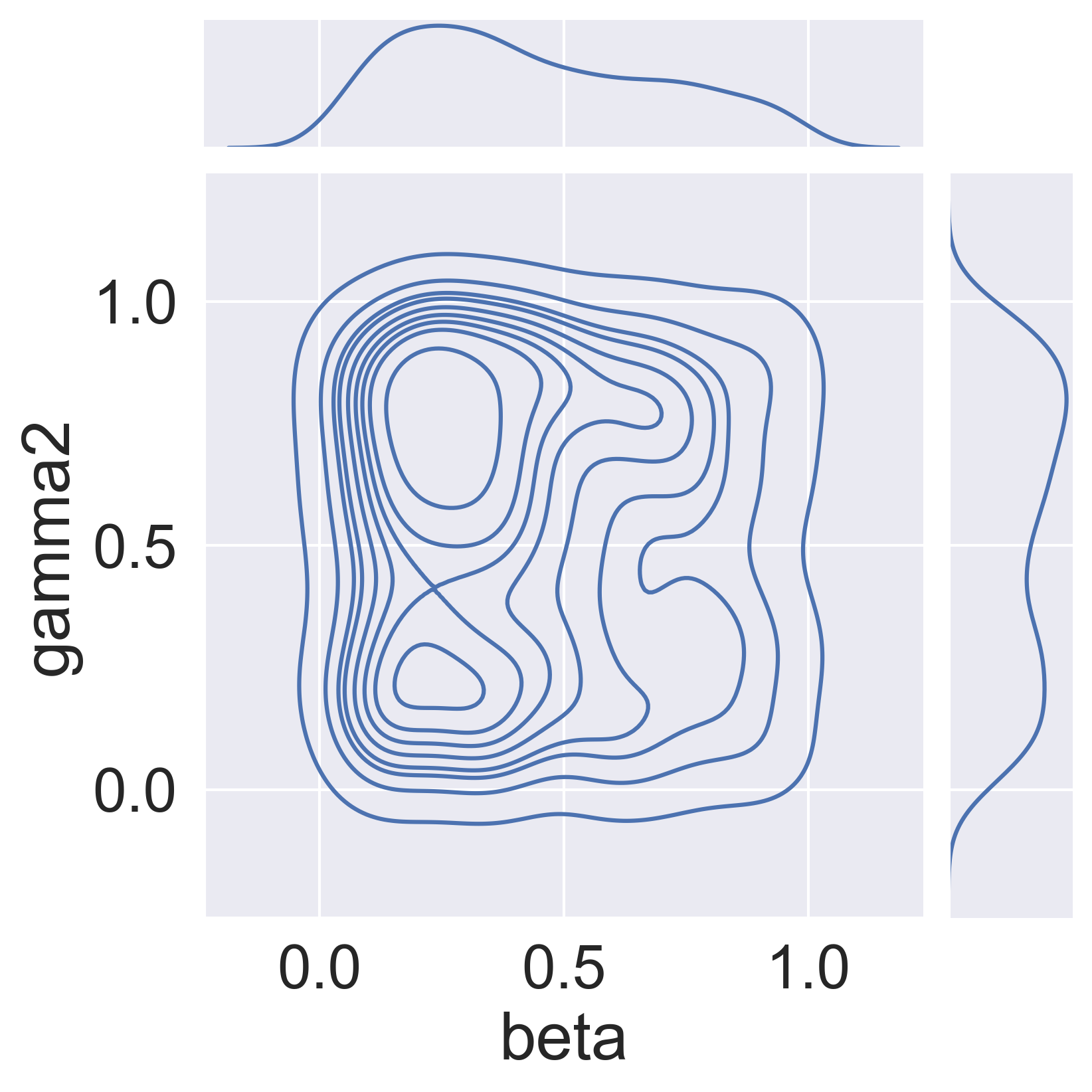}
         \caption{$\beta$ and $\gamma_2$}
     \end{subfigure}
     \hfill
     \begin{subfigure}[b]{0.3\textwidth}
         \centering
         \includegraphics[width=\textwidth]{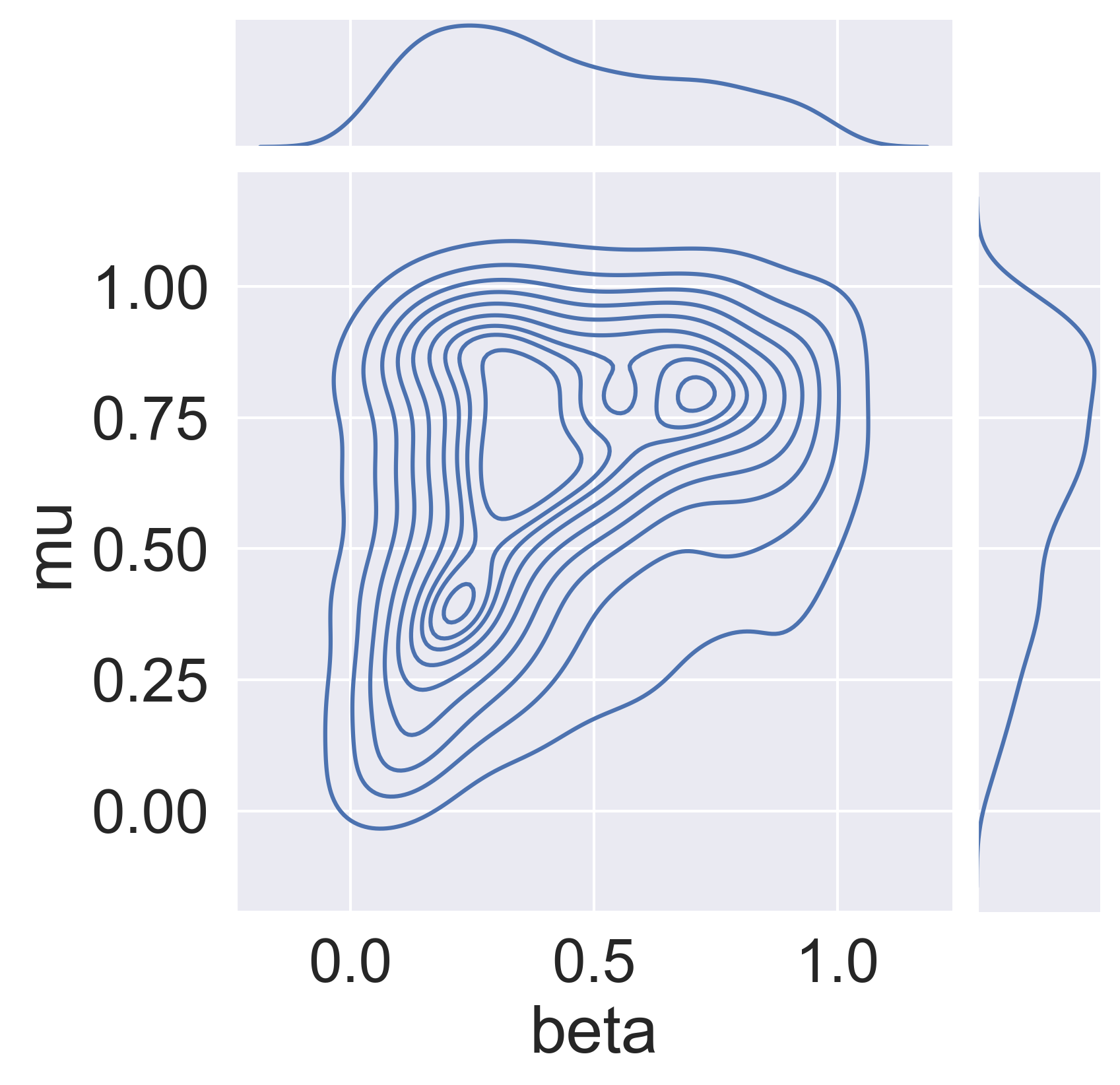}
         \caption{$\beta$ and $\mu$}
     \end{subfigure}
        \caption{Joint posterior distribution for SIIDR parameters for wc\_4\_500ms variant.}
        \label{fig:wc4500_1_param}
\end{figure}

\begin{figure}
     \centering
     \begin{subfigure}[b]{0.3\textwidth}
         \centering
         \includegraphics[width=\textwidth]{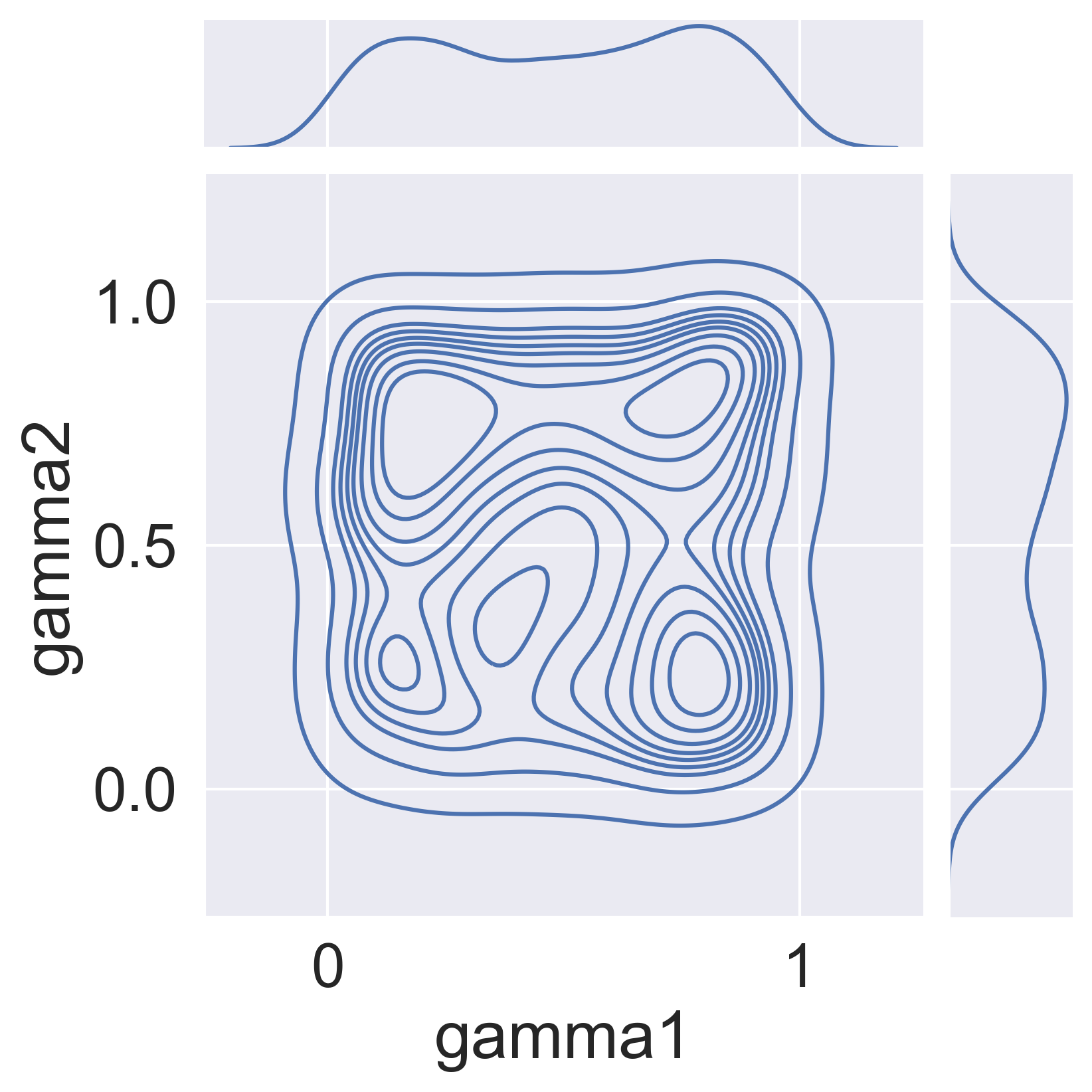}
         \caption{$\gamma_1$ and $\gamma_2$}
     \end{subfigure}
     \hfill
     \begin{subfigure}[b]{0.3\textwidth}
         \centering
         \includegraphics[width=\textwidth]{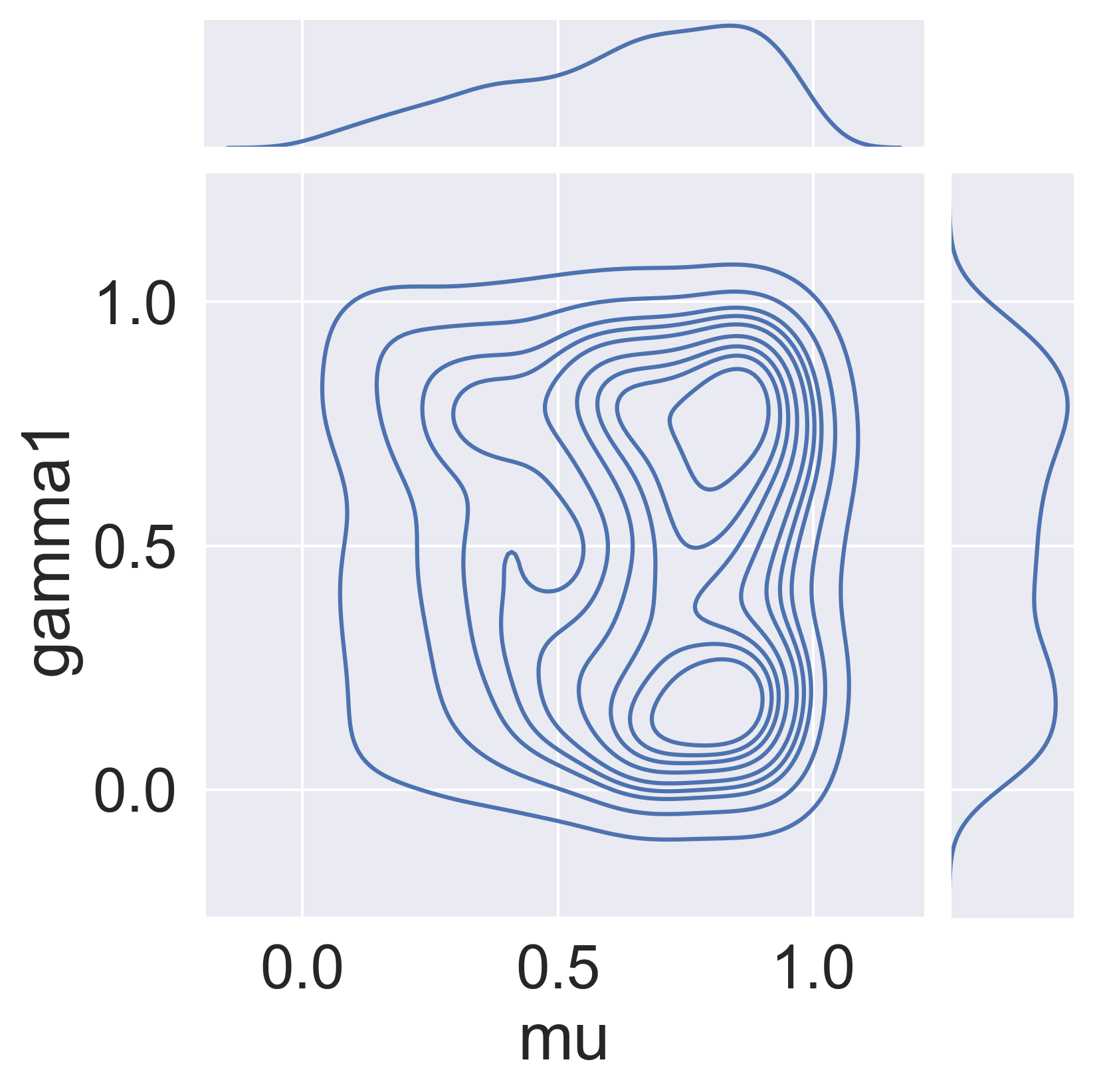}
         \caption{$\mu$ and $\gamma_1$}
     \end{subfigure}
     \hfill
     \begin{subfigure}[b]{0.3\textwidth}
         \centering
         \includegraphics[width=\textwidth]{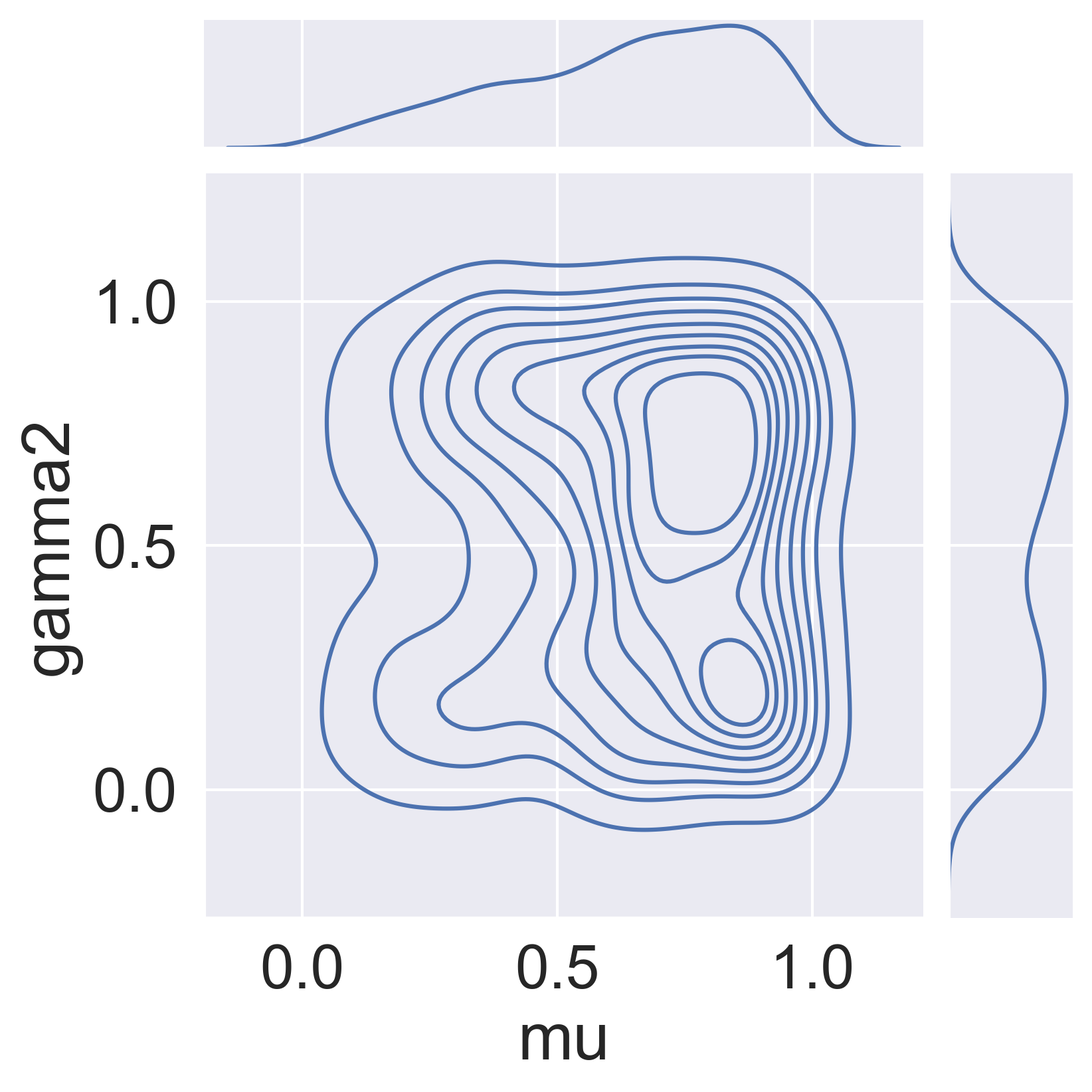}
         \caption{$\mu$ and $\gamma_2$}
     \end{subfigure}
        \caption{Joint posterior distribution for SIIDR parameters for wc\_4\_500ms variant.}
        \label{fig:wc4500_2_param}
\end{figure}

\begin{figure}
     \centering
     \begin{subfigure}[b]{0.3\textwidth}
         \centering
         \includegraphics[width=\textwidth]{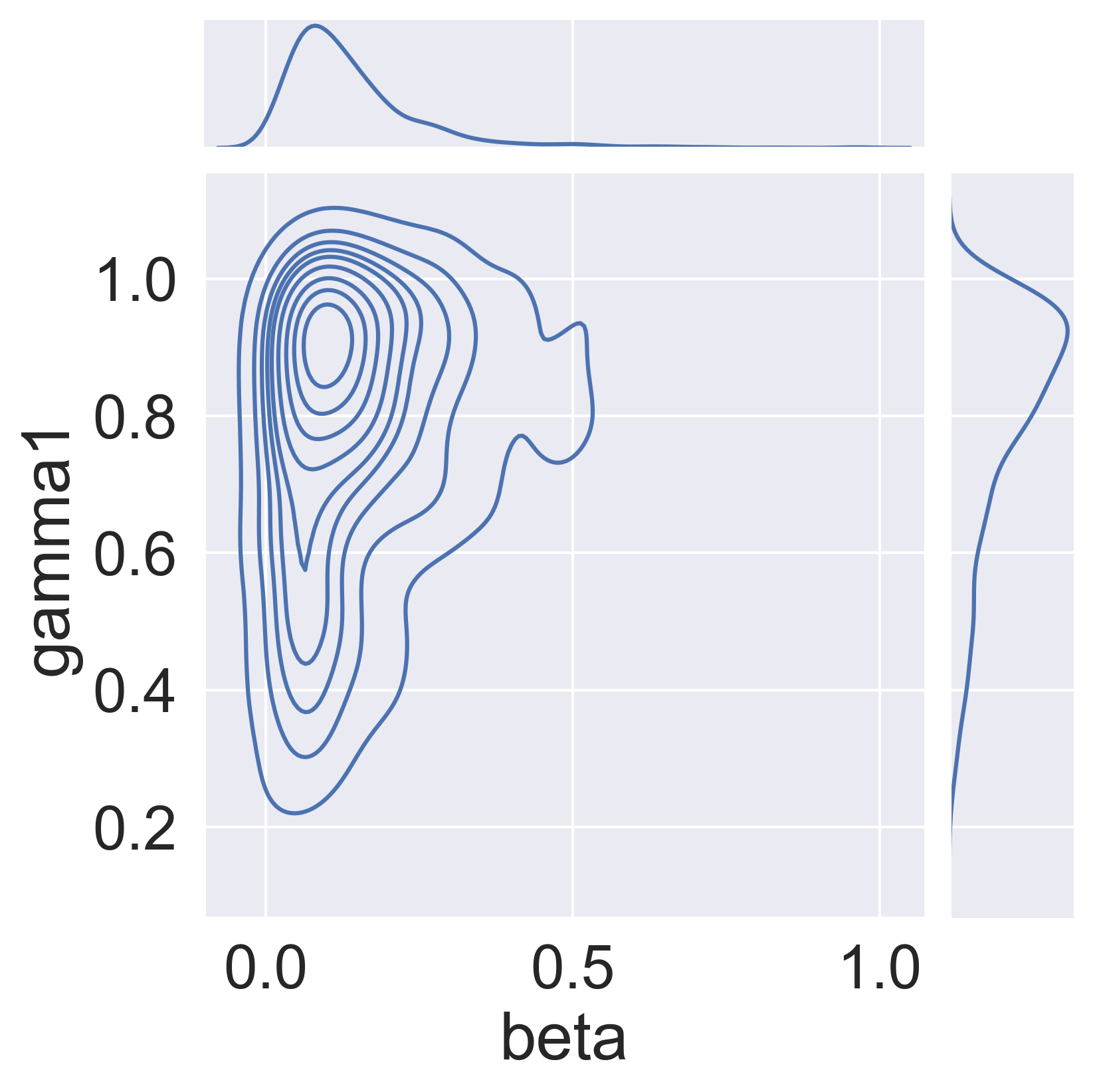}
         \caption{$\beta$ and $\gamma_1$}
     \end{subfigure}
     \hfill
     \begin{subfigure}[b]{0.3\textwidth}
         \centering
         \includegraphics[width=\textwidth]{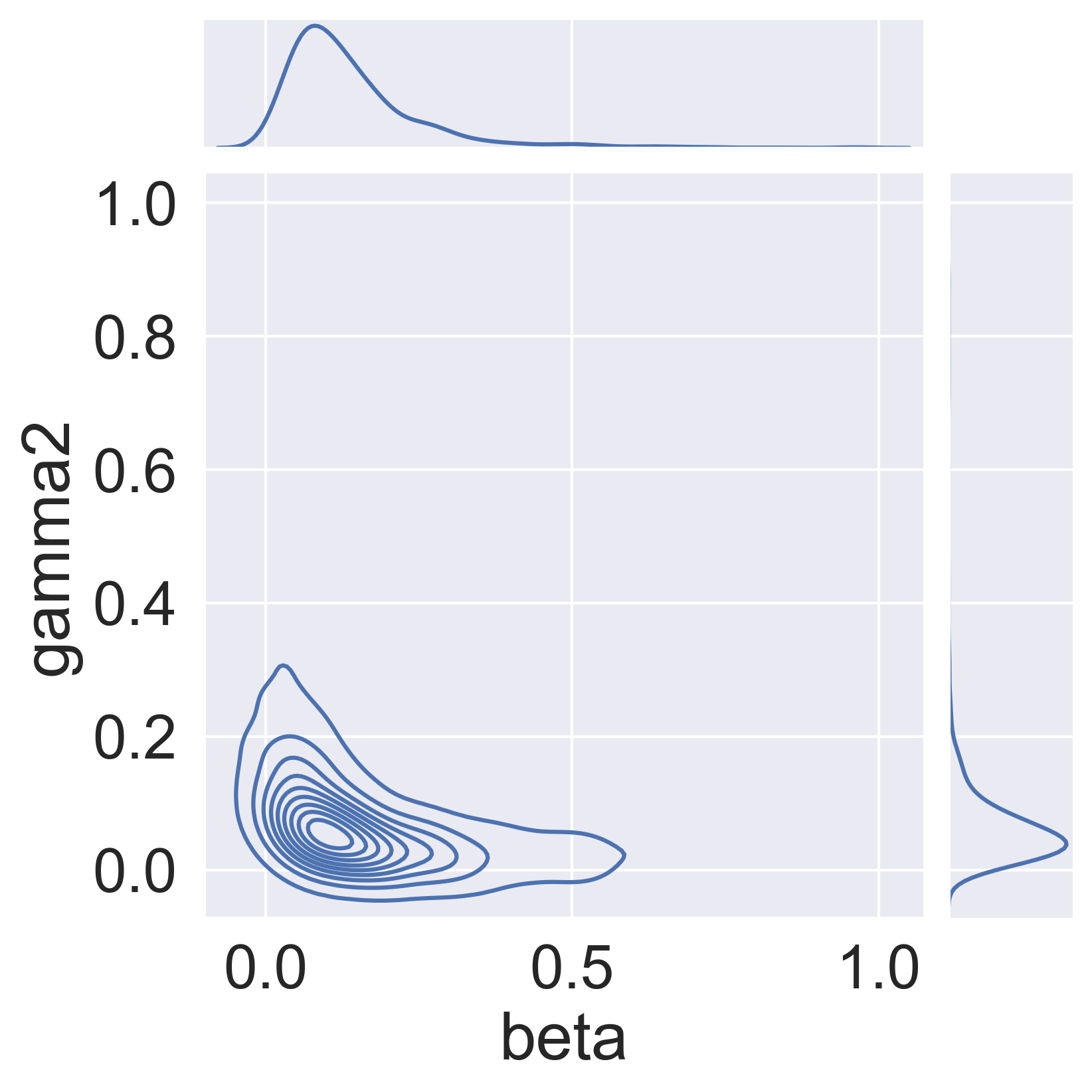}
         \caption{$\beta$ and $\gamma_2$}
     \end{subfigure}
     \hfill
     \begin{subfigure}[b]{0.3\textwidth}
         \centering
         \includegraphics[width=\textwidth]{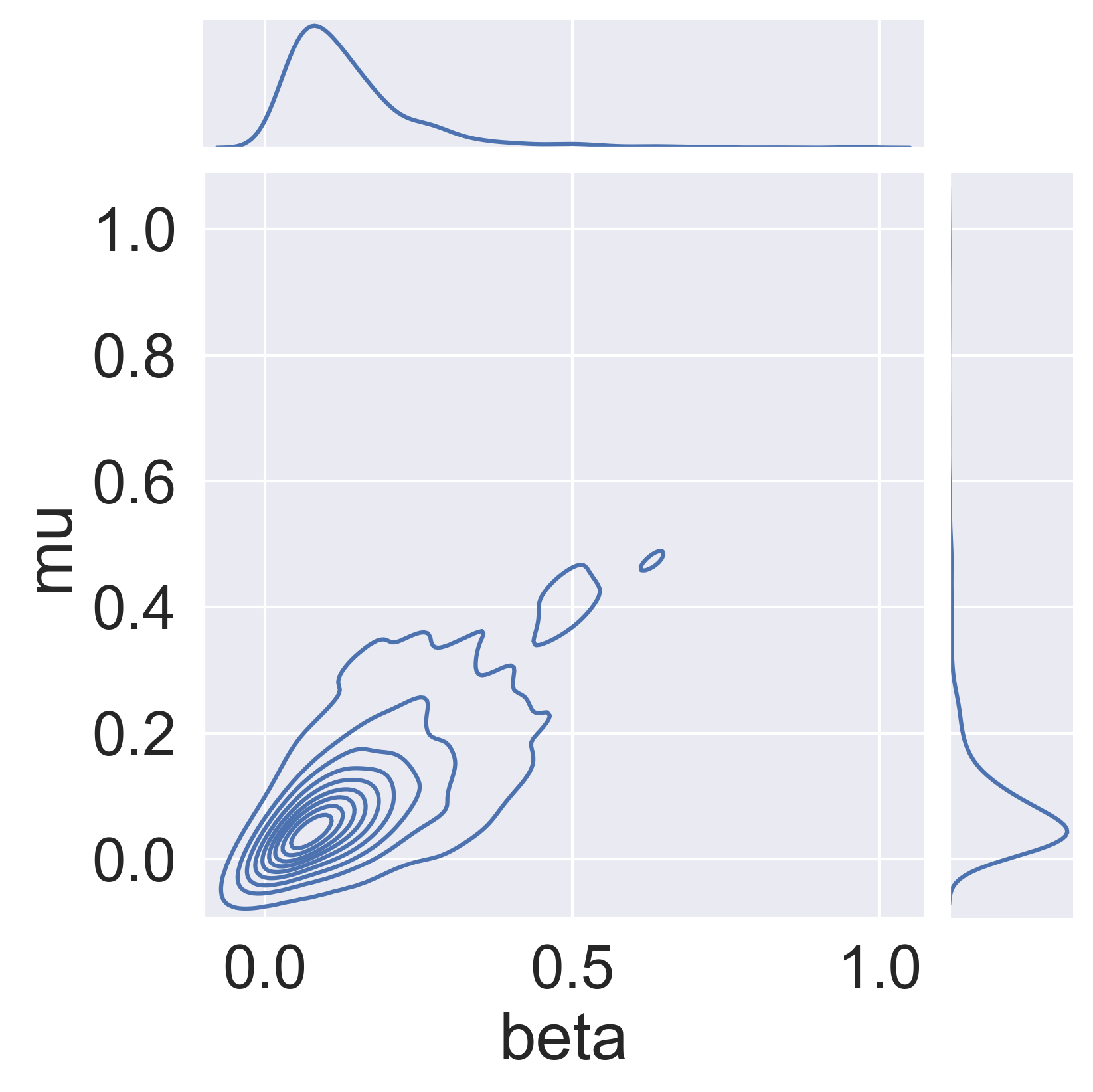}
         \caption{$\beta$ and $\mu$}
     \end{subfigure}
        \caption{Joint posterior distribution for SIIDR parameters for wc\_8\_20s variant.}
        \label{fig:wc820_1_param}
\end{figure}

\begin{figure}
     \centering
     \begin{subfigure}[b]{0.3\textwidth}
         \centering
         \includegraphics[width=\textwidth]{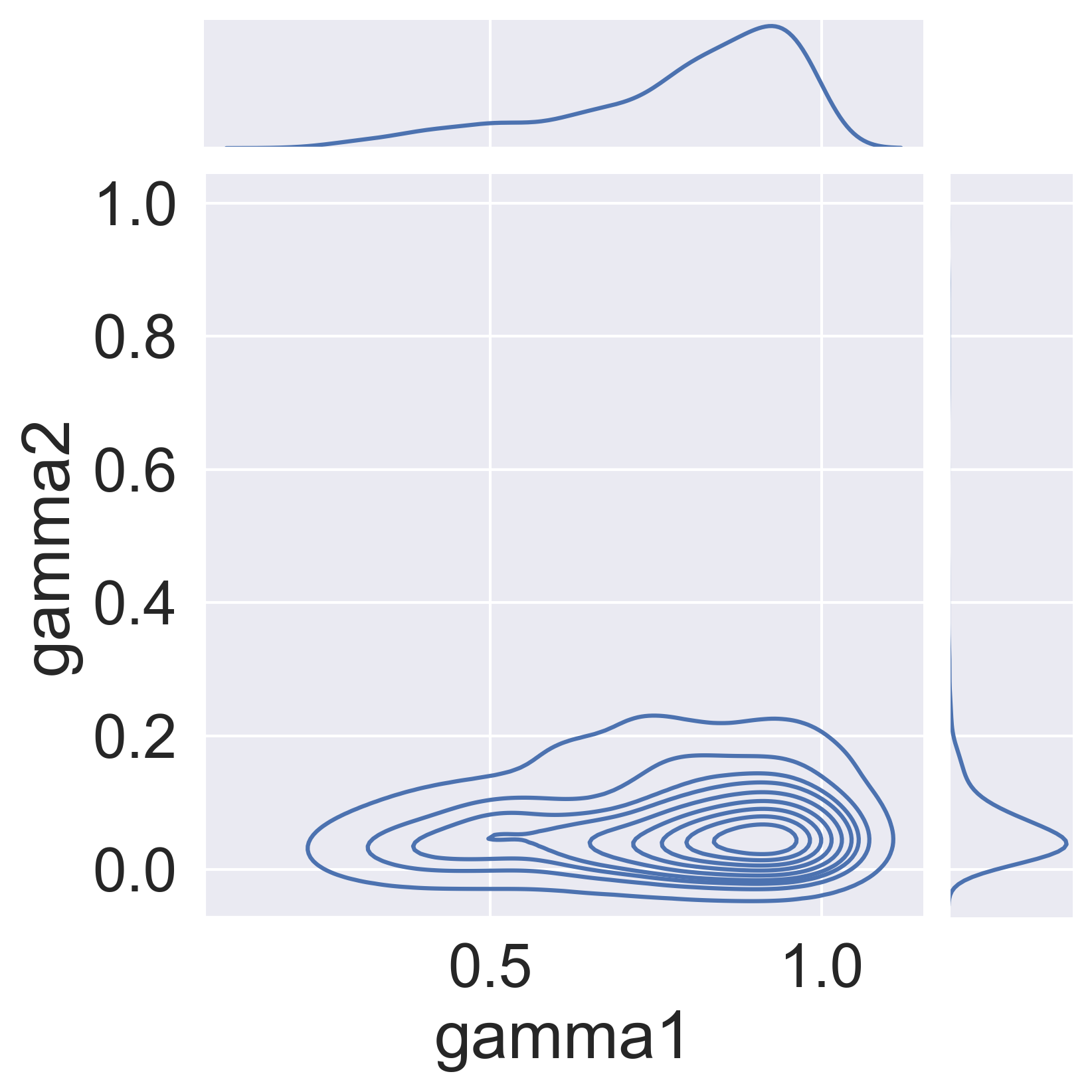}
         \caption{$\gamma_1$ and $\gamma_2$}
     \end{subfigure}
     \hfill
     \begin{subfigure}[b]{0.3\textwidth}
         \centering
         \includegraphics[width=\textwidth]{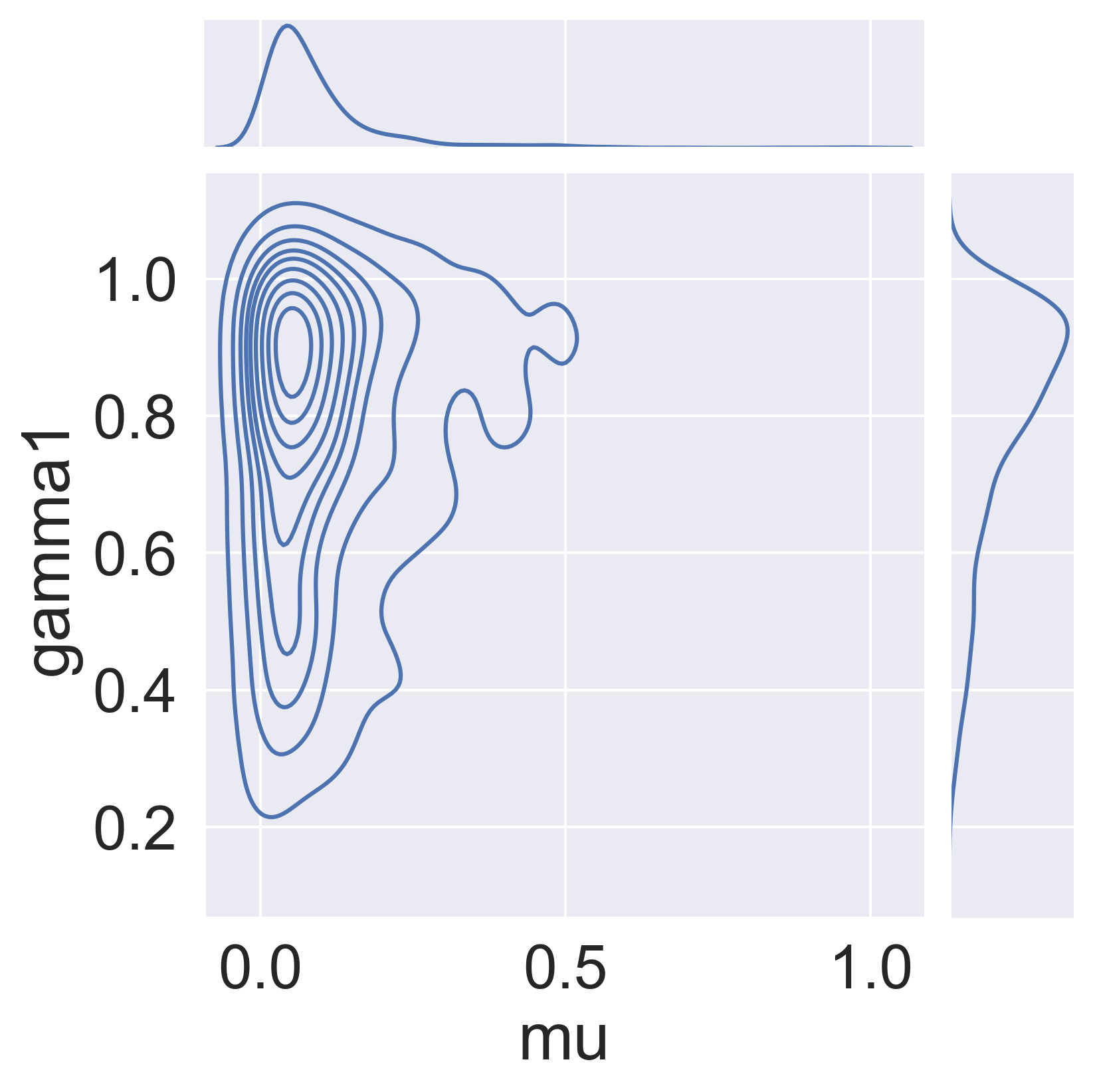}
         \caption{$\mu$ and $\gamma_1$}
     \end{subfigure}
     \hfill
     \begin{subfigure}[b]{0.3\textwidth}
         \centering
         \includegraphics[width=\textwidth]{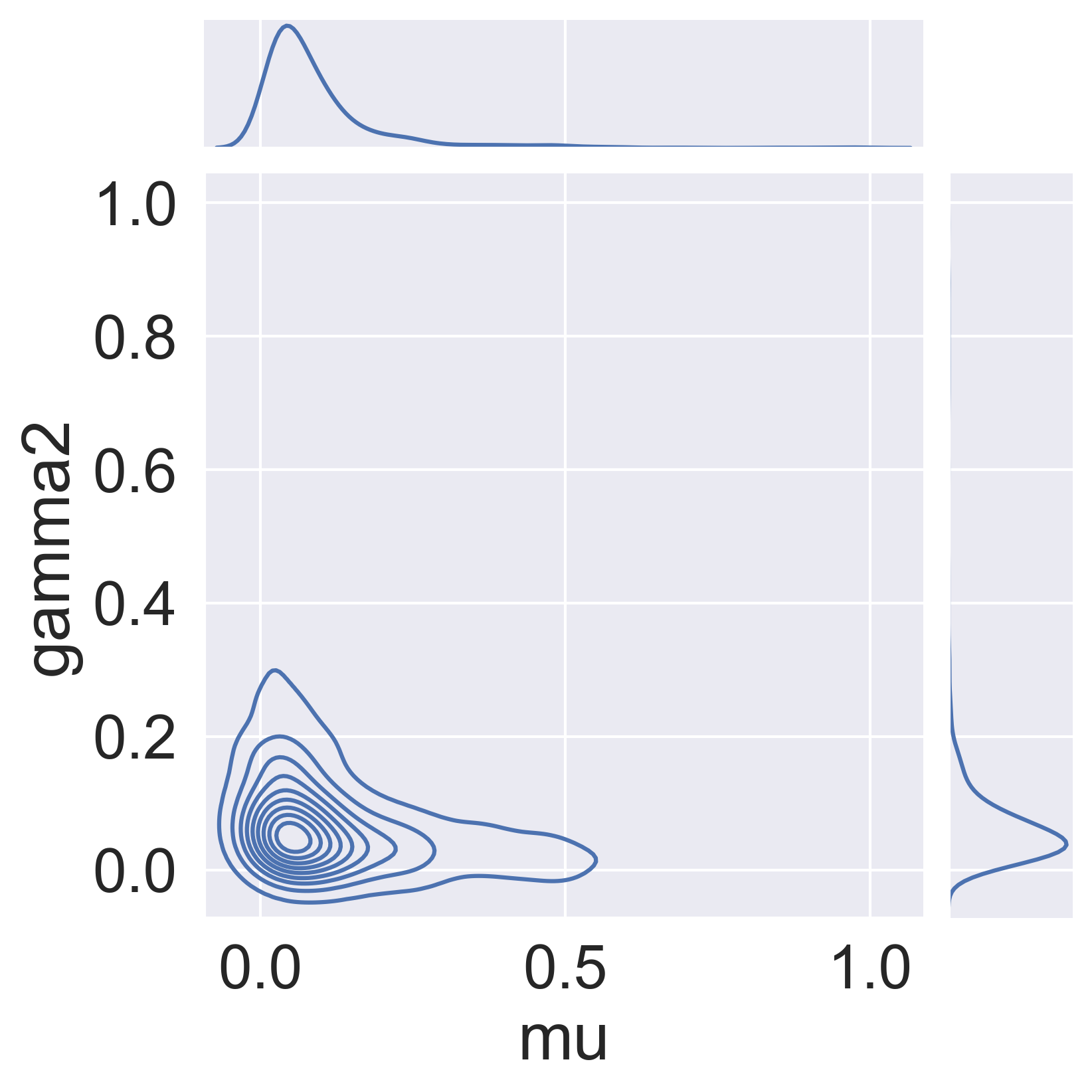}
         \caption{$\mu$ and $\gamma_2$}
     \end{subfigure}
        \caption{Joint posterior distribution for SIIDR parameters for wc\_8\_20s variant.}
        \label{fig:wc820_2_param}
\end{figure}

\begin{figure}
     \centering
     \begin{subfigure}[b]{0.3\textwidth}
         \centering
         \includegraphics[width=\textwidth]{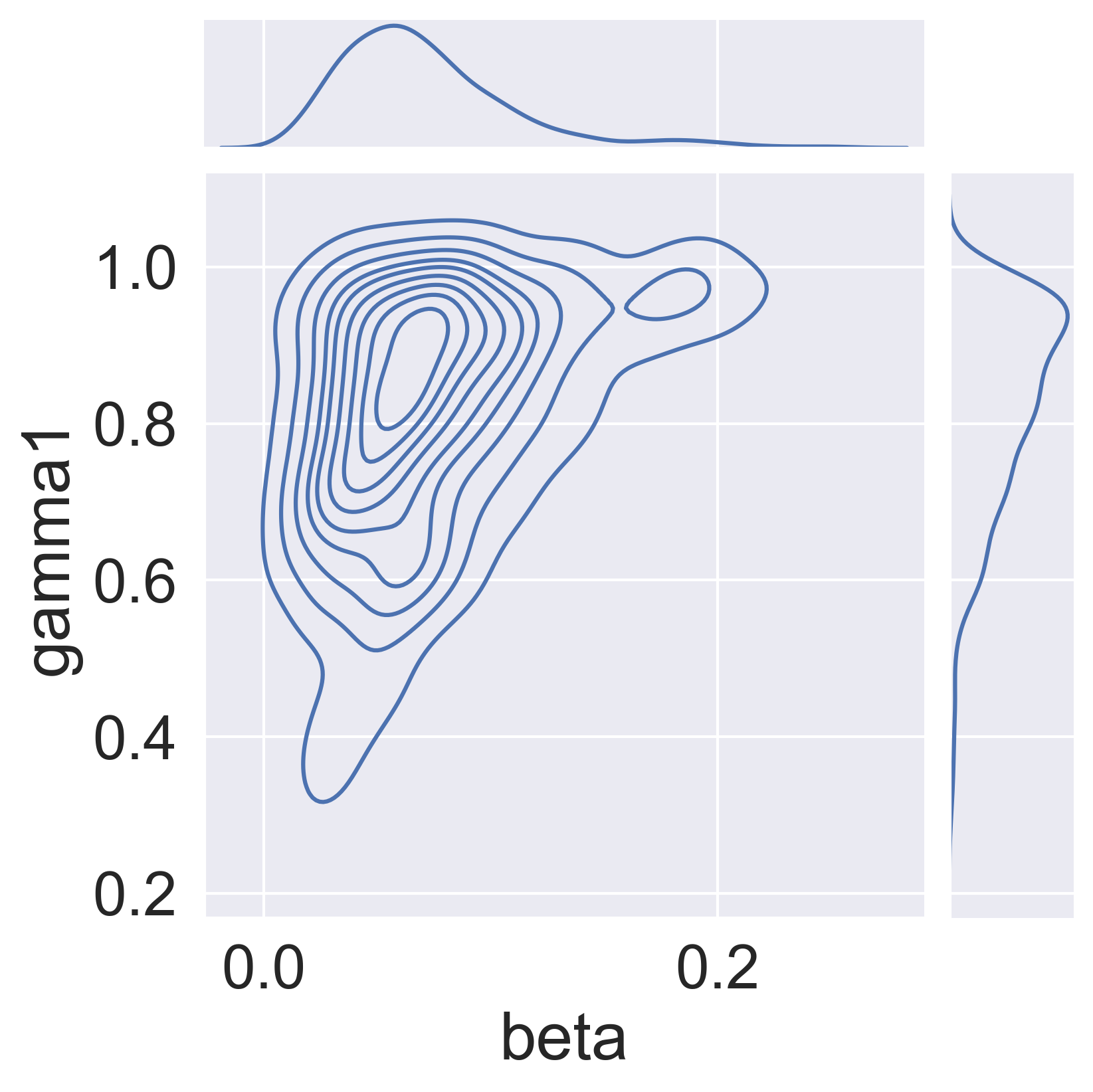}
         \caption{$\beta$ and $\gamma_1$}
     \end{subfigure}
     \hfill
     \begin{subfigure}[b]{0.3\textwidth}
         \centering
         \includegraphics[width=\textwidth]{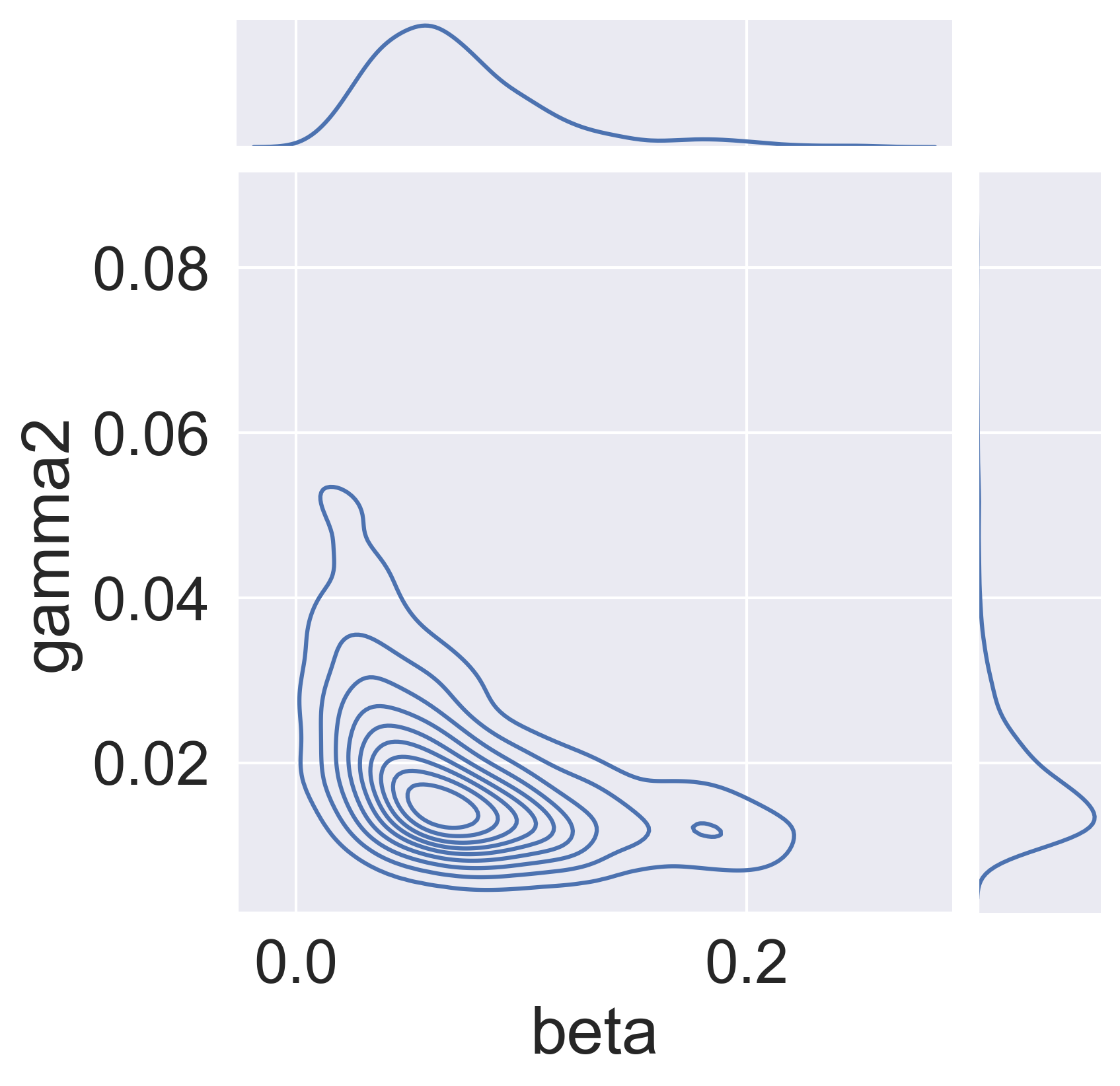}
         \caption{$\beta$ and $\gamma_2$}
     \end{subfigure}
     \hfill
     \begin{subfigure}[b]{0.3\textwidth}
         \centering
         \includegraphics[width=\textwidth]{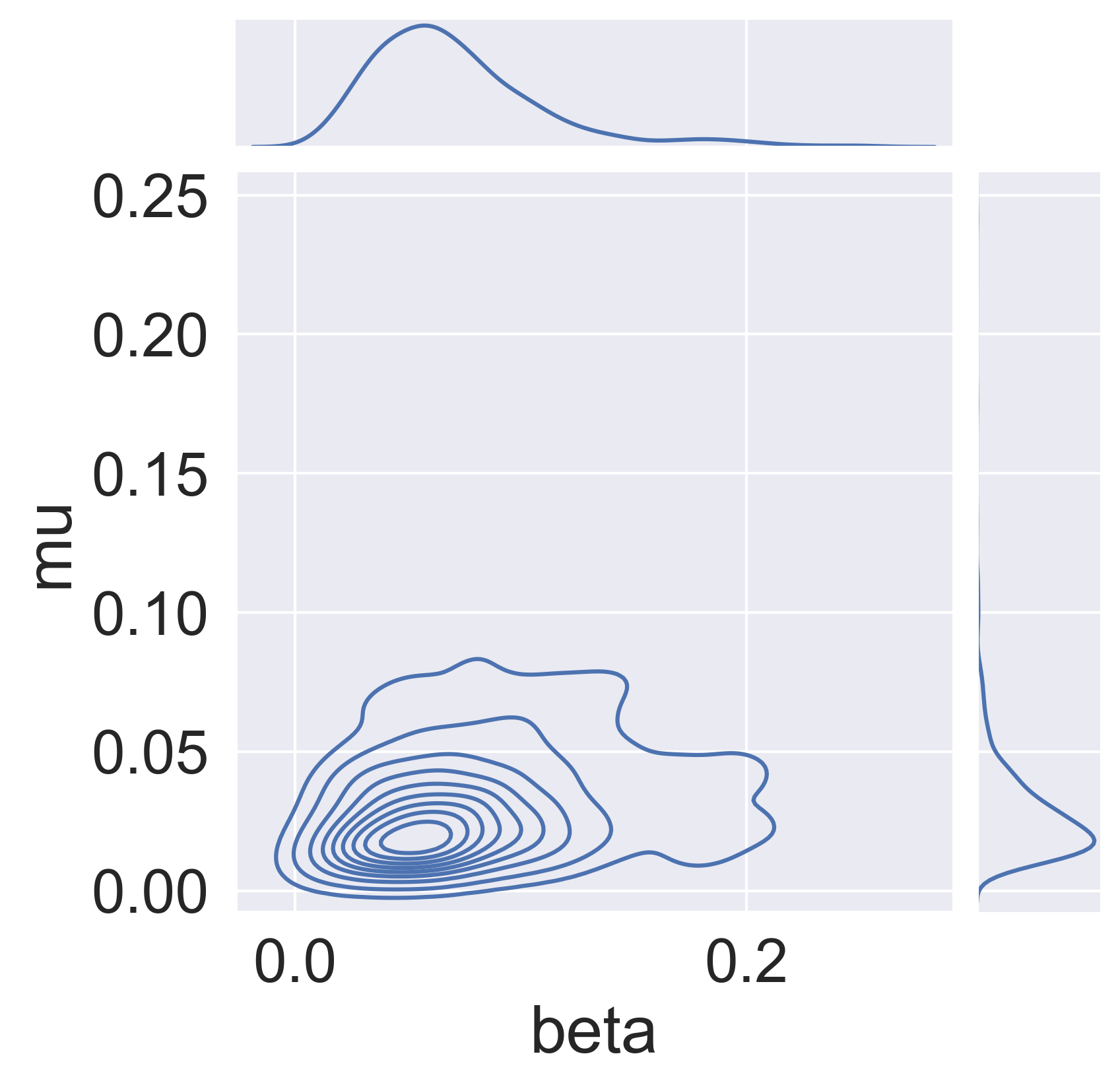}
         \caption{$\beta$ and $\mu$}
     \end{subfigure}
        \caption{Joint posterior distribution for SIIDR parameters for wc\_1\_5s variant.}
        \label{fig:wc15_1_param}
\end{figure}

\begin{figure}
     \centering
     \begin{subfigure}[b]{0.3\textwidth}
         \centering
         \includegraphics[width=\textwidth]{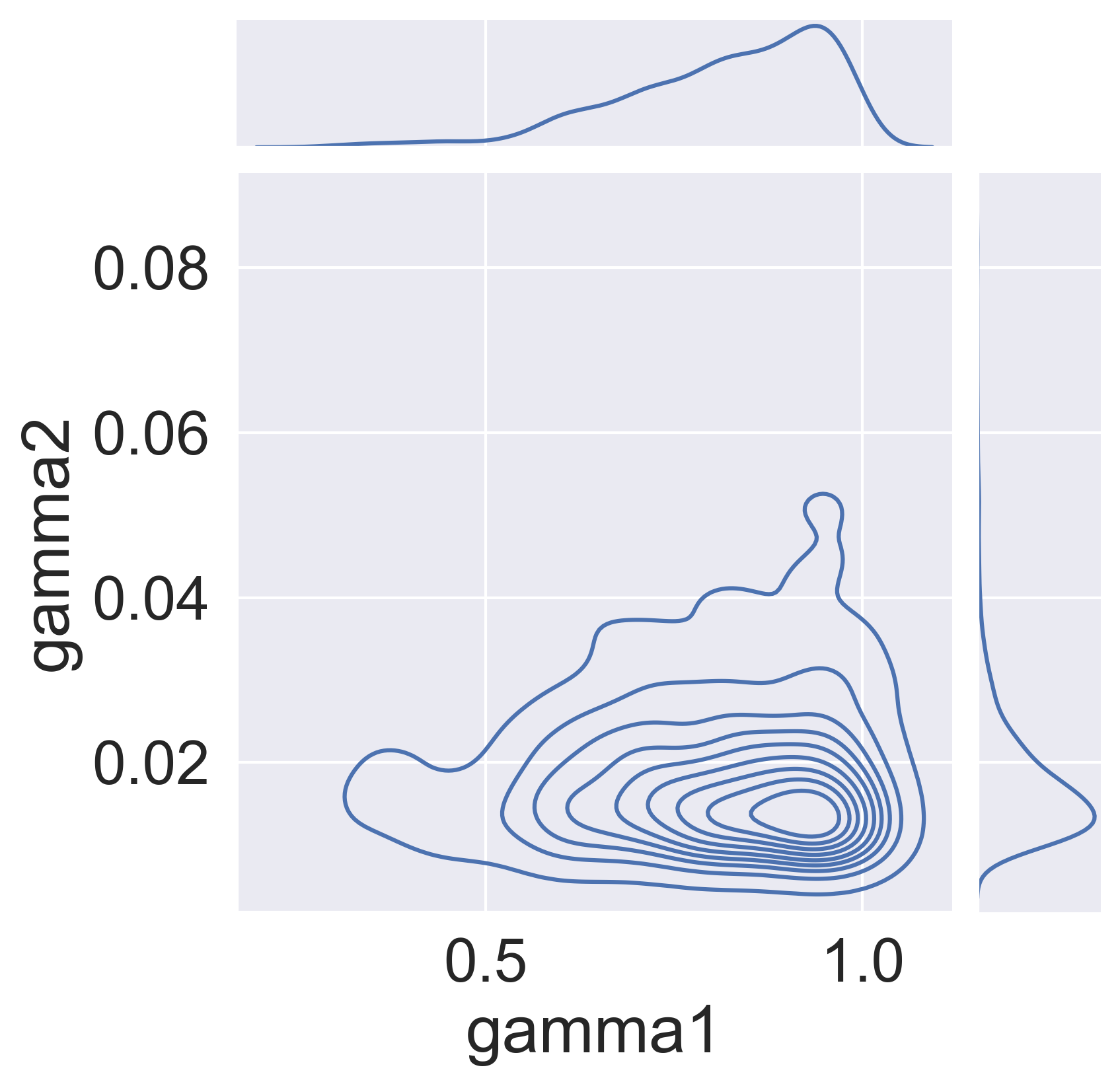}
         \caption{$\gamma_1$ and $\gamma_2$}
     \end{subfigure}
     \hfill
     \begin{subfigure}[b]{0.3\textwidth}
         \centering
         \includegraphics[width=\textwidth]{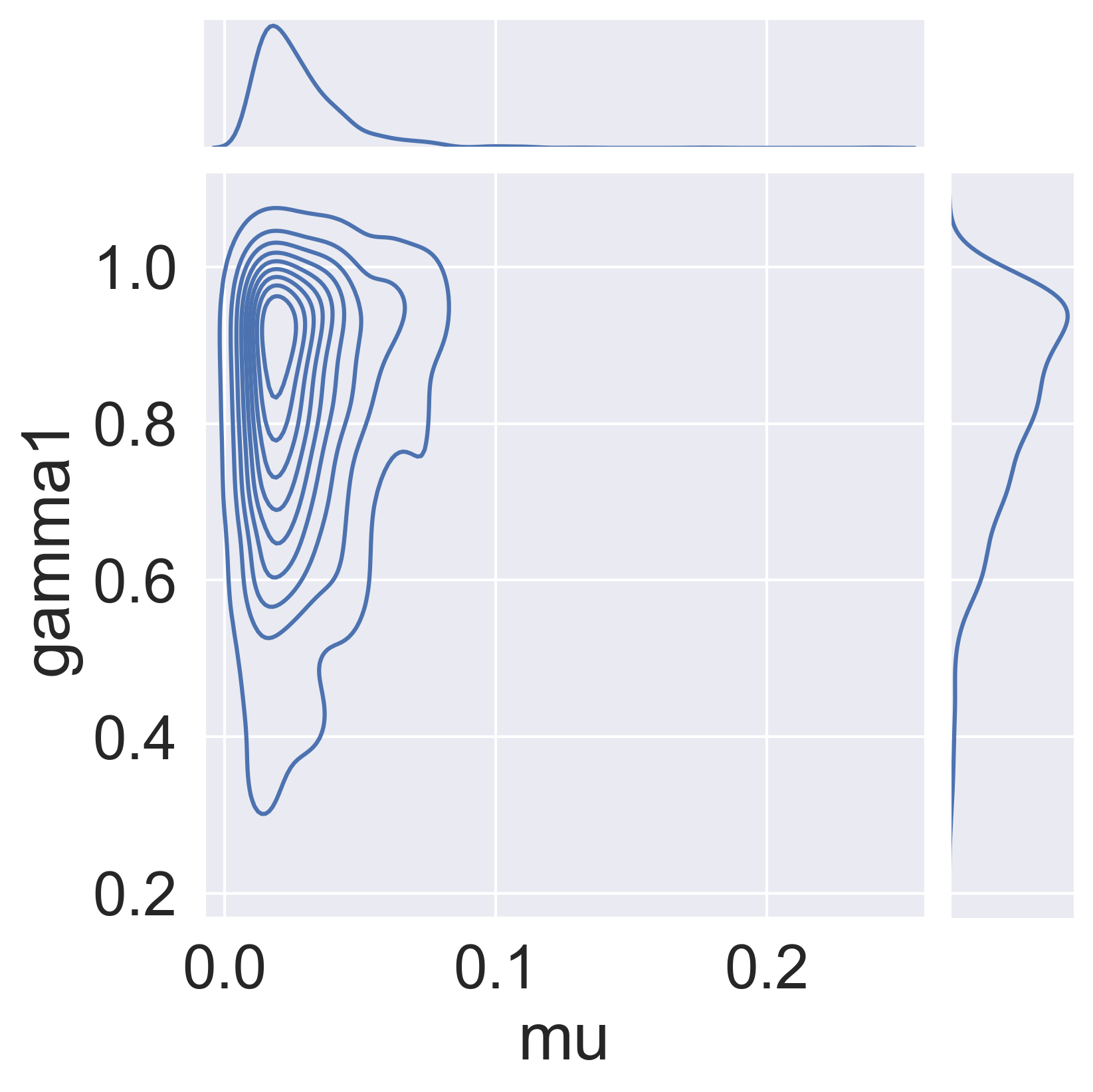}
         \caption{$\mu$ and $\gamma_1$}
     \end{subfigure}
     \hfill
     \begin{subfigure}[b]{0.3\textwidth}
         \centering
         \includegraphics[width=\textwidth]{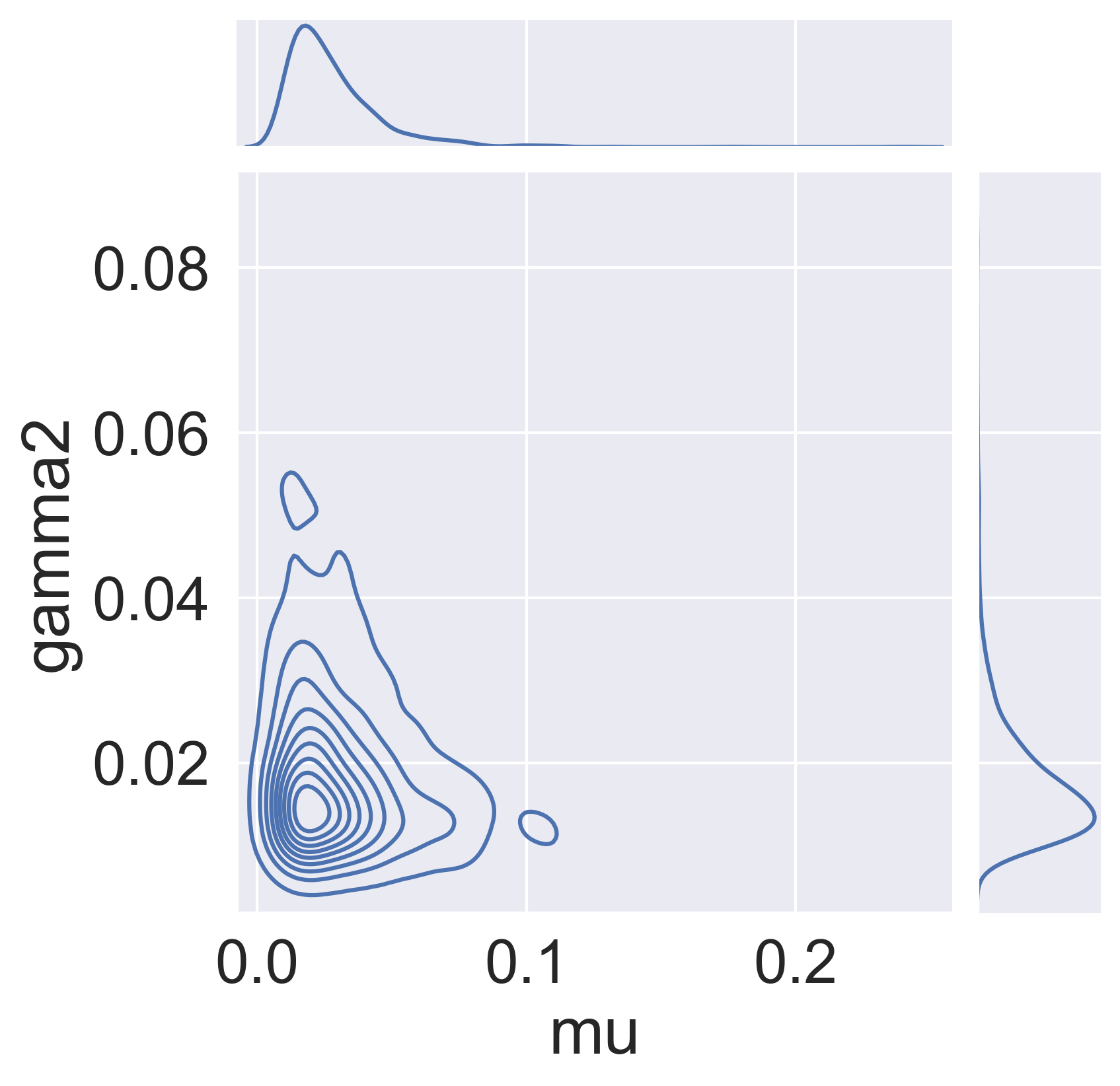}
         \caption{$\mu$ and $\gamma_2$}
     \end{subfigure}
        \caption{Joint posterior distribution for SIIDR parameters for wc\_1\_5s variant.}
        \label{fig:wc15_2_param}
\end{figure}

\section{Linearization of SIIDR as NLDS}
\label{sec:appB}

The Jacobian matrix $\mathcal{J}$ at the equilibrium point $P^*$ is defined as: 
\begin{align}
\mathcal{J}=\nabla g(P^*),
\end{align}
where $\mathcal{J}_{i,j}=[\nabla g(P^*)]_{i,j}=\frac{\partial g_i}{\partial p_j}|_{P=P^*}$. 

We calculate the partial first order derivatives of our equation system and obtain the Jacobian matrix:
\begin{equation}
\mathcal{J}= 
\begin{bmatrix}
\mathds{O}&-\mathds{I}&-(\alpha_{I_DI}+\alpha_{I_DI_D})\mathds{I}&-(\alpha_{II}+\alpha_{II_D})\mathds{I}\\
\mathds{O}&\mathds{I}&\mathds{O}&-x_S\tilde{\beta} \mathds{A}\\
\mathds{O}&\mathds{O}&\alpha_{I_DI_D}\mathds{I}&\alpha_{II_D}\mathds{I} \\
\mathds{O}&\mathds{O}&\alpha_{I_DI}\mathds{I}& x_S\tilde{\beta} \mathds{A} +\alpha_{II}\mathds{I}\\
\end{bmatrix}
\end{equation}

The size of the Jacobian matrix is $4N \times 4N$, where $N$ is the number of nodes in the graph. Every row has 4 elements of size $N \times N$. We use the following notation: $\mathds{I}$ is the identity matrix of size $N \times N$ and $\mathds{O}$ is a matrix of size $N \times N$ with all zeros. $\mathds{A}$ is the adjacency matrix of the network represented as a graph, of size $N \times N$.

The first row is a linear combination of the other rows, thus:
\begin{equation}
\mathcal{J}= 
\begin{bmatrix}
\mathds{I}&\mathds{O}&-x_S\tilde{\beta} \mathds{A}\\
\mathds{O}&\alpha_{I_DI_D}\mathds{I}&\alpha_{II_D}\mathds{I} \\
\mathds{O}&\alpha_{I_DI}\mathds{I}& x_S\tilde{\beta} \mathds{A} +\alpha_{II}\mathds{I}\\
\end{bmatrix}
\end{equation}

Let us represent the Jacobian matrix as follows:
\begin{equation}
\mathcal{J}= 
\begin{bmatrix}
Q_1&Q_2\\
O&Q_3\\
\end{bmatrix}
\end{equation}
where $Q_1$, $Q_2$, $Q_3$, $O$ are matrices of size $N \times N$, $2N \times N$, $2N \times 2N$, $2N \times N$ respectively:
\begin{equation}
\begin{split}
Q_1=\mathds{I},
\quad
Q_2= 
\begin{bmatrix}
\mathds{O}&-x_S\tilde{\beta} \mathds{A}
\end{bmatrix}, 
\quad
Q_3= 
\begin{bmatrix}
\alpha_{I_DI_D}\mathds{I}&\alpha_{II_D}\mathds{I}\\
\alpha_{I_DI}\mathds{I}& x_S\tilde{\beta} \mathds{A}
\end{bmatrix}
\end{split}
\end{equation}

Let $\Vec{v}$ of size $3N \times 1$ and $\lambda_J$ be the eigenvector and the eigenvalue of $J$ respectively. Then we can define $\Vec{v}$ to be composed of $\Vec{v_1}$ of size $N\times 1$ and $\Vec{v_2}$ of size $2N \times 1$:
\begin{align}
\Vec{v}= 
\begin{bmatrix}
\Vec{v_1}\\
\Vec{v_2}\\
\end{bmatrix}
\end{align} 
$\Vec{v}$ and $\lambda_J$ satisfy the following equation:
\begin{align}
J\Vec{v} = \lambda_J\Vec{v}
\end{align}
which results in:
\begin{align}
\begin{bmatrix}
Q_1&Q_2\\
O&Q_3\\
\end{bmatrix}
\begin{bmatrix}
\Vec{v_1}\\
\Vec{v_2}\\
\end{bmatrix}
= \lambda_J
\begin{bmatrix}
\Vec{v_1}\\
\Vec{v_2}\\
\end{bmatrix}
\label{eq:JB}
\end{align}

\noindent Eq.~\ref{eq:JB} implies that:
\begin{align}
Q_1 \cdot \Vec{v_1} + Q_2 \cdot \Vec{v_2} = \lambda_J \cdot \Vec{v_1} \label{eq1_b}\\
Q_3 \cdot \Vec{v_2} = \lambda_J \cdot \Vec{v_2} \label{eq2_b}
\end{align}

\noindent From Eq.~\ref{eq2_b} we have:
\begin{enumerate}
    \item $\Vec{v_2} = \Vec{0}$, or
    \item $\Vec{v_2}$ is the eigenvector of $Q_3$ and $\lambda_J$ is the eigenvalue of $Q_3$.
\end{enumerate}  

We look at the first case into more detail: if $v_2$ = 0, from Eq.~\ref{eq1_b}, we obtain that $Q_1 \cdot \Vec{v_1} = \lambda_J \cdot \Vec{v_1}$. That means either: (a) $\Vec{v_1} = 0$, which is not feasible, because in this case $\Vec{v}= \Vec{0}$, or (b) $\lambda_J$ is the eigenvalue of $Q_1$. 

Thus, the eigenvalues of the Jacobian matrix can be represented as eigenvalues of matrix $Q_1$ (when $\Vec{v_2}$ = 0) and eigenvalues of matrix $Q_3$. Given the structure of $Q_1$ (i.e., identity matrix of size $N\times N$), the eigenvalues of $Q_1$ are equal to $\Vec{1}$. Thus, we can conclude that the Jacobian matrix has at least one eigenvalue equal to 1.

\section{SIIDR Stability as the System of NLDS}
\label{sec:app2}

\begin{theorem}
The equilibrium points of SIIDR represented as NLDS of the form~(\ref{eq:sys_nlds})
are Lyapunov stable if:
\begin{align}
\lambda_A \frac{\tilde{\beta} }{\mu} \leq 1\label{eq:thres1}
\end{align}
where $\lambda_A$ is the largest eigenvalue of the adjacency matrix, $\tilde{\beta}$ and $\mu$ are probabilities of infection and recovery respectively. 
\end{theorem}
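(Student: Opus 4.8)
The plan is to invoke the discrete-time Lyapunov criterion stated just above the theorem: I would exhibit a continuous scalar function $L$ with $L(P)\ge 0$, $L(P^*)=0$ at the disease-free equilibrium, and $L(P_{t+1})-L(P_t)\le 0$ along every trajectory of the system~(\ref{eq:sys_nlds}). The candidate I would use is a spectrally weighted count of the infectious mass,
\[
L(P_t)=\vec{u}^{\,\top}\bigl(\vec{P}_{I,t}+\vec{P}_{I_D,t}\bigr),
\]
where $\vec{P}_{I,t}=(P_{I,1,t},\dots,P_{I,N,t})^{\top}$, $\vec{P}_{I_D,t}$ is defined analogously, and $\vec{u}>0$ is the Perron--Frobenius eigenvector of $\mathds{A}$, normalized so that $\mathds{A}\vec{u}=\lambda_A\vec{u}$ (equivalently $\vec{u}^{\,\top}\mathds{A}=\lambda_A\vec{u}^{\,\top}$ for an undirected graph; for a directed graph one takes the left eigenvector). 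This is the graph analogue of the function $L=I+I_D$ used for the homogeneous system in Theorem~1, and the eigenvector weighting is exactly what converts the spectral radius $\lambda_A$ into the threshold~(\ref{eq:thres1}).

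The first key step is to replace the nonlinear infection term by a linear upper bound. Applying the elementary product inequality $\prod_j(1-x_j)\ge 1-\sum_j x_j$, valid for $x_j\in[0,1]$, to the definition $\zeta_{i,t}(I)=\prod_j(1-\tilde{\beta}A_{i,j}P_{I,j,t})$ gives $1-\zeta_{i,t}(I)\le \tilde{\beta}\sum_j A_{i,j}P_{I,j,t}$, and since $P_{S,i,t}\le 1$ the first two lines of~(\ref{eq:sys_nlds}) become the entrywise vector inequalities
\[
\vec{P}_{I,t+1}\le \tilde{\beta}\,\mathds{A}\,\vec{P}_{I,t}+\alpha_{II}\vec{P}_{I,t}+\alpha_{I_DI}\vec{P}_{I_D,t},\qquad \vec{P}_{I_D,t+1}=\alpha_{II_D}\vec{P}_{I,t}+\alpha_{I_DI_D}\vec{P}_{I_D,t},
\]
with all vectors nonnegative. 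Left-multiplying by $\vec{u}^{\,\top}\ge 0$ preserves the inequality; then using $\vec{u}^{\,\top}\mathds{A}=\lambda_A\vec{u}^{\,\top}$ together with the transition-diagram identities $\alpha_{II}+\alpha_{II_D}=1-\mu$ and $\alpha_{I_DI}+\alpha_{I_DI_D}=1$ (where $\mu,\gamma_1,\gamma_2$ are read as per-step probabilities), I would get, after subtracting $L(P_t)$,
\[
L(P_{t+1})-L(P_t)\le(\lambda_A\tilde{\beta}-\mu)\,\vec{u}^{\,\top}\vec{P}_{I,t}.
\]
Since $\vec{u}>0$ and $\vec{P}_{I,t}\ge 0$, the right-hand side is $\le 0$ precisely when $\lambda_A\tilde{\beta}/\mu\le 1$. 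Combined with $L\ge 0$, linearity (hence continuity) of $L$, and $L(P^*)=0$ at $P^*=[\vec{P}_S,\vec 0,\vec 0,\vec{P}_R]^{\top}$, this is exactly the certificate the definition preceding the theorem asks for.

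I expect the main obstacle to be careful bookkeeping rather than anything deep. Two points need justifying: (i) that substituting the linear upper bound for $1-\zeta_{i,t}(I)$ is legitimate, which holds because $L(P_{t+1})$ is nondecreasing in each entry of $\vec{P}_{I,t+1}$ (the $\vec{u}$ weights are positive and the $\vec{P}_{I_D,t+1}$ update is unaffected), so an upper bound on those entries yields an upper bound on $L(P_{t+1})-L(P_t)$; and (ii) that $L$ vanishes on the entire manifold of disease-free states $\{[\vec{P}_S,\vec 0,\vec 0,\vec{P}_R]^{\top}\}$, not just at the single point $P^*$, so the argument delivers Lyapunov but not asymptotic stability -- exactly the situation already encountered in the homogeneous case (Theorem~1). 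I would also state the standing hypotheses explicitly: $\tilde{\beta},\mu,\gamma_1,\gamma_2\in[0,1]$ with $\mu>0$ (so each $x_j\in[0,1]$ and each $\alpha_{XY}\in[0,1]$), and $\mathds{A}$ is the adjacency matrix of a connected undirected graph so that $\vec{u}$ is strictly positive; when the graph is disconnected one simply stacks the per-component Perron eigenvectors, obtaining a positive $\vec{u}$ with $\vec{u}^{\,\top}\mathds{A}\le\lambda_A\vec{u}^{\,\top}$ entrywise, and the same chain of inequalities goes through unchanged.
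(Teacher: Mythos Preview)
Your argument is correct and in fact cleaner than the paper's. Both proofs use a Lyapunov function of the form ``total infectious mass,'' but the paper takes the \emph{unweighted} sum $L(P)=\sum_i(P_{I,i}+P_{I_D,i})$, i.e.\ $\vec u=\mathds{1}$, whereas you weight by the Perron eigenvector. With $\vec u=\mathds{1}$ the identity $\vec u^{\top}\mathds{A}=\lambda_A\vec u^{\top}$ is unavailable, so after the same bookkeeping $\alpha_{II}+\alpha_{II_D}=1-\mu$, $\alpha_{I_DI}+\alpha_{I_DI_D}=1$ the paper lands on the degree-based condition $\sum_i P_{I,i}(\tilde\beta d_i-\mu)\le 0$ and then has to pass through further bounds ($P_S\le 1$, $P_I\le 1$, and finally $d_{\mathrm{ave}}\le\lambda_A$) to reach the spectral threshold $\lambda_A\tilde\beta/\mu\le 1$. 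Your eigenvector weighting collapses all of that into the single line $\vec u^{\top}\mathds{A}\vec P_{I,t}=\lambda_A\vec u^{\top}\vec P_{I,t}$, giving $L(P_{t+1})-L(P_t)\le(\lambda_A\tilde\beta-\mu)\vec u^{\top}\vec P_{I,t}$ directly. A second difference is that the paper silently rewrites the system~(\ref{eq:sys_nlds}) in the bilinear form $CP_t+\mathcal P_t^{\top}BP_t$, which amounts to replacing $1-\zeta_{i,t}(I)$ by $\tilde\beta\sum_jA_{ij}P_{I,j,t}$ as an equality; you treat it as the inequality $1-\prod_j(1-x_j)\le\sum_j x_j$ and argue monotonicity, which is the more careful way to handle the nonlinearity. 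Your remark that $L$ vanishes on the entire disease-free manifold (hence Lyapunov but not asymptotic stability) matches the paper's conclusion for the ODE case and applies equally to the paper's unweighted $L$ here.
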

\begin{proof}System~(\ref{eq:sys_nlds}) can be reduced to the first three equations because of linear dependency of $P_{R,i,t+1}$ on other equations, and has the following representation in the matrix form:
\begin{align*}
g(P_{t+1}) = CP_t + \mathcal{P}^T_tBP_t
\end{align*}
where matrices $C$ and $\mathcal{P}^T_tBP_t$ of size $3N \times 3N $ correspond to the linear and non-linear part of the system, respectively. $
\mathcal{P^T} = \{\mathcal{P}_1^T,\mathcal{P}_2^T,\mathcal{P}_3^T\}$ is a $3N \times 9N$ matrix, where $\mathcal{P}_i^T$ is a $3N \times 3N$ matrix with non-zero $i_{th}$ row $P_S, P_I, P_{I_D}$:
\begin{equation*}
\mathcal{P}_i^T= 
\begin{bmatrix}
\mathds{O}&\mathds{O}&\mathds{O}\\
P_S&P_I&P_{I_D}\\
\mathds{O}&\mathds{O}&\mathds{O}\\
\end{bmatrix}
\end{equation*}
$B = \{B_i\}_{i = 1}^3$ is a $9N \times 3N$ matrix where $B_i = \{b_{kl}\}_{k, l = 1}^3$ has the size of $3N\times 3N$.
Based on our system representation~(\ref{eq:sys_nlds}) matrix $C$ is the following:
\begin{equation*}
C= 
\begin{bmatrix}
\mathds{I}&\mathds{O}&\mathds{O}\\
\mathds{O} & \alpha_{II}\mathds{I} & \alpha_{I_DI}\mathds{I}\\
\mathds{O}&\alpha_{II_D}\mathds{I} &\alpha_{I_DI_D}\mathds{I}  \\
\end{bmatrix}
\end{equation*}
and matrix $B$ is:
\begin{equation*}
B= 
\begin{bmatrix}
\mathds{O}&\mathds{O}&\mathds{O}\\
-\tilde{\beta} \mathds{A} &\mathds{O}&\mathds{O}\\
\mathds{O}&\mathds{O}&\mathds{O}\\
\mathds{O}&\tilde{\beta} \mathds{A}&\mathds{O}\\
\mathds{O}&\mathds{O}&\mathds{O}\\
\mathds{O}&\mathds{O}&\mathds{O}\\
\mathds{O}&\mathds{O}&\mathds{O}\\
\mathds{O}&\mathds{O}&\mathds{O}\\
\mathds{O}&\mathds{O}&\mathds{O}
\end{bmatrix}
\end{equation*}
where $\mathds{A}$ is the adjacency matrix of the corresponding graph.

Let $L$ be the continuous function equal to $P^T K$, where $K$ is the $3N \times 1$ matrix:
\begin{equation*}
K= 
\begin{bmatrix}
\mathds{O}\\
\mathds{1}\\
\mathds{1}
\end{bmatrix}
\end{equation*}
Then
\begin{equation*}
L(P)= 
\begin{bmatrix}
P_S&P_I&P_{I_D}
\end{bmatrix}
\begin{bmatrix}
\mathds{O}\\
\mathds{1}\\
\mathds{1}
\end{bmatrix}=
\end{equation*}
\begin{align*}
\sum_{i = 1}^N(P_I + P_{I_D})_i
\end{align*}
$L$ is positive definite because it is equal to the sum of probabilities of all nodes in the graph be infected or infected dormant. The finite difference~(\ref{eq:diff}) in this case is equal to:
\begin{align*}
L(P_{t+1}) - L(P_t) = P^T_{k+1} K - P^T_tK =\\
[CP_t + \mathcal{P}^T_tBP_t]^T K - P_t K = \\
P^T[C^TK + B^T\mathcal{P}^TK -K]
\end{align*}
where:
\begin{equation*}
C^TK= 
\begin{bmatrix}
\mathds{I}&\mathds{O}&\mathds{O}\\
\mathds{O} & \alpha_{II}\mathds{I} & \alpha_{II_D}\mathds{I}\\
\mathds{O}&\alpha_{I_DI}\mathds{I} &\alpha_{I_DI_D}\mathds{I}  \\
\end{bmatrix}
\begin{bmatrix}
\mathds{O}\\
\mathds{1}\\
\mathds{1}
\end{bmatrix} = \\
\begin{bmatrix}
\mathds{O}\\
\alpha_{II}\mathds{I} + \alpha_{II_D}\mathds{I}\\
\alpha_{I_DI}\mathds{I} + \alpha_{I_DI_D}\mathds{I}  
\end{bmatrix} = 
\end{equation*}
\begin{equation*}
\begin{bmatrix}
\mathds{O}\\
\mathds{1}-\mu \mathds{1}\\
\mathds{1}
\end{bmatrix}
\end{equation*}
and
\begin{equation*}
B^T\mathcal{P}^TK= 
\begin{bmatrix}
\mathds{O}&\mathds{O}&\mathds{O}\\
-\tilde{\beta} \mathds{A} &\mathds{O}&\mathds{O}\\
\mathds{O}&\mathds{O}&\mathds{O}\\
\mathds{O}&\tilde{\beta} \mathds{A}&\mathds{O}\\
\mathds{O}&\mathds{O}&\mathds{O}\\
\mathds{O}&\mathds{O}&\mathds{O}\\
\mathds{O}&\mathds{O}&\mathds{O}\\
\mathds{O}&\mathds{O}&\mathds{O}\\
\mathds{O}&\mathds{O}&\mathds{O}
\end{bmatrix}^T
\begin{bmatrix}
\mathds{O}&P_S&\mathds{O}\\
\mathds{O}&P_I&\mathds{O}\\
\mathds{O}&P_{I_D}&\mathds{O}\\
\end{bmatrix}
\begin{bmatrix}
\mathds{O}\\
\mathds{1}\\
\mathds{1}
\end{bmatrix}=
\end{equation*}
\begin{equation*}
\begin{bmatrix}
\mathds{O}\\
\tilde{\beta}P_S\mathds{A}\\
\mathds{O}
\end{bmatrix}
\end{equation*}
thus, 
\begin{equation*}
C^TK + B^T\mathcal{P}^TK -K = 
\begin{bmatrix}
\mathds{O}\\
-\mu \mathds{1}+ \tilde{\beta}P_S\mathds{A}\\
\mathds{O}
\end{bmatrix}
\end{equation*}
which results in the condition:
\begin{equation*}
\begin{bmatrix}
P_S&P_I&P_{I_D}
\end{bmatrix}
\begin{bmatrix}
\mathds{O}\\
-\mu \mathds{1} + \tilde{\beta}P_S\mathds{A}\\
\mathds{O}
\end{bmatrix} \leq 0
\end{equation*}
or 
\begin{align}
P_I [\tilde{\beta}P_S\mathds{A} -\mu \mathds{1}] \leq 0 \label{eq:condition}
\end{align}
where $P_I$ is the $1 \times N$ vector of node probabilities to be infected, $P_S$ is the $1 \times N$ vector of node probabilities to be susceptible, and
$\mathds{A}$ is the adjacency matrix of the corresponding graph. Expression~(\ref{eq:condition})
means that the sum of probabilities of nodes to recover should be greater than the sum of probabilities of nodes to become infected at each time step for the equilibrium points of the system~(\ref{eq:sys_nlds}) to be Lyapunov stable.

%\textbf{Assumption 1 - upper bound on $P_S$}\\
As long as the maximum value of probabilities in the vector $P_S$ is 1, it is true that:
\begin{align}
\tilde{\beta}P_S\mathds{A} \leq \tilde{\beta}\mathds{1}\mathds{A} 
\end{align} 
So if we prove that:
\begin{align}
P_I [\tilde{\beta}\mathds{1}\mathds{A} -\mu \mathds{1}] \leq 0 
\end{align}
the condition~(\ref{eq:condition}) will be satisfied.

%\textbf{Assumption 2 - from adjacency matrix $A$ to degree values $D$}\\
This condition can also be formulated by incorporating the nodes' degrees as follows:
\begin{align}
P_I [\tilde{\beta}\mathds{D} -\mu \mathds{1}] \leq 0 \label{eq:cond_degrees}
\end{align}
where $\mathds{D}$ is the $1 \times N$ vector where each element $d_i$ is equal to the degree of the node $i$ in the graph. 

%\textbf{Assumption 3 - upper bound on $P_I$}\\
As long as the maximum value of probabilities in the vector $P_I$ is 1, it is true that:
\begin{align}
P_I\tilde{\beta}\mathds{D} \leq \mathds{1}\tilde{\beta}\mathds{D} 
\end{align} 
So if we prove that:
\begin{align}
\mathds{1}\tilde{\beta}\mathds{D}  -\mu \mathds{1}\leq 0 \label{eq:cond_simple}
\end{align}
the condition~(\ref{eq:condition}) will be satisfied.
Condition~\ref{eq:cond_simple} can be rewritten as follows:
\begin{align}
\frac{\tilde{\beta} \sum_i^Nd_i}{N\mu} \leq 1
\end{align}
or
\begin{align}
\frac{\tilde{\beta} d_{ave}}{\mu} \leq 1
\end{align}
It is known that the largest eigenvalue $\lambda_A$ has the following lower bound in the case of an arbitrary graph:
\begin{align}
d_{ave} \leq \lambda_A \label{eq:cond_eigs}
\end{align}
where $d_{ave}$ is the average degree of the graph. Therefore it is true that 
\begin{align}
\frac{\tilde{\beta} d_{ave}}{\mu} \leq \frac{\tilde{\beta} \lambda_A}{\mu}  
\end{align}
Hence if the following condition:
\begin{align}
\frac{\tilde{\beta}\lambda_A}{\mu} \leq 1
\end{align}
is satisfied, then the DFE equilibrium point will be Lyapunov stable on an arbitrary graph.
\end{proof}

\section{Model Fitting and Parameter Estimation}
\label{sec:fitting}

In this section, we present the methodology used to compare different epidemic models in reproducing real WannaCry attack traces. Our method leverages the Akaike Information Criterion (AIC)~\cite{akaike1974} to select the model that best fits the spreading caused by WannaCry malware. We also discuss how we estimate the posterior distribution of the SIIDR transition rates using an Approximate Bayesian Computation approach based on Sequential Monte Carlo (ABC-SMC)~\cite{filippi2013optimality,mckinley2018approximate,toni2009approximate}.

\subsection{Model Selection}\label{sec:fitting_aic}
We use the AIC as guiding criterion to compare SIIDR to other epidemiological models, namely SI, SIS, SIR. The AIC is calculated based on the number of free parameters $k$ and the maximum likelihood estimate of the model $L$ as follows:
\begin{align}
    \text{AIC} = 2 k - 2 \ln{L}\label{eq:aic_eq}
\end{align}
The first term introduces a penalty that increases with the number of parameters and thus discourages overfitting. The second term rewards the goodness of fit that is assessed by the likelihood function. For the likelihood function, we use the least squares estimation.
The best model is the one with the lowest AIC. In the case of the least squares estimation, the AIC can be expressed as:
\begin{align*}
    \text{AIC} &= 2 k + n \ln{\hat \sigma^2}
\end{align*}
where:
\begin{align}
    \hat \sigma^2 &= \frac{\sum_{t=1}^T \hat\epsilon_i^2}{T}
\end{align}
and $\hat \epsilon_i$ are the estimated residuals:
\begin{align*}
\hat \epsilon_t = I^{sim}_t - I_t^{real}
\end{align*}
\noindent with $I^{sim}_t$ being the cumulative number of infected nodes from model simulations, and $I_t^{real}$ the cumulative number of infected nodes from real-world observations, at time interval $t$.

We use stochastic simulations~\citep{higham2001} to obtain a numerical approximation of the propagation process described by the system of ODEs. Generally, statistical methods such as stochastic simulations are a good approximation for larger systems, while in the case of smaller systems stochastic fluctuations become more important. 
The transitions among compartments are implemented through chain binomial processes~\citep{Abbey1952AnEO}. At step $t$ the number of entities in compartment $X$ transiting to compartment $Y$ is sampled from a binomial distribution $Pr^{Bin}(X(t), p_{X \rightarrow Y}(t))$, where $p_{X \rightarrow Y}(t)$ is the transition probability. If multiple transitions can happen from $X$ (e.g., $X \rightarrow Y$, $X \rightarrow Z$), a multinomial distribution is used (e.g., $Pr^{Mult}(X(t), p_{X \rightarrow Y}(t), p_{X \rightarrow Z}(t))$). 

The model selection methodology is summarized in Algorithm~\ref{alg:model_selection}. We start by creating a uniform grid of possible parameter values (lines~\ref{algo:1}-\ref{algo:4}). For each model and each set of parameter values $p =(\beta,\mu,\gamma_1,\gamma_2)$ we perform several stochastic experiments simulating the model dynamics (the run\_stochastic\_avg procedure).
Each stochastic realization consists of a time series, where $S(t), I(t), I_D(t), R(t)$ represent the number of nodes in each state at time interval $t$ during the simulation. The cumulative infection $I_{sim}$  consists of the total number of nodes in states $I, I_D$, and $R$, and is also a time series across all time intervals $dt$. Next, we compute the AIC using equation~(\ref{eq:aic_eq}) by comparing the simulated to the actual dynamic. We select the minimum AIC score for each model; the best model is the one with the minimum AIC score overall. 
\begin{algorithm}
\caption{SPM model selection}\label{alg:model_selection}
\begin{algorithmic}[1]
\Procedure{model\_selection}{}
\State $\beta \gets $ 20 equidistant values in (0,1) \label{algo:1}
\State $\mu \gets $ 20 equidistant values in (0,1)
\State $\gamma_1 \gets $ 10 equidistant values in (0,1)
\State $\gamma_2 \gets $ 10 equidistant values in (0,1) 
\label{algo:4}
\For{each model $m \in \{\mbox{SI, SIS, SIR, SIIDR}\}$}
    \For{each set $p = (\beta,\mu,\gamma_1,\gamma_2)$}
        \State {$S_i, I_i, I_{Di}, R_i$ = run\_stochastic\_avg($p, h$)} \label{algo:stoc}
        \State {$I_{sim}=I_i + I_{Di} + R_i$} 
        \State {$\text{aic}$ = AIC($I_{sim}, I_{real}$)} 
    \EndFor 
    \State {$\text{aic}_{min}^m \gets \min_{i}\text{aic}$}
\EndFor
\State {$\text{aic}_{min} \gets \min_{m}\text{aic}$}
\State {$M \gets$ model that corresponds to $\text{aic}_{min}$}
\State \textbf{return} $M$
\EndProcedure
\end{algorithmic}
\end{algorithm}

\revision{
\subsection{SIIDR Parameters Associated with the Best AIC Score}
\label{sec:aic_params}
}
\revision{In Table~\ref{tab:wc_aic_params} we show the SIIDR parameters associated with the minimum AIC score for all WC variants.}
\begin{table*}[!t]
\centering
\caption{SIIDR parameters associated with the minimum AIC score.}
\begin{tabular}{|c||c|c|c|c|}
\hline
\textbf{WannaCry} & $\beta$ & $\mu$ & $\gamma_1$ & $\gamma_2$ \\ 
\hline
wc\_1\_500s&0.01&0.01&0.99&0.12\\
wc\_1\_1s&0.01&0.01&0.66&0.77\\
wc\_1\_5s&0.01&0.11&0.88&0.01\\
wc\_1\_10s&0.01&0.01&0.77&0.55\\
wc\_1\_20s&0.01&0.58&0.55&0.34\\
wc\_4\_500ms&0.01&0.01&0.23&0.34\\
wc\_4\_1s&0.11&0.01&0.89&0.01\\
wc\_4\_5s&0.01&0.01&0.77&0.99\\
wc\_4\_10s&0.01&0.01&0.66&0.45\\
wc\_4\_20s&0.01&0.06&0.66&0.55\\
wc\_8\_500ms&0.01&0.01&0.12&0.66\\
wc\_8\_1s&0.22&0.53&0.34&0.01\\
wc\_8\_5s&0.01&0.01&0.34&0.99\\
wc\_8\_10s&0.01&0.01&0.12&0.66\\
wc\_8\_20s&0.11&0.16&0.55&0.01\\
\hline
\end{tabular}
\label{tab:wc_aic_params}
\end{table*}

\subsection{Posterior Distribution of Transition Rates}
\label{sec:fitting_smc}
To find the best set of parameters for the SIIDR model we can approximate the posterior distribution of the parameters using Approximate Bayesian Computation (ABC) techniques~\citep{MINTER2019100368}. These techniques are based on the Bayes rule for determining the posterior distribution of parameters given the data:
\begin{align}
P(\theta|D) = \frac{P(D|\theta)P(\theta)}{P(D)}\propto p(D|\theta)P(\theta)\label{eq:poster}, 
\end{align}
where $P(\theta)$ is the prior distribution of parameters that represents our belief about them and $P(D|\theta)$ is the likelihood function, i.e., the probability density function of the data given the parameters. Marginal likelihood of the data $P(D)$ does not depend on $\theta$, and therefore the posterior distribution $P(\theta|D)$ is proportional to the numerator in~(\ref{eq:poster}). 

ABC methods are useful when the likelihood function is unknown or is not feasible to estimate analytically. The simplest version of ABC techniques is called rejection algorithm and is illustrated in Algorithm~\ref{alg:abc_rejection}. Despite it simplicity, the rejection algorithm is generally slow at converging. Indeed, each iteration is independent from the previous ones and the prior distribution from which parameters are sampled is never updated. Furthermore, it is often difficult to decide, a priori, a reasonable threshold value $\epsilon$ that guarantees both fast convergence and accurate results.

\begin{algorithm}
\caption{ABC-rejection algorithm}\label{alg:abc_rejection}
\begin{algorithmic}[1]
\State Sample $\theta^*$ from the prior distribution $P(\theta)$.
\State Simulate SPM model $D^*$ using $\theta^*$.
\State If $\sum_{t=1}^T(D_t - D^*_t)^2 \leq \epsilon$ accept $\theta^*$, reject otherwise.
\State Repeat until $N$ particles $\theta^* = \{\theta^*_j, j = 1, \dots, N\}$ are accepted.
\end{algorithmic}
\end{algorithm}

In alternative to the rejection algorithm, we use here a more advanced ABC technique that leverages Sequential Monte Carlo (ABC-SMC)~\citep{toni2009approximate,mckinley2018approximate}. The ABC-SMC approach iteratively constructs generations of prior distributions by decreasing the rejection threshold over time. At the first generation, a given number of parameter sets (i.e., particles) is accepted from the starting prior distribution, while each prior distribution used in following generations is obtained as a weighted sample from the previous generation $\theta^*$ perturbed through a kernel $K(\theta|\theta^*)$. Common choices for the kernel are the uniform and multivariate normal distributions. A kernel with a large variance will prevent the algorithm from being stuck in the local modes, but will result in a huge number of particles being rejected, which is inefficient. Therefore, we use the multivariate normal distribution, where the covariance matrix is calculated considering $M$ nearest neighbors (MNN) of the particles from the previous generation ~\citep{filippi2013optimality}. The ABC-SMC-MNN algorithm is illustrated in Algorithm~\ref{alg:parameter_estimation}.

\begin{algorithm}
\caption{SIIDR parameters estimation}\label{alg:parameter_estimation}
\begin{algorithmic}[1]
\Require $G$ - number of generations, $N$ - number of particles, $M$ - number of nearest neighbors, $\epsilon_1>\epsilon_2>\epsilon_3>\dots>\epsilon_G$ - sequence of decreasing tolerance values for each generation of the particles
\State Set $g = 0$
\State Set $j = 0$
\State If $g = 0$, sample particle $\theta^{**}$ from prior distribution $P(\theta)$. Otherwise, sample $\theta^*$ from the previous generation of particles $\{\theta_{g-1}\}$ with weights $\{w_{g-1}\}$ and perturb to obtain $\theta^{**}\sim K(\theta|\theta^*)$
\State Generate $n$ model simulations $D_l^{**}$ using $\theta^*$ and calculate $\hat{P}(D|D^{**}) = 1/n \sum_{l = 1}^n(d(D, D_l^{**})<\epsilon_g)$
\State If $\hat{P}(D|D^{**}) = 0$ return to step 4
\State Set $\theta_g^{j}$ and calculate weights for the particle: 
    \begin{center}
    $w_g^j =
    \begin{cases}
      \hat{P}(D|D^{**})P(\theta^{**}), g = 1\\
      \frac{\hat{P}(D|D^{**})P(\theta^{**})}{\sum_{l = 1}^Nw_{g-1}^lK(\theta_g^j|\theta_{g-1}^l)}, g > 1\\
    \end{cases} $      
    \end{center}
\State If $j < N$ increment $j$ and go to step 4
\State Normalize weights: $\sum_{j = 1}^Nw_g^j = 1$
\State If $g < G$ increment $g$ and go to step 3
\end{algorithmic}
\end{algorithm}

%%=============================================%%
%% For submissions to Nature Portfolio Journals %%
%% please use the heading ``Extended Data''.   %%
%%=============================================%%

%%=============================================================%%
%% Sample for another appendix section			       %%
%%=============================================================%%

%% \section{Example of another appendix section}\label{secA2}%
%% Appendices may be used for helpful, supporting or essential material that would otherwise 
%% clutter, break up or be distracting to the text. Appendices can consist of sections, figures, 
%% tables and equations etc.

\end{appendices}

%%===========================================================================================%%
%% If you are submitting to one of the Nature Portfolio journals, using the eJP submission   %%
%% system, please include the references within the manuscript file itself. You may do this  %%
%% by copying the reference list from your .bbl file, paste it into the main manuscript .tex %%
%% file, and delete the associated \verb+\bibliography+ commands.                            %%
%%===========================================================================================%%

\bibliography{sn-bibliography}% common bib file
%% if required, the content of .bbl file can be included here once bbl is generated
%%\input sn-article.bbl

\end{document}